\newtheorem{Proposition}{Proposition}
\newtheorem{Lemma}{Lemma}
\newtheorem{Corollary}{Corollary}
\newtheorem{Theorem}{Theorem}
\newtheorem{Definition}{Definition}
\begin{document}

\title{Truly work-like work extraction}
\author{Johan {\AA}berg}
\email{johan.aberg@physik.uni-freiburg.de}
\affiliation{Institute for Physics, University of Freiburg, Hermann-Herder-Strasse 3, D-79104 Freiburg, Germany}
\affiliation{Institute for Theoretical Physics, ETH Zurich, 8093 Zurich, Switzerland}

\begin{abstract}
The work content of non-equilibrium systems in relation to a heat bath is often analyzed in terms of expectation values of an underlying random work variable. However, we show that when optimizing the expectation value of the extracted work, the resulting extraction process is subject to intrinsic fluctuations, uniquely determined by the Hamiltonian and the initial distribution of the system. These fluctuations can be of the same order as the expected work content per se, in which case the extracted energy is unpredictable, thus intuitively more heat-like than work-like. This raises the question of the `truly' work-like energy that can extracted. Here we consider an alternative that corresponds to an essentially fluctuation-free extraction.  We show that this quantity can be expressed in terms of a non-equilibrium generalization of the free energy, or equivalently in terms of a one-shot relative entropy measure introduced in information theory.
\end{abstract}

\maketitle

The amount of useful energy that can be harvested from non-equilibrium systems not only characterizes practical energy extraction and storage, but is also a fundamental thermodynamic quantity. Intuitively, we wish to extract ordered and predictable energy, i.e., `work', as opposed to  disordered random energy in the form of `heat'. The catch is that, in statistical systems, the work cost or yield  of a given transformation is typically a random variable \cite{ReviewFluctThm}. This raises the question of a quantitative notion of work content that truly reflects the idea of work as  ordered  energy. Here we show that standard expressions  for the work content \cite{Procaccia,Lindblad1983,Takara,Esposito} can correspond to a very noisy and thus heat-like energy, but we also introduce an alternative that quantifies the amount of ordered energy that can be extracted.  The latter can be expressed in terms of a non-equilibrium generalization of the free energy, or equivalently in terms of a one-shot relative entropy introduced in information theory.  
The work extraction problem is linked to information theory via concepts like Szilard engines, Landauer's principle, and Maxwell's demon \cite{LeffRexI,LeffRexII}, with  recent contributions in connection to one-shot information theory \cite{Dahlsten, delRio, Faist}. 
A direct consequence of the present investigation is that the latter is brought into a more physical setting, allowing, e.g., systems with non-trivial Hamiltonians, proof of near-optimality, as well as a connection to fluctuation theorems \cite{ReviewFluctThm}.
Similar results as in this study have been obtained independently in \cite{Horodecki11}. See also recent results in \cite{Egloff} based on ideas in \cite{EgloffThesis}.

The amount of work that a system can perform while it equilibrates with respect to an environment of temperature $T$ is often \cite{Procaccia,Lindblad1983,Takara,Esposito} expressed as
\begin{equation}
\label{OptimalExpectedExtractionMain}
\mathcal{A}(q,h) = kT\ln(2)D\boldsymbol{(}q\Vert G(h)\boldsymbol{)}.
\end{equation}
Here $q$ is the state of the system, $G(h)$ its equilibrium state, $h$ the system Hamiltonian,  and $k$ Boltzmann's constant. For the simple model we employ here, $q$ is a probability distribution over a finite set of energy levels, and $D(q\Vert p) = \sum_{n}q_n\log_2 q_n -\sum_{n}q_n\log_2p_n$ is the relative Shannon entropy  (Kullback-Leibler divergence) \cite{CoverThomas}, and $\log_2$ denotes the base $2$ logarithm.

The quantity $\mathcal{A}(q,h)$, and the closely related cost of information erasure (Landauer's principle), is often understood as an expectation value of an underlying random work yield (see e.g.~\cite{Procaccia,Takara,Shizume,Piechocinska}). However, this tells us very little about the fluctuations, and thus the `quality' of the extracted energy. Here we show that  optimizing the expected gain leads to intrinsic fluctuations. These can be of the same order as the expected work content $\mathcal{A}(q,h)$ per se, in which case the work extraction does not act as a truly ordered energy source.
As an alternative, we introduce the \emph{$\epsilon$-deterministic work content}, which quantifies the maximal amount of energy that can be extracted if we demand to always get precisely this energy each single time we run the extraction process, apart from a small probability of failure $\epsilon$. This quantity formalizes the idea of an almost perfectly ordered energy source.

Our analysis is based on a very simple model of a system interacting with a heat bath of fixed temperature $T$. Akin to, e.g., \cite{Crooks98,Piechocinska,delRio}, we model the Hamiltonian of the system as finite set of energy levels $h = (h_{1},\ldots, h_{N})$, and the state $q$ as a probability distribution over these. We can raise or lower the energy levels at will, which we refer to as \emph{level transformations} (LT). (For a quantum system this would essentially correspond to adiabatic evolution with respect to some external control parameters.)  Via the LTs we define what `work' is in our model. If we perform an LT that changes $h$ to $h'$, and  if the system is in state $n$, then this results in a work gain $h_n-h'_n$ (or work cost $h'_n-h_n$). 
To model the thermalization, we put the system into the random state $\mathcal{N}$ described by the Gibbs distribution, $P(\mathcal{N}=n) = G_{n}(h)$, where $G_{n}(h) = e^{-\beta h_{n}}/Z(h)$, $\beta = 1 / (kT)$, and $Z(h) = \sum_{n}e^{-\beta h_{n}}$ is the partition function. It is furthermore assumed that the state (regarded as a random variable) after a thermalization is independent of the state before. We combine sequences of LTs and thermalizations to construct processes. An example is given in Fig.~\ref{fig1}, where we construct the analogue of isothermal reversible (ITR) processes, which serve as a building block in our analysis. As opposed to other processes we will consider, the ITRs have essentially fluctuation-free work costs.

 \begin{figure}[h]
 \includegraphics[width= 8.9cm]{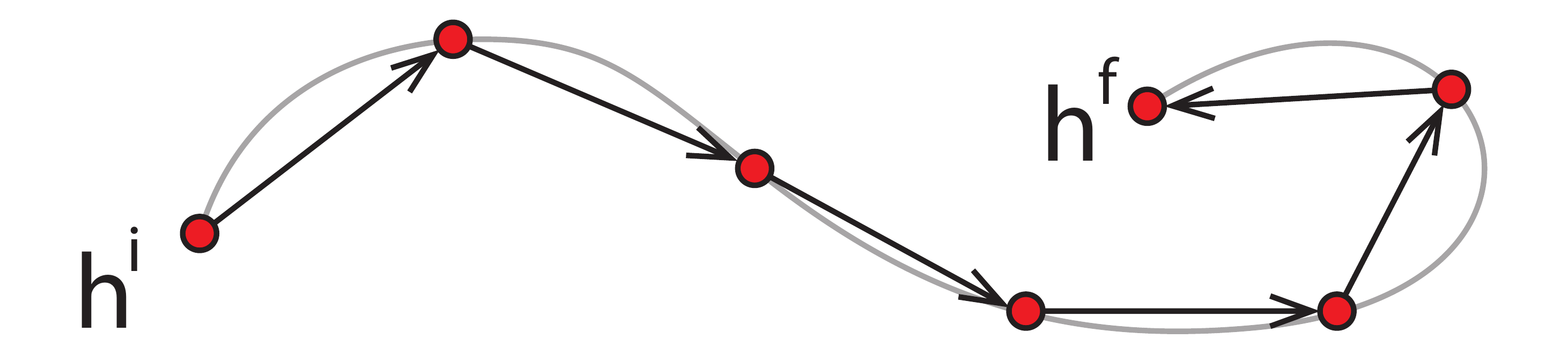} 
   \caption{\label{fig1} {\bf Isothermal reversible processes.} 
   In the space of energy level configurations we connect an initial configuration $h^{i}\in\mathbb{R}^{N}$ with the final $h^{f}\in\mathbb{R}^{N}$ by a smooth path (gray line).
    Given an $L$-step discretization of this path, we construct a sequence of LTs (arrows) sandwiched by thermalizations (circles).
   This process has the random work cost $W = \sum_{l=0}^{L-1}(h^{l+1}_{\mathcal{N}^{l}} - h^{l}_{\mathcal{N}^{l}})$, where $\mathcal{N}^{l}$ is the state at the $l$-th step, which is Gibbs distributed $G(h^{l})$. In the limit of an infinitely fine discretization, the expected work cost is $\lim_{L\rightarrow \infty}\langle W\rangle = F(h^{f})-F(h^{i})$. The independence of the work costs of the subsequent LTs, yields $\lim_{L\rightarrow \infty}(\langle W^2\rangle -\langle W\rangle^2) = 0$, i.e.,  the work cost is essentially deterministic. 
   }
\end{figure}

Given an initial state $\mathcal{N}$ with distribution $q$, we can reproduce Eq.~(\ref{OptimalExpectedExtractionMain}) within our model.
A cyclic three-step process, as described in Fig.~(\ref{fig2}), gives the random work yield
\begin{equation}
\label{OptYieldVariable}
W_{\textrm{yield}} = kT\ln q_{\mathcal{N}}-kT\ln G_{\mathcal{N}}(h).
\end{equation}  
By taking the expectation value we obtain Eq.~(\ref{OptimalExpectedExtractionMain}). 
The positivity of relative entropy, $D(q\Vert p)\geq 0$, can be used to show that no process can give a better expected work yield (Proposition \ref{OptimalExpectedWork} in Appendix \ref{Sec:OptimalExpected}).

 \begin{figure}[h]
 \includegraphics[width= 8.9cm]{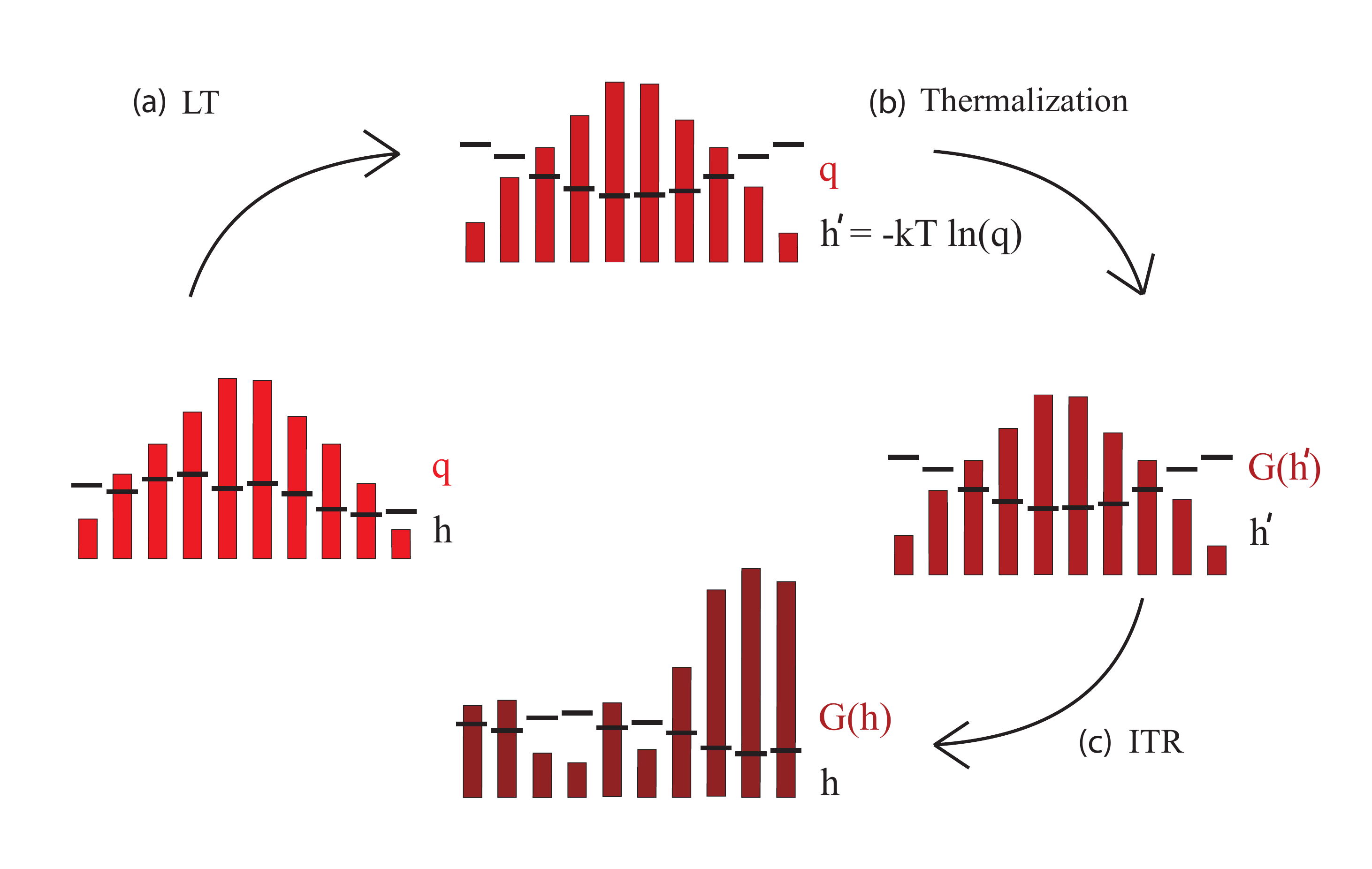} 
   \caption{\label{fig2} {\bf Expected work extraction.} 
    For an initial state $\mathcal{N}$ with distribution $q$ (bars) and energy levels $h$ (horizontal lines), the expected work content $\mathcal{A}(q,h)$ is obtained by a cyclic three-step process. The idea is to avoid unnecessary dissipation when the system is put in contact with the heat bath. To this end we make an LT to a set of energy levels $h'$ for which $G(h')=q$. The total process is:
       (a) LT that transforms $h_n$ into $h'_{n} = -kT\ln q_n$.  (b) Thermalization, resulting in the Gibbs distribution $G(h') = q$. (c) ITR process from $h'$ back to $h$. 
       The resulting random work yield is $W_{\textrm{yield}} = kT\ln q_{\mathcal{N}}-kT\ln G_{\mathcal{N}}(h)$, with expectation value $\langle W_{\textrm{yield}}\rangle = kT\ln(2)D\boldsymbol{(}q\Vert G(h)\boldsymbol{)}$. }
\end{figure}

How large are the fluctuations for a process that maximizes the expected work extraction, and thus achieves $\mathcal{A}(q,h)$? Equation (\ref{OptYieldVariable}) determines the noise of the specific process in Fig.~\ref{fig2}, but it turns out that it actually specifies the fluctuations for all processes that optimize the expected work extraction. We can phrase this result more precisely as follows.
For a process $\mathcal{P}$ that operates on an initial state $\mathcal{N}$ with distribution $q$, we let $W_{\textrm{yield}}(\mathcal{P},\mathcal{N})$ denote the corresponding random work yield. We here consider cyclic processes that starts and ends in the energy levels $h$. If  $(\mathcal{P}_{m})_{m=1}^{\infty}$ is a family of processes such that $\lim_{m\rightarrow\infty }\langle W_{\textrm{yield}}(\mathcal{P}_m,\mathcal{N})\rangle =  \mathcal{A}(q,h)$,  then $W_{\textrm{yield}}(\mathcal{P}_m,\mathcal{N}) \rightarrow  kT\ln q_{\mathcal{N}}- kT\ln G_{\mathcal{N}}(h)$ in probability.  (For a proof see Sec.~\ref{fluctuations} in the Appendix.)
We can conclude that to analyze the noise in the optimal expected work extraction, it is enough to consider Eq.~(\ref{OptYieldVariable}). As we will confirm later, these fluctuations can be of the same order as $\mathcal{A}(q,h)$ itself.


Since the optimal expected work extraction suffers from fluctuations, a natural question is how much (essentially) noise-free energy can be extracted. We say that a random variable $X$ has the $(\epsilon,\delta)$-deterministic value $x$, if the probability to find $X$ in the interval $[x-\delta,x+\delta]$ is larger than $1-\epsilon$. Hence, $\delta$ is the precision by which  the value $x$ is taken, and $\epsilon$ the  largest probability by which this fails. We define $\mathcal{A}^{\epsilon}_{\delta}(q,h)$ as the highest possible $(\epsilon,\delta)$-deterministic work  yield among all processes that operate on the initial distribution $q$ with initial and final energy levels $h$. Next, we define the $\epsilon$-deterministic work content as $\mathcal{A}^{\epsilon}(q,h) = \lim_{\delta\rightarrow 0}\mathcal{A}^{\epsilon}_{\delta}(q,h)$, i.e., we take the limit of infinite precision, thus formalizing the idea of an energy extraction that is essentially free from fluctuations.

$\mathcal{A}^{\epsilon}(q,h)$  can be expressed in terms of the \emph{$\epsilon$-free energy}, which is defined via restrictions to sufficiently likely subsets of energy levels.  Given a subset $\Lambda$, we define $Z_{\Lambda}(h) = \sum_{n\in\Lambda}e^{-\beta h_n}$. We minimize $Z_{\Lambda}(h)$ among all subsets $\Lambda$ such that $q(\Lambda) = \sum_{n\in\Lambda}q_n > 1-\epsilon$. If $\Lambda^{*}$ is such a minimizing set, then the $\epsilon$-free energy is defined as $F^{\epsilon}(q,h) = -kT\ln Z_{\Lambda^{*}}(h)$. The concept of one-shot free energy has been  introduced independently in \cite{Horodecki11}.

The distribution of fluctuations is clearly important for determining the value of $\mathcal{A}^{\epsilon}(q,h)$. It is thus maybe not surprising that a fluctuation theorem plays an important role to show the following bound on the $\epsilon$-deterministic work content
\begin{equation}
\label{bound}
0 \leq \mathcal{A}^{\epsilon}(q,h) -  F^{\epsilon}(q,h) + F(h) \leq   -kT\ln(1-\epsilon).
\end{equation}
In other words, for small $\epsilon$ we have 
\begin{equation*}
\mathcal{A}^{\epsilon}(q,h) \approx F^{\epsilon}(q,h) - F(h). 
\end{equation*}
In the case of completely degenerate energy levels $h = (r,\ldots,r)$, Eq.~(\ref{bound}) reduces to the result in \cite{Dahlsten}.
We obtain the lower bound in Eq.~(\ref{bound}) by the process  described in Fig.~\ref{fig3}.  
The upper bound is obtained by a combination of a version (Lemma \ref{Crooks}) on Crook's fluctuation theorem \cite{CrooksTheorem} and a work bound for  LTs (Sec.~\ref{ProofofMain} in the Appendix). See also Sec.~\ref{Sec:AlmDetEr} for the $\epsilon$-deterministic work cost of information erasure.

One can show (Sec.~\ref{thermfluct}) that $-kT\ln(1-\epsilon)$ is an upper bound to the $\epsilon$-deterministic work content of equilibrium systems.  Equation (\ref{bound}) thus determines the value of  $\mathcal{A}^{\epsilon}(q,h)$ up to an error with the size of a sufficiently probable equilibrium fluctuation.

The above result can be reformulated in terms of an $\epsilon$-smoothed relative R\'enyi $0$-entropy, defined as    
$D^{\epsilon}_{0}(q\Vert p) =  -\log_{2}\min_{q(\Lambda)> 1-\epsilon}\sum_{j\in \Lambda} p_{j}$. 
This relative entropy was (up to some technical differences) introduced in \cite{Wang1} in the context of one-shot information theory. (See \cite{Datta,Wang2} for quantum versions.) 
One can see that 
\begin{equation}
\label{relation}
F^{\epsilon}(q,h)-F(h) = kT\ln(2)D^{\epsilon}_{0}\boldsymbol{(}q\Vert G(h)\boldsymbol{)}.
\end{equation}

\begin{figure}[h]
\includegraphics[width= 8.9cm]{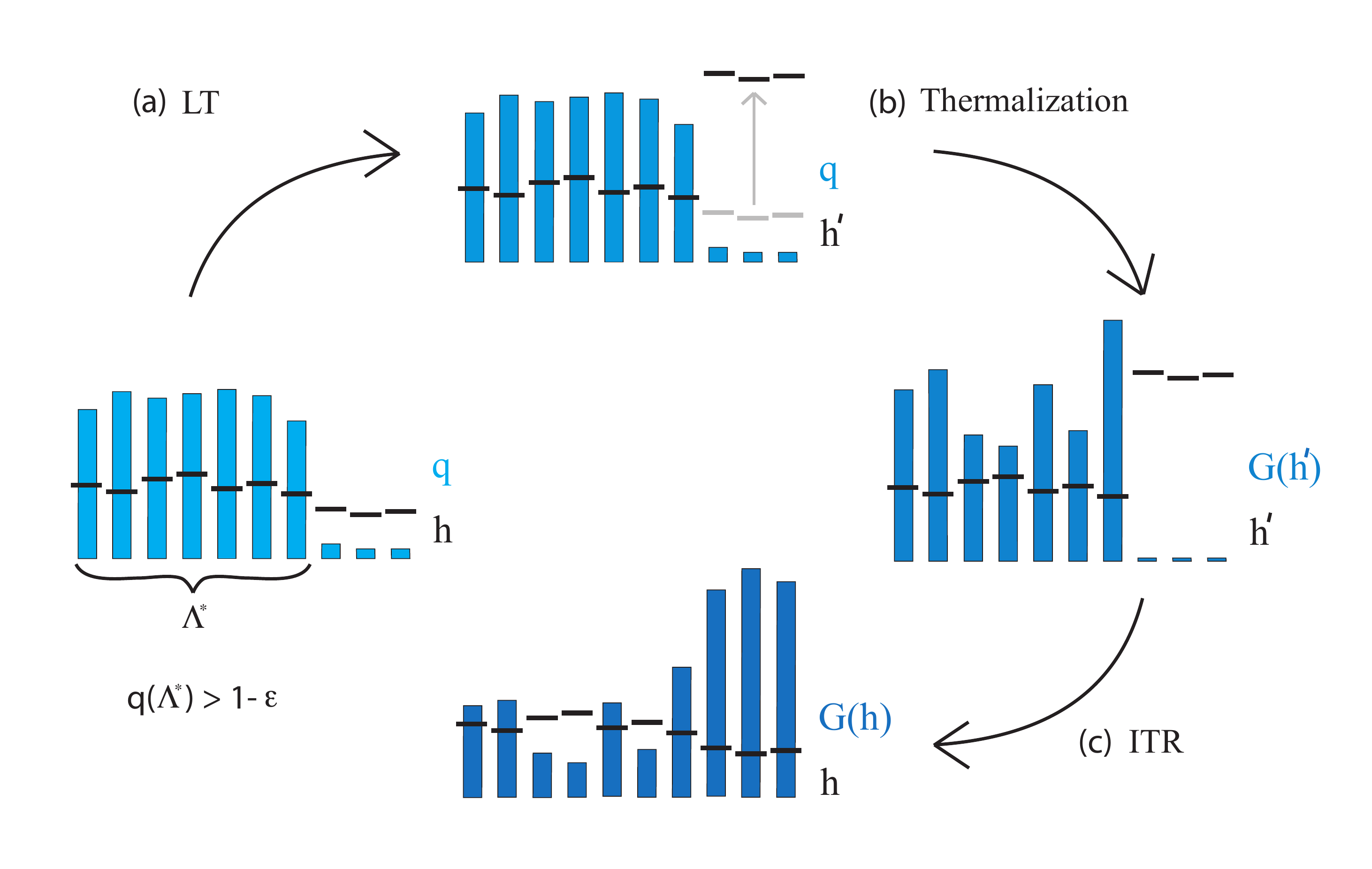} 
   \caption{\label{fig3} {\bf $\epsilon$-deterministic work extraction.} 
      For a state distribution $q$ and energy levels $h$, let $\Lambda^{*}$ be a subset of the energy levels such that $F^{\epsilon}(q,h) = -kT\ln Z_{\Lambda^*}(h)$.
    (a) LT that lifts all energy levels not in $\Lambda^{*}$ to a very high value, i.e.,  $h'_{n} = h_{n}$ if $n\in \Lambda^{*}$, while $h'_{n} = h_{n}+ E$ if $n\notin \Lambda^{*}$. (b) Thermalization, resulting in the Gibbs distribution $G(h')$. (c) ITR process from $h'$ back to $h$, which gives the essentially deterministic work yield $F(h')-F(h)$.
    In the limit $E\rightarrow +\infty$ this process gives the work yield $F^{\epsilon}(q,h)-F(h)$ 
    with a probability larger than $1-\epsilon$.  This is a lower bound to $\mathcal{A}^{\epsilon}(q,h)$, but is also close to it for small $\epsilon$.
   }
\end{figure}


An immediate question is how $\mathcal{A}(q,h)$ compares with $\mathcal{A}^{\epsilon}(q,h)$, and with the fluctuations in the optimal expected work extraction. 
The latter we measure by the standard deviation of  $W_{\textrm{yield}}$ in  Eq.~(\ref{OptYieldVariable}), $\sigma(W_{\textrm{yield}})= (\langle W_{\textrm{yield}}^2\rangle -  \langle W_{\textrm{yield}}\rangle^2)^{1/2}$.
We compare how these three quantities scale with increasing system size (e.g., in number of spins, or other units).

Our first example is a collection of $m$ systems whose state distributions are independent and identical, $q^{m}(n_1,\ldots, n_m) = q(n_1)\cdots q(n_m)$, and which have non-interacting identical Hamiltonians, corresponding to energy levels $h^m(n_1,\ldots,n_m) = h(n_1) +\cdots + h(n_m)$. In this case $\mathcal{A}(q^{m},h^{m})= m\mathcal{A}(q, h)$, and $\sigma(W_{\textrm{yield}}^{m}) = \sqrt{m}kT\ln(2)\sigma\boldsymbol{(}q\Vert G(h)\boldsymbol{)}$, where $\sigma(q\Vert r)^2 = \sum_{n}q_n[\log_2(q_n/r_n)]^2 -D(q\Vert r)^2$. 
One can show that 
 $\mathcal{A}^{\epsilon}(q^{m},h^{m})$ is equal to $m\mathcal{A}(q, h)$ to the leading order in $m$ (see \cite{Brandao11} for a similar result in a resource theory framework). 
 More precisely,  
\begin{equation}
\label{expansion}
\begin{split}
\mathcal{A}^{\epsilon}(q^{m},h^{m}) = &  m \mathcal{A}(q,h) + \sqrt{m}kT\ln(2)\Phi^{-1}(\epsilon)\sigma\boldsymbol{(}q\Vert G(h)\boldsymbol{)}\\
& + o(\sqrt{m}),
 \end{split}
\end{equation}
where $o(\sqrt{m})$ is a correction term that grows slower than $\sqrt{m}$,
  and $\Phi^{-1}$ is the inverse of the cumulative distribution function of the standard normal distribution. The smaller our error tolerance $\epsilon$, the more the correction term lowers the value of $\mathcal{A}^{\epsilon}(q^m,h^m)$ as compared to $\mathcal{A}(q^m,h^m)$.  This expansion is proved via Berry-Esseen's theorem \cite{Berry,Esseen}, which determines the convergence rate in the central limit theorem (see Sec.~\ref{IndepIdentical} in the Appendix).  
  
 As seen from Eq.~(\ref{expansion}) the difference between the expected and the $\epsilon$-deterministic work content only appears at the next to leading order. Hence, in these particular systems the fluctuations are comparably small, and the dominant contribution to $\mathcal{A}^{\epsilon}(q^{m},h^{m})$ is $\mathcal{A}(q^{m},h^{m})$. It appears reasonable to expect similar results for non-equilibrium systems with sufficiently fast spatial decay of both correlations and interactions, which may explain why issues concerning  $\mathcal{A}$ as a measure of work content appear to have gone largely unnoticed.

A simple modification of the state distribution in the previous example results in a system with large fluctuations. With probability $1-\epsilon$ (independent of $m$) the system is prepared in the joint ground state $0,\ldots, 0$,  and with probability $\epsilon$  in the Gibbs distribution. This results in  $q^{m}_{l_1,\ldots, l_m}  = (1-\epsilon)\delta_{l_1,0}\cdots\delta_{l_m,0}  +  \epsilon G_{l_1}(h)\cdots G_{l_m}(h)$, and yields $\mathcal{A}(q^{m}, h^{m}) \sim - mkT\ln(2)(1-\epsilon ) \log_{2}G_{0}(h)$, $\sigma(W^{m}_{\textrm{yield}}) \sim  -mkT\ln (2)\sqrt{\epsilon(1  - \epsilon)}\log_{2}G_{0}(h)$, and $\mathcal{A}^{\epsilon}(q^{m},h^{m})\sim  -mkT\ln(2)\log_{2}G_{0}(h)$. Hence, all three quantities grow proportionally to $m$.

For second case of large fluctuations we choose the distribution $q^{m}_{l_1,\ldots,l_m} = d^{-m}$, for a collection of $d$-level systems. For large $m$ we assume that the energy levels are dense enough that they can be  replaced by a spectral density. One example  is Wigner's semi-circle law, where $f^{(m)}(x) = 2\sqrt{R(m)^2-x^2}/[\pi R(m)^2]$ for $|x|\leq R(m)$. With $R(m) = \sqrt{2}d^{m/2}$ this is the asymptotic energy level distribution of large random matrices from the Gaussian unitary ensemble \cite{Mehta}. For the semi-circle distribution   
$\mathcal{A}(q^{m},h^{m})\sim R(m)$, $\sigma(W^{m}_{\textrm{yield}}) \sim R(m)/2$, and $\mathcal{A}^{\epsilon}(q^{m},h^{m})\sim  c(\epsilon)R(m)$, where $c(\epsilon)$ is independent of $m$.


We have here employed what one could refer to as a discrete classical model.
Relevant extensions include a classical phase-space picture, as well as a quantum setting that  allows superpositions between different energy eigenstates (e.g., in the spirit of \cite{Janzing,Horodecki11,Brandao11}) and where the work-extractor can possess quantum information about the system \cite{delRio}. An operational approach, based on what `work' is supposed to achieve, rather than ad hoc definitions, may yield deeper insights to the question of the truly work-like energy content.

It is certainly justified to ask for the relevance of the effects we have considered here.
The evident role of fluctuations suggests that the noise in the expected work extraction should become noticeable in the same nano-regimes as where fluctuation theorems are relevant. The considerable experimental progress on the latter (see e.g.~\cite{Liphardt02, Collin05, Toyabe10}) should reasonably be applicable also to the former. Also the theoretical aspects of the link to fluctuation theorems merits further investigations.

In principle, the fluctuations in the expected work extraction can be large also outside the microscopic regime, as this only requires a sufficiently `violent' relation between the non-equilibrium state and the Hamiltonian of the system. As opposed to the expected work content, $\mathcal{A}^{\epsilon}$ retains its  interpretation as the ordered energy. It is no coincidence that this is much analogous to how single-shot information theory generalizes `standard' information theory \cite{Holenstein11,Tomamichel09}. In this spirit, the present study, along with \cite{Dahlsten,delRio,Horodecki11,Egloff,Faist}, can be viewed as the first glimpse of a `single-shot statistical mechanics'.

The author thanks L\'idia del Rio, Renato Renner, and Paul Skrzypczyk for useful comments. 
This research was supported by the Swiss National Science Foundation
through the National Centre of Competence in Research `Quantum Science
and Technology', by the European Research Council, grant no.~258932,
 and the Excellence Initiative of the German Federal and State Governments (grant ZUK 43).


\begin{widetext}
\end{widetext}

\begin{appendix}


\section{\label{Remarks}Some remarks}
 
Here we give a bit more background to the main motivations and goals of this project.

 \emph{The standard approaches as `expectation value settings'.}--
One of the main goals of this investigation is to relate and compare the `standard' approaches to work extraction and information erasure with the single-shot setting, and to argue that the latter quantifies the useful energy content of systems in a way that is closer to our intuitive notion of work as ordered energy. 
As mentioned in the main text, the standard approaches can in some sense be regarded as `expectation value settings'. By this we mean that 
work extraction, information erasure, as well as other transformations, are often directly or indirectly analyzed in terms of expectation values, and limits on the costs of these operations are often expressed in terms of expectation 
values \cite{Procaccia,Shizume,Piechocinska,Kawai,Horowitz,Zolfaghari,Takara,Vaikuntanathan}. In the quantum case, there are several investigations where work in some sense is expressed in terms of combinations of expectation values with respect to quantum states of the system  (see, e.g., \cite{Allahverdyan,Maroney,Alicki,Sagawa}). In these quantum cases it is of course not clear to what extent one can (or whether one should) associate an underlying random work variable to the expectation value, unless explicit measurements are included in the model. We will not consider this issue here, but rather use an essentially discrete classical model, as detailed in Sec.~\ref{model}.  It is maybe worth to point out that the above comments are not meant to imply that work can, or generally is, treated as an observable \cite{Talkner}. Rather, it is often defined in terms of differences or changes of expectation values, e.g., accumulated over a path, akin to what we do in Sec.~\ref{ITR}.

\emph{Non-trivial Hamiltonians in the single-shot setting.}-- For a general treatment of equilibrium and non-equilibrium statistical mechanics it is certainly vital to allow systems to carry non-trivial Hamiltonians. One goal of this investigation is to incorporate this component into the single-shot analysis of work extraction and information erasure. While fairly straightforward in the standard (expectation value) setting, this is more challenging in a single-shot analysis. The main reason why the case of completely degenerate Hamiltonians is easier to handle is because then one can argue that arbitrary unitary operations (in the quantum case) or arbitrary permutations (in the classical case) come `for free' from a thermodynamic point of view, as they do not change the energy distribution of the system. This freedom opens up the entire toolbox of classical and quantum information theory, which was utilized in previous single-shot analyses \cite{Dahlsten, delRio}. In the non-degenerate case it becomes less clear how to use such techniques (concepts like energy and Hamiltonians are somewhat alien to the typical information theoretic setting). The main problem is that we must account for the changes in energy induced by state transformations, as to not manipulate the overall thermodynamic balance sheet in some dubious manner. In this investigation we handle this issue by making sure that all changes of the state distribution of the system always goes towards thermal equilibrium. (We are even so brutal as to put the system directly in the equilibrium distribution at each contact with the heat bath. See Sec.~\ref{model}.) For alternative approaches to handle non-trivial Hamiltonians in the singe-shot setting, see \cite{Horodecki11} and \cite{Egloff}.

\emph{Work extraction and information erasure as optimization problems.}--
A further goal of this investigation is to formulate work extraction (and information erasure) as an optimization problem over a well defined physical model. In contrast to many heuristic approaches, we here formulate a model that clearly specifies the rules of the game, and we optimize over \emph{all} processes that are allowed within this thermodynamic toy universe. Needless to say, we have to strike a balance with tractability, why we settle with a rather simple model. Section~\ref{model} introduces this model and defines the set of processes over which we optimize.  To each such process is, by construction, associated a probability distribution of possible work costs. To obtain a well defined optimization problem it is of course not enough to define the set over which to optimize; we must also specify a cost function. In our case these cost functions are functionals on the space of probability distributions. In other words, we assign a value to the probability distribution of work costs of the process, rather than to particular outcomes in each single run.   (Furthermore, we do not strictly speaking minimize this cost function in the sense of finding a minimizing element, but we rather determine the infimum over all allowed processes.) Each choice of cost function potentially corresponds to a different formalization of what `work content' is supposed to be. In this investigation we compare two cost functions: the expectation value and the $(\epsilon,\delta)$-deterministic value (to be defined in Sec.~\ref{Properties}). The former case leads to standard results on work extraction and information erasure, while the latter defines the $\epsilon$-deterministic work yield or cost of these tasks.

\emph{$\epsilon$-deterministic energy vs.~other energy concepts.}--
As stated earlier, one of the main goals of this investigation is to characterize the essentially noise-free energy content of non-equilibrium states. It is maybe worth repeating that this study is not necessarily restricted to a nano-scale regime, but rather strives to generally define in a quantitative manner what we mean by `useful energy', irrespective of scale (although the effects certainly would be relevant in a microscopic regime). 
Moreover, in analogy with other ideal thermodynamic concepts, e.g., the Carnot efficiency, we do not here concern with questions of practical achievability, but rather regard the $\epsilon$-deterministic work content as ideal quantity to which all realistic implementations can be compared. 
In this idealized setting it appears natural to capture the notion of ordered energy by demanding that the energy source always produces a predefined energy with almost perfect certainty, as we do with the $\epsilon$-deterministic values. Undoubtedly there are many alternative definitions of work content that potentially could capture other relevant aspects of energy extraction. For example,  in some applications it might be sufficient to know that the energy yield is beyond a given threshold energy (e.g. to drive a chemical reaction). Such a threshold quantity has been considered in \cite{Dahlsten,Egloff}. However, since it is not a priori clear that such threshold quantities do capture the idea of almost perfectly ordered energy we do not consider that approach here (see Sec.~\ref{OtherCostFcns} for a discussion on this, where we also discuss the possibility that the $\epsilon$-deterministic work content might be close to the threshold quantities). 

Since we wish to characterize ordered energy it furthermore appears natural to focus on the regime of small failure probabilities $\epsilon$. 
However, technically speaking our results are valid in the regime $0<\epsilon \leq 1-1/\sqrt{2}$ (see Corollaries \ref{AlmostDetermWorkCont} and \ref{AlmostDetermEreasureCost}) although the maybe more relevant aspect is that we determine the $\epsilon$-deterministic work content up to an error of the size $-kT\ln(1-\epsilon)$. To go beyond our present focus of noise-free energy, it would certainly be interesting to pinpoint the exact value of  the $\epsilon$-deterministic work content for all $0< \epsilon < 1$, i.e., we could consider different risk tolerance regimes, akin to \cite{Dahlsten}.
However, as discussed above, the assumption of non-trivial Hamiltonians appears to prohibit a direct application of the techniques of the latter approach. To approach this question in the present optimization setting with non-trivial Hamiltonians goes beyond the tools developed in this investigation and will not be considered here. However, see recent results in \cite{Egloff} that uses other techniques to combine non-trivial Hamiltonians with arbitrary success probabilities.

As a technical side remark one may note that the error bound  $-kT\ln(1-\epsilon)$ goes to zero as the failure probability $\epsilon$ goes to zero. This could be compared to the (at the time of writing) typical single-shot information theoretic error terms, as in \cite{Dahlsten, Faist},  which are of the form $\ln(1/\epsilon)$ and thus diverge with a decreasing $\epsilon$. Furthermore, $-kT\ln(1-\epsilon)$ has an interpretation as the size of a thermal equilibrium fluctuation (see Sec.~\ref{thermfluct}) and is thus in a  thermodynamic sense `small'. However, beyond the purely aesthetic aspects, it is far from clear what significance such differences concerning error terms have in the thermodynamic setting. (In information theoretic applications, the divergent error terms are usually unproblematic.)

\emph{Single-shot vs.~the `multi-copy' iid setting.}--
As mentioned in the main text, the single-shot scenario can (like for the information-theoretic counterparts \cite{Holenstein11,Tomamichel09}) be regarded as more `fundamental' than the `multi-copy' scenario of iid states and identical non-interacting Hamiltonians, in the sense that the latter can be derived form the former as a special case. The crucial question is maybe rather to what extent the more general single-shot scenario is relevant. In the multi-copy case we have seen that the $\epsilon$-deterministic work content is to the leading order equal to the expected work content. (Note, however, the difference between the single-shot iid case and the expectation value setting. See Sec.~\ref{expecvssingleshotiid}.) In view of standard equilibrium statistical mechanics one could thus suspect that many realistic systems would have states and Hamiltonians close to this regime. However, one should keep in mind that we here consider \emph{non-equilibrium} statistical mechanics, where we allow states arbitrarily far from equilibrium. There is thus no particular reason why we should assume the states to be near iid, irrespective of the structure of the underlying Hamiltonian. 

On a more broad level one can view this investigation as a step towards a better understanding of the foundations of statistical mechanics. For such a purpose it appears more satisfying with a formalism that has the capacity to handle arbitrary states and Hamiltonians; not being restricted to special cases like iid assumptions or non-interacting Hamiltonians. Furthermore, since realistic physical systems in general neither are perfectly iid nor perfectly non-interacting, this immediately spurs the question of the quality of the approximation we implicitly invoke by assuming an analysis based on a multi-copy setting. The single-shot setting can be used to answer such questions.

\emph{Not a study on the emergence of thermodynamics.}--
As mentioned, this investigation, together with \cite{Dahlsten,delRio,Horodecki11,Egloff,Faist} can be regarded as the first steps toward a single-shot statistical mechanics. It is certainly a relevant question how the issues considered here relate to the countless of studies on the foundations of thermodynamics and statistical mechanics, the emergence of irreversibility, the second law, and other aspects of  standard thermodynamics and statistical mechanics. As an illustrative (but very small) selection one can mention studies of emergence of thermodynamics in closed quantum systems \cite{GemmerMahler}, the efforts to understand the existence of canonical equilibrium states via entanglement \cite{Goldstein, Popescu06}, or the relation between  entanglement and the thermodynamic arrow of time \cite{Partovi,Jennings}.
In contrast to these studies we do not here strive to analyze the very emergence of thermodynamics per se. We essentially put in irreversibility, the second law, and canonical thermal states by hand when we model the contact with the heat bath as replacement maps that put the system in the Gibbs distribution (see Sec.~\ref{model}). This investigation should rather be understood as an approach towards a more fine-tuned quantitative characterization of the consequences of the second law, where we, e.g., ask how ordered energy should be characterized in a consistent manner for arbitrary distributions and Hamiltonians.

\emph{To avoid confusion: Auxiliary comments on the technical scope and terminology.}-- 
For the sake of simplicity and conceptual clarity we have in the main text focused on the work extraction problem. However, due to the close relation between these two tasks, we here also treat the work cost of information erasure.  Moreover, rather than assuming that these processes begin and end in the very same collection of energy levels $h$ (as we did in the main text) we allow an initial set of energy levels $h^{i}$ and final set of energy levels $h^f$. 
 This makes it easier to use these processes as building blocks in a composition of processes. Furthermore, for convenience, and to underline the similarities between work extraction and information erasure, we will often express the former in terms of its \emph{work cost} rather than its work yield as we did in the main text. For example, in Sec.~\ref{Sec:OptimalExpected} we introduce $\mathcal{C}^{\textrm{extr}}(q,h^{i},h^{f})$ as the minimal expected work cost of transforming the system from the set of energy levels $h^{i}$ to the new levels $h^{f}$, assuming the initial state is distributed $q$. Hence, the expected work content, as introduced in the main text, is $\mathcal{A}(q,h) = -\mathcal{C}^{\textrm{extr}}(q,h,h)$. 
 
\emph{Structure of this Appendix.}--
The structure of this Appendix is as follows. In Sec.~\ref{model} we introduce the model we employ throughout this investigation.
In Sec.~\ref{ITR} we consider a class of processes in our model that correspond to isothermal reversible processes.
We consider the expectation value as cost function for work extraction in  Sec.~\ref{Sec:OptimalExpected}, and for the information erasure in Sec.~\ref{Sec:OptimalWorkCostErasure}.
In  Sec.~\ref{fluctuations} we prove that the optimal expected work extraction has an intrinsic randomness associated to it.
Section~\ref{Properties} introduces the alternative cost function, the $(\epsilon,\delta)$-deterministic value. Section~\ref{Sec:FandD0} defines the $\epsilon$-free energy and a smooth relative R\'enyi $0$-entropy. 
These concepts are applied to $\epsilon$-deterministic work extraction in Sec.~\ref{Sec:AlmDetWorkExtr}, with proofs in Sec.~\ref{ProofofMain}.
Section \ref{thermfluct}  concerns a brief clarification on the $\epsilon$-deterministic work  extraction from thermal equilibrium systems.
We turn to the question of the $\epsilon$-deterministic  work cost of information erasure in Sec.~\ref{Sec:AlmDetEr}, with proofs in Sec.~\ref{ProofAlmDetErasure}.
In Sec.~\ref{comparisons}  we compare the expected work content with the $\epsilon$-deterministic work content for some simple examples. We also compare the expected work content with the fluctuations in processes that achieve the optimal expected work content. 
We end with Sec.~\ref{OtherCostFcns}, where we make a brief comment on an alternative type of cost function, and its relation to the $\epsilon$-deterministic setting.


\section{\label{model}  The model}

The choice of model in this investigation is a compromise between simplicity for tractable optimization problems, and the need to capture some essential aspects of the effects of a heat bath. 
Similar types of models have bee considered in \cite{Crooks98,Piechocinska,Alicki,delRio}.

\subsection{Model assumptions}

\emph{The setting: Probability distributions over energy levels.}--
We assume that the system can be in  a finite set of states $\{1,\ldots,N\}$, where $N$ is a fixed number. To each such state $n$ we associate an energy level $h_n$. In other words, the system can be found in any of the energy levels $h = (h_1,\ldots, h_N)\in \mathbb{R}^{N}$. In general, we will view  the state of the system as a random variable $\mathcal{N}$, with some probability distribution $q = (q_1,\ldots, q_N)\in \mathbb{P}(N)$.  Here, $\mathbb{P}(N)$ denotes the probability simplex over $N$ objects
\begin{equation*}
\mathbb{P}(N): = \{ (q_{1},\ldots, q_{N}): q_{1},\ldots, q_{N}\geq 0, \sum_{n=1}^{N}q_{n} =1\}.
\end{equation*}
By $\mathbb{P}^{+}(N)$ we denote the subset of all distributions with full support, i.e., $q_{n}>0$ for $n= 1,\ldots,N$.

\emph{Elementary operations: Energy level transformations and thermalizations.}-- 
Our model includes two elementary operations that allow us to change  the energy levels and the state of the system, respectively. 

The first type of elementary operation allows us to change the collection of energy levels $h= (h_1,\ldots, h_N)\in\mathbb{R}^{N}$ into a new  configuration $h' = (h'_1,\ldots,h'_N)\in\mathbb{R}^{N}$ of our choice. We refer to this as a \emph{level transformation} (LT). Note that we assume that an LT always transforms  an element $h\in\mathbb{R}^{N}$ to an element $h'\in \mathbb{R}^{N}$. In particular, the underlying number of states does not change, and we do not allow `infinite energies', like $h_n = +\infty$ or $h_n= -\infty$. (The latter is not a particularly severe restriction as we can use limits to essentially the same effect.) The LTs do not affect the state of the system, and thus the random variable $\mathcal{N}'$ describing the state after the LT is identical to the state $\mathcal{N}$ before the transformation, i.e., $\mathcal{N}' = \mathcal{N}$. 

Via the LTs we furthermore define what `work' is in this model.
Given an LT that takes $h$ to $h'$, operating on the initial state $\mathcal{N}$, we define the \emph{work cost} as
\begin{equation}
\label{ElementaryWork}
W := h'_{\mathcal{N}}-h_{\mathcal{N}}.
\end{equation}
We refer to $-W$ as the \emph{work yield} or \emph{work gain}.

The second elementary operation changes the state of the system, and models the thermalization by a heat bath of temperature $T$. We will throughout this investigation assume that this temperature is fixed and given. The thermalization does not change the energy levels $h$, but replaces the state $\mathcal{N}$ with a new \emph{independent} random variable $\mathcal{N'}$ that is Gibbs distributed with respect to $h$, i.e., 
\begin{equation}
P(\mathcal{N}' = n) = G_{n}(h),
\end{equation}
where 
\begin{equation*}
G_{n}(h) := \frac{e^{-\beta h_{n}}}{Z(h)},\quad Z(h) := \sum_{n=1}^{N}e^{-\beta h_{n}},\quad \beta := \frac{1}{kT},
\end{equation*}
and where $k$ is Boltzmann's constant. 
We denote $G(h) := \boldsymbol{(}G_{1}(h),\ldots, G_N(h)\boldsymbol{)}$.
That $\mathcal{N}'$ is `independent' is to be understood such that if we make a sequence of thermalizations, the resulting family of random variables are all independent of each other and of the initial state. The thermalization has no work cost. 

\emph{Processes as arbitrary combinations of elementary operations.}--
When we speak of a \emph{process} $\mathcal{P}$ we mean a finite sequence of LTs and thermalizations (at one fixed temperature $T$). The work cost of a process $\mathcal{P}$ is defined as the sum of the work costs of all the LTs in the process.  We denote the work cost of a  process $\mathcal{P}$ as $W(\mathcal{P},\mathcal{N})$, where $\mathcal{N}$ is the initial state. We let $\mathscr{P}(h^i,h^f)$ denote the collection of all processes that starts in the energy levels $h^i$ and ends with the energy levels $h^{f}$.

Note that the work cost of two LTs are independent only if they are separated by a thermalization. 
 Since two consecutive LT processes can be combined into one single (adding their work costs), and since two consecutive thermalizations have the same effect as one single, we may without loss of generality regard every process in $\mathscr{P}(h^i,h^f)$ as an alternating sequence of thermalizations and LTs. If the system initially is in state $\mathcal{N}$, with distribution $q$, and if  $h^0,\ldots, h^{L}$ is the sequence of energy level configurations, we may thus in general write the work cost of the process as 
\begin{equation}
W(\mathcal{P},\mathcal{N}) = \sum_{l=0}^{L-1}(h^{l+1}_{\mathcal{N}_l} -h^{l}_{\mathcal{N}_l}),
\end{equation}
 where each $\mathcal{N}_l$ is an independent random variable. Here $\mathcal{N}_{0} := \mathcal{N}$, and  $\mathcal{N}_l$ is Gibbs distributed $G(h^{l})$ for each $l\neq 0$.

We let $\mathcal{F}(\mathcal{P},\mathcal{N})$ denote the final state of the process $\mathcal{P}\in\mathscr{P}(h^i,h^f)$ that operates on the initial state $\mathcal{N}$. 
Note that $\mathcal{F}(\mathcal{P},\mathcal{N})$ depends on $\mathcal{N}$ only in the case that $\mathcal{P}$ does not contain any thermalization, due to the assumed independence of subsequent states separated by thermalizations. In the case that $\mathcal{P}$ does contain a thermalization, then $\mathcal{F}(\mathcal{P},\mathcal{N})$ is distributed according to Gibbs distribution of the last thermalization in the process.

\subsection{Brief discussions of the model}

Here we briefly consider possible physical interpretations of the elementary operations in the model, and also discuss some of the inherent limitations.

\emph{LT's as adiabatic transformations.}--
As suggested by our use of phrases such as `energy levels' the most immediate interpretation of this model would be in terms of a quantum system. The LTs would then correspond to adiabatic passage. By this we intend Hamiltonian evolution as a closed quantum system, where the Hamiltonian depends on external parameters (e.g, classical fields) that we change by a much slower rate than the characteristic time-scales of the Hamiltonian. (We may need to take some extra care at possible level crossings.)

\emph{Ideal complete thermalizations.}-- 
The application of the thermalization operation corresponds to turning on the interactions to a heat bath, let the system thermalize, and finally de-connect the system. Especially for small systems, it is certainly a relevant question to what extent it in practice is possible  to implement such procedures in a controlled fashion. However, this investigation aims at understanding the theoretical limitations of ideal thermodynamics, and we do not consider practical issues. Moreover, like in all ideal thermodynamic considerations we do not concern ourselves with questions about kinetics. In other words, we impose no constraints on how much time can be spent on implementing the thermalization procedure (or the adiabatic transformations implementing the LT's). 
This is much in spirit with standard thermodynamic considerations where optimal efficiency (e.g. in a Carnot cycle) typically is reached only in the limit of infinitely slow operations. Under these idealized assumptions, the thermalization model we employ appears a reasonable choice.

It is maybe worth noticing that the way we model the effect of a heat bath is a special case of the detailed balance condition as employed, e.g., in \cite{Crooks98} for a derivation of the Jarzynski equality, or in \cite{Piechocinska} for Landauer's principle. While the detailed balance condition allows a partial or gradual thermalization of the system, our model is somewhat  more brutal in that it directly puts the entire system in the Gibbs distribution.

\emph{Hidden costs of time-dependent operations?}
Our model includes time dependent transformations: the LTs as well as the connection and disconnection to the heat bath. The question is if these time-dependent operations implicitly require hidden resources for their implementations. This time-dependence  could be put in contrast to the time-independent constructions of, e.g., minimal refrigerators and heat pumps \cite{Palao01,Youssef09,Linden10,Skrzypczyk11}. However, those schemes operate in a steady state regime, quite the opposite to the single-shot setting we consider here. 
One could imagine to  `embed'  our time-dependent single-shot scheme into a time-independent construction. This would require us to explicitly model the control systems that implements the time-dependence, and one potentially relevant resource would be the quality and stability of the clock that times the evolution \cite{Peres80,Buzek}. It does not seem unreasonable that considerations of this type could add new layers to the work extraction and information erasure problem. However, it also appears reasonable that the essence of the findings of the present investigation would remain, e.g., in some ideal limit.

\emph{In essence a discrete classical model.}--
If we regard our system as a quantum system, our treatment corresponds to the case that the initial density operator is diagonal in an energy eigenbasis. (Strictly speaking, we should also include an energy measurement to obtain a random variable from the underlying quantum state at each LT. However, as the state already is assumed to be diagonal in an energy eigenbasis, this is more of a technicality.) Hence, even though our model indeed can be phrased in terms of a quantum system, it in essence is a classical discrete model.  As such we can neither analyze the effects of superpositions of energy levels, nor quantum correlations to an observer with a quantum memory as in \cite{delRio}. Such a quantum generalization is likely to require an explicit treatment of the degrees of freedom that carries the extracted energy. As a remark we note that the discrete classical setting of course allows us to investigate effects of purely classical correlations with an observer. However, to avoid further technical complications we do not consider this question here.


\section{\label{ITR}  Isothermal reversible processes}
Isothermal reversible process takes an equilibrium configuration to another, where the work cost is given by the free energy difference of the final and initial system. These processes are generally quasi-static (i.e., at each point along the process, the system is essentially in equilibrium with the heat bath). Here we consider the counterpart of this in our model, which will serve as an important building block in the rest of this investigation.

Given an initial configuration of energy levels $h^i$ and a final set of energy levels $h^f$ we consider a path in the space of energy level configurations along which we move by incremental steps of LT processes sandwiched between thermalizations. In the limit of infinitely small steps we find that the expected work cost converges toward the free energy difference of the final and initial energy configuration.  However, by an argument very similar to the weak law of large numbers, we make the observation that the work cost essentially becomes deterministic. 

Consider a bounded and sufficiently smooth path $h: [0,1]\rightarrow\mathbb{R}^{N}$, such that $h(0) := h^i$ and $h(1) := h^f$.
We make an $L$-step discretization of this path, as $h(l\Delta x)$, for $l= 0,\ldots, L-1$, where $\Delta x = 1/L$.
Given this discretization, we construct a process $\mathcal{P}^{(L)}$ that consists of the sequence of LTs that takes $h(l\Delta x)$ to $h\boldsymbol{(}(l+1)\Delta x\boldsymbol{)}$, sandwiched with thermalizations. The work cost of the process is
\begin{equation}
W(\mathcal{P}^{(L)},\mathcal{N}_0) = \sum_{l=0}^{L-1}\left[h_{\mathcal{N}_{l}}\boldsymbol{(}(l+1)\Delta x\boldsymbol{)} -h_{\mathcal{N}_{l}}(l\Delta x) \right],
\end{equation}
where $\mathcal{N}_{l}$ is the state of the system at the $l$-th step, which has distribution $G\boldsymbol{(}h(l\Delta x)\boldsymbol{)}$. 

If we define
\begin{equation}
\Delta h_{n}(l) := \frac{h_{n}\boldsymbol{(}(l+1)\Delta x\boldsymbol{)} -h_{n}(l\Delta x)}{\Delta x},
\end{equation}
one can see that the expectation value of the work cost is 
\begin{equation}
\label{expectation}
\begin{split}
\langle W(\mathcal{P}^{(L)},\mathcal{N}_0)\rangle = & \sum_{l=0}^{L-1}\sum_{n=1}^{N} \Delta h_{n}(l) G_n\boldsymbol{(}h(l\Delta x)\boldsymbol{)}\Delta x\\
\overset{L\rightarrow\infty}{\rightarrow} & \int_{0}^{1}\sum_{n=1}^{N}  \frac{dh_{n}}{dx} \frac{e^{-\beta h_{n}(x)}}{Z\boldsymbol{(}h(x)\boldsymbol{)}}dx\\
= & F(h^f)-F(h^i),
\end{split}
\end{equation}
where 
\begin{equation}
F(h) = -kT\ln Z(h)
\end{equation}
denotes the free energy of $h$.

By using the assumed independence of the state variables $\mathcal{N}_l$, we can calculate the variance $\sigma\boldsymbol{(}W(\mathcal{P}^{(L)},\mathcal{N}_0)\boldsymbol{)}^2 := \langle W(\mathcal{P}^{(L)},\mathcal{N}_0)^2\rangle -\langle W(\mathcal{P}^{(L)},\mathcal{N}_0)\rangle^2$, which yields
\begin{equation}
\label{variance}
\begin{split}
& \sigma\boldsymbol{(}W(\mathcal{P}^{(L)},\mathcal{N}_0)\boldsymbol{)}^2 \\
 & = \sum_{l=0}^{L-1}\sum_{n=1}^{N} [\Delta h_{n}(l)]^{2} 
 G_n\boldsymbol{(}h(l\Delta x)\boldsymbol{)} \Delta x^2 \\
& \quad-\sum_{l=0}^{L-1}\Big[\sum_{n=1}^{N} \Delta h_{n}(l)  G_{n}\boldsymbol{(}h(l\Delta x)\boldsymbol{)} \Big]^{2}\Delta x^2.
\end{split}
\end{equation}
Assuming the function $h$ to be bounded and sufficiently smooth, one can see that $\sigma\boldsymbol{(}W(\mathcal{P}^{(L)},\mathcal{N}_0)\boldsymbol{)}$ tends to zero as $L\rightarrow\infty$. By combining this with Eq.~(\ref{expectation}), and Chebyshev's inequality (see e.g.~\cite{Gut}), it follows that 
\begin{equation*}
\lim_{L\rightarrow\infty}P\boldsymbol{(}|W(\mathcal{P}^{(L)},\mathcal{N}_0)-F(h^f)+F(h^i)|> \delta\boldsymbol{)} = 0,
\end{equation*}
for all $\delta>0$.
In other words, $W(\mathcal{P}^{(L)},\mathcal{N}_0)$ converges in probability \cite{Gut} to the free energy difference.
We can conclude that ITR processes yield an essentially deterministic work cost.

By the above discussion we can conclude the following lemma:
\begin{Lemma}
\label{LemmaITR}
Let $h^{i},h^{f}\in\mathbb{R}$, and let $\mathcal{N}$ be a random variable which is distributed $G(h^{i})$. Let $\delta>0$ and $0<\epsilon \leq 1$.  Then there exists a process $\mathcal{P}\in\mathscr{P}(h^{i},h^{f})$ such that
\begin{equation}
P\boldsymbol{(}|W(\mathcal{P},\mathcal{N})- F(h^{f})+F(h^{i})|\leq \delta\boldsymbol{)} > 1-\epsilon.
\end{equation}
\end{Lemma}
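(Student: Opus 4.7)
The plan is to implement precisely the construction alluded to in the preceding discussion: choose a smooth bounded path $h:[0,1]\to\mathbb{R}^N$ with $h(0)=h^i$, $h(1)=h^f$ (the affine path $h(x)=(1-x)h^i+xh^f$ works), discretize it at $L$ uniformly spaced points, and take $\mathcal{P}^{(L)}\in\mathscr{P}(h^i,h^f)$ to be the alternating sequence of LTs between consecutive configurations $h(l\Delta x),h((l{+}1)\Delta x)$ and thermalizations. Then $W(\mathcal{P}^{(L)},\mathcal{N})=\sum_{l=0}^{L-1}[h_{\mathcal{N}_l}((l{+}1)\Delta x)-h_{\mathcal{N}_l}(l\Delta x)]$ is a sum of independent summands, and the goal is to show its mean converges to $F(h^f)-F(h^i)$ and its variance vanishes as $L\to\infty$. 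Chebyshev's inequality then gives convergence in probability, from which the claim follows by choosing $L$ large enough.

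First I would verify the expectation computation in Eq.~(\ref{expectation}): the mean work cost is a Riemann sum for $\int_0^1\sum_n\frac{dh_n}{dx}G_n(h(x))\,dx$. A direct calculation gives $\sum_n \frac{dh_n}{dx}G_n(h(x))=-kT\,\frac{d}{dx}\ln Z(h(x))$, so the integral evaluates to $-kT\ln Z(h^f)+kT\ln Z(h^i)=F(h^f)-F(h^i)$. Smoothness and boundedness of $h$ on $[0,1]$ ensure that the Riemann sum indeed converges to the integral.

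Next I would bound the variance in Eq.~(\ref{variance}). Since $h$ is continuously differentiable on a compact interval, each difference quotient $\Delta h_n(l)$ is uniformly bounded in $l$ and $L$ by some constant $C$. Each of the $L$ terms in the variance contributes at most $O(\Delta x^2)=O(1/L^2)$, so $\sigma(W(\mathcal{P}^{(L)},\mathcal{N}))^2=O(1/L)\to 0$. Crucially, the assumed independence of successive thermalized states $\mathcal{N}_l$ is what turns the variance of a sum of $L$ terms of size $O(1/L)$ into a sum of variances of size $O(1/L^2)$; without independence the naive bound would only give $O(1)$.

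Finally, given $\delta>0$ and $\epsilon\in(0,1]$, choose $L$ large enough that both $|\langle W(\mathcal{P}^{(L)},\mathcal{N})\rangle-F(h^f)+F(h^i)|<\delta/2$ and $\sigma(W(\mathcal{P}^{(L)},\mathcal{N}))^2<\epsilon\delta^2/4$. Chebyshev's inequality gives
\begin{equation*}
P\bigl(|W(\mathcal{P}^{(L)},\mathcal{N})-\langle W(\mathcal{P}^{(L)},\mathcal{N})\rangle|>\delta/2\bigr)\le\frac{4\sigma^2}{\delta^2}<\epsilon,
\end{equation*}
and a triangle inequality converts this into the claimed tail bound around $F(h^f)-F(h^i)$. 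I do not expect any major obstacle here: the only mild technical point is to confirm that the difference quotients are uniformly bounded (so that the $O(1/L)$ variance bound is rigorous), which is immediate for a $C^1$ path on a compact interval.
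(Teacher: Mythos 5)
Your proposal is correct and follows essentially the same route as the paper's own argument in Sec.~\ref{ITR}: discretize a smooth path from $h^i$ to $h^f$, show the expected work cost is a Riemann sum converging to $F(h^f)-F(h^i)$, use the independence of the thermalized states to get a variance of order $O(1/L)$, and conclude via Chebyshev's inequality. The only addition you make is to spell out the uniform bound on the difference quotients and the explicit choice of $L$, which the paper leaves implicit.
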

In a terminology that we shall introduce later, this lemma states that for every $\epsilon$ and $\delta$, there exists a process $\mathcal{P}$ for which $F(h^f)-F(h^{i})$ is an $(\epsilon,\delta)$-deterministic value of the random variable $W(\mathcal{P},\mathcal{N})$.


\section{ \label{Sec:OptimalExpected}Optimal expected work extraction}

We shall here derive the minimal expected work cost (and thus the maximal expected work yield) of any process that transforms an energy level configuration $h^i$ into $h^f$, with the state of the system initially distributed as $q$. (We make no restrictions on the final state, nor its distribution.) In other words, we use the expectation value as a cost function, and we search over all elements $\mathcal{P}$ in $\mathscr{P}(h^i,h^f)$  in order to minimize $\langle W(\mathcal{P},\mathcal{N})\rangle$, where the initial state $\mathcal{N}$ is distributed $q$.
The resulting infimum can be expressed in terms of the relative Shannon entropy \cite{CoverThomas} (also called the Kullback-Liebler divergence \cite{KLdiv})
\begin{equation}
D(q\Vert p) := \sum_{n}q_n \log_2 q_n - \sum_{n}q_n\ln p_n,
\end{equation}
 which in some sense measures the difference between two probability distributions $q,p\in\mathbb{P}(N)$. Here, $\log_2$ denotes the base-2 logarithm. (The relative entropy should not to be confused with the \emph{conditional} Shannon/von Neumann entropy, as used in e.g.~\cite{delRio}, which emerges in settings where we have access to side-information.)

\begin{Definition}
\label{nbydlfnyd}
Let $h^i,h^f\in \mathbb{R}^{N}$, and let $\mathcal{N}$ be a random variable with distribution $q\in\mathbb{P}(N)$, then we define
\begin{equation}
\mathcal{C}^{\mathrm{extr}}(q,h^{i},h^{f}) :=\inf_{\mathcal{P}\in\mathscr{P}(h^i,h^f)}\langle W(\mathcal{P},\mathcal{N})\rangle.
\end{equation}
and in the special case $h:= h^i = h^f$ we define the expected work content as
\begin{equation}
\begin{split}
\mathcal{A}(q,h):= & -\mathcal{C}^{\mathrm{extr}}(q,h,h) \\
= & \sup_{\mathcal{P}\in\mathscr{P}(h^i,h^f)}\langle -W(\mathcal{P},\mathcal{N})\rangle.
\end{split}
\end{equation}
\end{Definition}

\begin{Proposition}
\label{OptimalExpectedWork}
Let $h^i,h^f\in \mathbb{R}^{N}$, and let $q\in\mathbb{P}(N)$, then
\begin{equation}
\begin{split}
\mathcal{C}^{\mathrm{extr}}(q,h^{i},h^{f})  = & F(h^f)-F(h^i) \\
& -kT\ln(2)D\boldsymbol{(}q\Vert G(h^i)\boldsymbol{)}.
\end{split}
\end{equation}
\end{Proposition}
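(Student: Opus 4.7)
The plan is to prove matching upper and lower bounds for $\mathcal{C}^{\mathrm{extr}}$. Both arguments rest on one algebraic identity: for any distribution $p\in\mathbb{P}(N)$ and any $h,h'\in\mathbb{R}^N$, the expected cost of the single LT taking $h\mapsto h'$ with the system in distribution $p$ satisfies
\begin{equation*}
\sum_n p_n(h'_n - h_n) = F(h') - F(h) + kT\ln(2)\bigl[D(p\Vert G(h')) - D(p\Vert G(h))\bigr],
\end{equation*}
as one sees by expanding $D(p\Vert G(\cdot))$ using $F(h) = -kT\ln Z(h)$ and $G_n(h) = e^{-\beta h_n}/Z(h)$. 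This should be the workhorse.

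For achievability I would use the three-step process of Fig.~\ref{fig2}: (a) an LT $h^i\mapsto h'$ with $h'_n := -kT\ln q_n$ (so that $Z(h') = 1$ and $G(h') = q$); (b) a free thermalization, which leaves the state distribution at $G(h') = q$; (c) an ITR process from $h'$ to $h^f$, whose expected work cost can be made arbitrarily close to $F(h^f) - F(h') = F(h^f)$ by the derivation in Sec.~\ref{ITR}. Summing the expected costs and using $\sum_n q_n h'_n = -kT\sum_n q_n\ln q_n$ produces exactly $F(h^f) - F(h^i) - kT\ln(2)D(q\Vert G(h^i))$, establishing the $\leq$ direction as an infimum.

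For the lower bound I would observe that an arbitrary $\mathcal{P}\in\mathscr{P}(h^i,h^f)$ can be put in alternating form $h^0\to h^1\to\cdots\to h^L$ (with $h^0=h^i$, $h^L=h^f$) interleaved with thermalizations, because consecutive LTs merge additively and consecutive thermalizations collapse. Let $p^l$ denote the state distribution just before the $l$-th LT, so $p^0 = q$ while $p^l = G(h^l)$ for $l\geq 1$ (as a thermalization precedes that LT). Summing the identity over $l$ the free-energy terms telescope, the terms $D(p^l\Vert G(h^l))$ vanish for $l\geq 1$, and we obtain
\begin{equation*}
\langle W(\mathcal{P},\mathcal{N})\rangle = F(h^f) - F(h^i) - kT\ln(2)D(q\Vert G(h^i)) + kT\ln(2)\Bigl[D(q\Vert G(h^1)) + \sum_{l=1}^{L-1} D\bigl(G(h^l)\Vert G(h^{l+1})\bigr)\Bigr],
\end{equation*}
whereupon non-negativity of relative entropy gives the $\geq$ direction.

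I expect the main obstacle to be not the algebra but the bookkeeping: verifying that the alternating-form reduction is genuinely general (covering processes with no thermalizations at all, or ones that start or end with a thermalization, where the last thermalization contributes nothing to work), and handling the achievability construction when $q$ has zero entries, e.g.\ by setting $h'_n = M$ on $\{n : q_n = 0\}$ and taking $M\to\infty$ together with the ITR discretization limit, which is precisely why $\mathcal{C}^{\mathrm{extr}}$ is defined as an infimum rather than a minimum.
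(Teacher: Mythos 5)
Your proposal is correct and follows essentially the same route as the paper's proof: the achievability part uses the identical three-step construction (LT to $h'_n=-kT\ln q_n$ with the zero-probability levels sent to a divergent energy, thermalization, approximate ITR to $h^f$), and your telescoping of the per-LT identity reproduces exactly the paper's key relation $\langle W(\mathcal{P},\mathcal{N})\rangle = F(h^f)-F(h^i)-kT\ln(2)D\boldsymbol{(}q\Vert G(h^i)\boldsymbol{)}+kT\ln(2)\bigl[D\boldsymbol{(}q\Vert G(h^1)\boldsymbol{)}+\sum_{l=1}^{L-1}D\boldsymbol{(}G(h^l)\Vert G(h^{l+1})\boldsymbol{)}\bigr]$, from which non-negativity of the relative entropy gives the lower bound. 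The only difference is presentational (summing a per-step identity versus subtracting the reference variable $\tilde W$ globally), so no further comparison is needed.
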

Note that since $h^i\in\mathbb{R}^N$ it follows that $G(h^i)$ has full support, i.e., there is no $n$ for which $G_{n}(h^i) = 0$. Hence, $D\boldsymbol{(}q\Vert G(h^i)\boldsymbol{)}< +\infty$.

A direct corollary of Proposition \ref{OptimalExpectedWork} is
\begin{equation}
\label{OptimalExpectedExtraction}
\begin{split}
\mathcal{A}(q,h) = kT\ln(2)D\boldsymbol{(}q\Vert G(h)\boldsymbol{)}.
\end{split}
\end{equation}
Within our model we can thus re-derive this well known result concerning the work content of non-equilibrium systems \cite{Procaccia,Lindblad1983,Takara,Esposito}.

In the case of  complete degeneracy,  $h = (r,\ldots, r)$, Eq.~(\ref{OptimalExpectedExtraction}) reduces to $\mathcal{A}(q,h)  =   [\log_2 N- H(q)]kT\ln 2$, where $H(q) := -\sum_{n=1}^{N}q_n\log_2 q_n$ is the Shannon entropy \cite{CoverThomas}. 

As a side remark we note that the quantity
\begin{equation}
\label{NonequiFreeEnergy}
\begin{split}
F(q,h) := &  F(h) + \frac{1}{\beta}\ln(2)D\boldsymbol{(}q\Vert G(h)\boldsymbol{)} \\
= & -\frac{1}{\beta}\ln(2)H(q)  +\sum_{n=1}^{N}q_n  h^i_{n}
\end{split}
\end{equation}
can be viewed as a non-equilibrium generalization of the free energy. It is not uncommon to refer to this more general quantity as `free energy'. However, in this investigation we reserve the term `free energy' for the equilibrium quantity $F(h) = -kT\ln Z(h)$.

\begin{proof}[proof of Proposition \ref{OptimalExpectedWork}]

We first shall show that 
\begin{equation}
\label{ncxmnxn}
\begin{split}
\langle W(\mathcal{P},\mathcal{N})\rangle \geq & F(h^f)-F(h^i) \\
& -kT\ln(2)D\boldsymbol{(}q\Vert G(h^i)\boldsymbol{)},
\end{split}
\end{equation}
for all $\mathcal{P}\in \mathscr{P}(h^i,h^f)$, where  $\mathcal{N}$ has distribution $q$.
Define
\begin{equation}
\label{overlWdef}
\tilde{W} : =  F(h^f)-F(h^i) -kT\ln\frac{q_{\mathcal{N}}}{G_{\mathcal{N}}(h^i)}.
\end{equation}
One can see that 
\begin{equation}
\label{idealworkcost}
\langle \tilde{W}\rangle  =  F(h^f)-F(h^i) -kT\ln(2) D\boldsymbol{(}q\Vert G(h^i)\boldsymbol{)}.
\end{equation}
Next, let $\mathcal{P}\in \mathscr{P}(h^i,h^f)$. Without loss of generality we may assume that this is a sequence of alternating LTs and thermalizations, beginning with an LT. Hence, we have a sequence of sets of energy levels $h^{0},h^{1},\ldots,h^{L}$, where $h^0 := h^i$ and $h^L := h^f$. The process proceeds with an LT that takes $h^{l}$ to $h^{l+1}$, operating on the state $\mathcal{N}_{l}$, followed by a thermalization, resulting in the new state $\mathcal{N}_{l+1}$ which is distributed $G(h^{l+1})$. Furthermore $\mathcal{N}_{0}:= \mathcal{N}$.
The work cost of this process is 
\begin{equation}
\label{ndvlkavlnk}
W(\mathcal{P},\mathcal{N}) = h^1_{\mathcal{N}}-h^{0}_{\mathcal{N}} + \sum_{l=1}^{L-1}(h^{l+1}_{\mathcal{N}_l} -h^{l}_{\mathcal{N}_l}).
\end{equation}
If we now make use of the general relation 
\begin{equation}
\label{mfnxkdjng}
h_{n} = F(h)-\frac{1}{\beta}\ln G_{n}(h), 
\end{equation}
in combination with Eqs.~(\ref{overlWdef}) and (\ref{ndvlkavlnk}) the result is 
\begin{equation}
\label{WminusWt}
\begin{split}
& W(\mathcal{P},\mathcal{N}) - \tilde{W} =  \frac{1}{\beta}\ln q_{\mathcal{N}}  -\frac{1}{\beta}\ln G_{\mathcal{N}}(h^1)\\
& \quad \quad  +\frac{1}{\beta} \sum_{l=1}^{L-1}\Big(\ln G_{\mathcal{N}_l}(h^{l}) -\ln G_{\mathcal{N}_l}(h^{l+1}) \Big).
\end{split}
\end{equation}
Hence,
\begin{equation}
\label{myvcmb}
\begin{split}
 \langle W(\mathcal{P},\mathcal{N}) - \tilde{W} \rangle 
= & \frac{1}{\beta}D\boldsymbol{(}q\Vert G(h^1)\boldsymbol{)} \\
& + \frac{1}{\beta}\sum_{l=1}^{L-1}D\boldsymbol{(}G(h^{l})\Vert G(h^{l+1})\boldsymbol{)}.
\end{split}
\end{equation}
Due to the fact that the relative Shannon entropy is non-negative, $D(p\Vert r)\geq 0$, we thus find  $\langle W (\mathcal{P},\mathcal{N})\rangle \geq \langle  \tilde{W} \rangle$. Combined with Eq.~(\ref{idealworkcost}) this yields Eq.~(\ref{ncxmnxn}).

Next we show that there exists a sequence of processes $(\mathcal{P}_{m})_{m\in\mathbb{N}}$ with $\mathcal{P}_{m}\in \mathscr{P}(h^i,h^f)$, such that 
\begin{equation}
\label{dtzkjdk}
\begin{split}
\lim_{m\rightarrow\infty }\langle W(\mathcal{P}_m,\mathcal{N})\rangle = & F(h^f)-F(h^i)\\
& -kT\ln(2)D\boldsymbol{(}q\Vert G(h^i)\boldsymbol{)}. 
\end{split}
\end{equation}
 For each $m\in\mathbb{N}$, we  
  define $h'_{n} := -kT\ln q_{n}$ for all $n$ such that $q_n\neq 0$, and $h'_n := m$ otherwise. Let $\mathcal{P}^{(1)}_m$ be the LT that takes $h^i$ to $h'$. This process has the expected work cost $\langle W(\mathcal{P}^{(1)}_m,\mathcal{N})\rangle  = kT\ln(2) H(q) -\sum_{n}q_n h^i_n$. The free energy of $h'$ is $F(h') = -kT\ln(1 + N_0e^{-\beta m})$, where $N_0$ is the number of energy levels for which the initial distribution assigns zero probability, $q_k=0$. After the initial LT the system is thermalized, leading to a new state $\mathcal{N}'$, distributed $G(h')$. By Lemma \ref{LemmaITR} there exists a finite process $\mathcal{P}^{(2)}_m$ that takes $h'$ to $h^f$,  and is such that $|\langle W(\mathcal{P}^{(2)}_m,\mathcal{N}')\rangle-F(h^f)+F(h')|\leq 1/m$, where $\mathcal{N}'$ is distributed $G(h')$. We let $\mathcal{P}_{m}$ be the concatenation of the initial LT $\mathcal{P}_{m}^{(1)}$, the thermalization, and $\mathcal{P}^{(2)}_m$. One can see (e.g., using Eq.~(\ref{NonequiFreeEnergy})) that $\lim_{m\rightarrow\infty}\langle W(\mathcal{P}_m,\mathcal{N})\rangle = F(h^f)-F(h^i)    -kT\ln 2D\boldsymbol{(}q\Vert G(h^i)\boldsymbol{)}$.

\end{proof}


\section{ \label{Sec:OptimalWorkCostErasure}Minimal expected work cost of erasure}
To erase a system is to put it in a well defined and pre-determined state. In our model this corresponds to transforming the system into a specific selected state $s \in\{1,\ldots,N\}$. According to Landauer's erasure principle \cite{Landauer61, Bennett82, LeffRexI, LeffRexII, PlenioVitelli01, Maruyama09} there is a minimal work cost associated with this erasure. We shall here  re-derive the  `standard' work cost of erasure, via the expectation value as a cost function. 

Similarly as for the work extraction, we assume  initial energy levels $h^i$,  final levels $h^f$, and an initial state $\mathcal{N}$ with distribution $q$. However, in addition we require the process to put the system into the selected state $s$. Formulated like this, one realizes that the work extraction task and the information erasure task are closely related. 
The erasure problem is nothing but the work extraction setup, but with an added constraint on the final distribution of the state. 

It is a  bit too strict to demand that the erasure process puts the system in state $s$ with certainty. Such a process exists within our model only if the initial distribution is  $q_n = \delta_{n,s}$. The reason for this is that the only method by which we can change the state of the system is by thermalizing it, in which case the system is put in the Gibbs distribution $G(h)$ for some collection of energy levels $h$. However, there exists no $h\in\mathbb{R}^{N}$ such that $G_{n}(h) = \delta_{n,s}$.  For this reason we only require the process $\mathcal{P}$ to result in final states $\mathcal{F}(\mathcal{P},\mathcal{N})$ such that $P\boldsymbol{(}\mathcal{F}(\mathcal{P},\mathcal{N}) = s\boldsymbol{)}\geq 1-\tau$, for $\tau>0$. After the optimization we take the limit $\tau\rightarrow 0$.
We formalize this idea in terms of the following set of operations:
\begin{Definition}
Let $h^i,h^f\in\mathbb{R}^{N}$ and let $\mathcal{N}$ be distributed $q\in\mathbb{P}(N)$, let $s\in\{1,\ldots, N\}$, and $0<\tau<1$. Define
\begin{equation}
\begin{split}
& \mathscr{P}^{\tau}_{s}(q, h^i,h^f)\\
&  := \{\mathcal{P}\in\mathscr{P}(h^i,h^f): P\boldsymbol{(}\mathcal{F}(\mathcal{P},\mathcal{N})=s\boldsymbol{)} \geq 1-\tau\}.
\end{split}
\end{equation}
\end{Definition}
The set $\mathscr{P}^{\tau}_{s}(q,h^i,h^f)$ depends on the initial distribution $q$ only in a very weak sense. The only aspect of $q$ that matters is if $q_{n} = \delta_{ns}$, or not. For the sake of completeness we nevertheless keep $q$ in the notation. 

\begin{Definition}
\label{nsfdbkvnbakn}
Let $h^i, h^f\in \mathbb{R}^{N}$, $q\in\mathbb{P}(N)$, and $s\in\{1,\ldots,N\}$. Let $\mathcal{N}$ be a random variable that is distributed $q$.
Then we define
\begin{equation}
\label{nadfklnbadfb}
\mathcal{C}^{\mathrm{erase}}(q,h^i,h^f,s) := \lim_{\tau \rightarrow 0^+}\inf_{\mathcal{P}\in\mathscr{P}^{\tau}_{s}(q,h^i,h^f) } \langle W(\mathcal{P},\mathcal{N})\rangle.
\end{equation}
\end{Definition}
Note that $\tau \geq \tau'>0$ implies $\mathscr{P}^{\tau}_{s}(q,h^i,h^f) \supseteq\mathscr{P}^{\tau'}_{s}(q,h^i,h^f)$. This in turn yields
\begin{equation*}
\inf_{\mathcal{P}\in\mathscr{P}^{\tau}_{s}(q,h^i,h^f) } \langle W(\mathcal{P},\mathcal{N})\rangle \leq \inf_{\mathcal{P}\in\mathscr{P}^{\tau'}_{s}(q,h^i,h^f) } \langle W(\mathcal{P},\mathcal{N})\rangle.
\end{equation*}
In other words, for decreasing $\tau$ the function $\inf_{\mathcal{P}\in\mathscr{P}^{\tau}_{s}(q,h^i,h^f) } \langle W(\mathcal{P},\mathcal{N})\rangle$ increases monotonically. Hence, the limit $\tau\rightarrow 0$ in Eq.~(\ref{nadfklnbadfb}) is well defined (possibly with the value $+\infty$).

\begin{Proposition}
\label{OptimalExpectedWorkCost}
Let $h^i, h^f\in \mathbb{R}^{N}$ and $q\in\mathbb{P}(N)$, and $s\in\{1,\ldots,N\}$. 
Then
\begin{equation}
\label{InfimumErasureCost}
\begin{split}
 \mathcal{C}^{\mathrm{erase}}(q,h^i,h^f,s)  = &   h^f_s  -F(h^i) \\
 &    -\frac{1}{\beta}\ln(2)D\boldsymbol{(}q\Vert G(h^i)\boldsymbol{)}.
\end{split}
\end{equation}
\end{Proposition}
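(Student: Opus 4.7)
The erasure problem of Proposition \ref{OptimalExpectedWorkCost} differs from the extraction problem of Proposition \ref{OptimalExpectedWork} only in an added constraint on the final state distribution: the system must be in state $s$ with probability at least $1-\tau$, rather than ending in whatever distribution is thermodynamically convenient. My plan is to reduce both the lower and the upper bound to Proposition \ref{OptimalExpectedWork} by appending (resp.\ prepending) a short sub-process that converts between the erasure endpoint $\delta_{s}$ and a Gibbs distribution, at a cost controlled by an auxiliary parameter $M$ that is eventually sent to infinity.

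For the \emph{lower bound}, given any $\mathcal{P}\in\mathscr{P}^{\tau}_{s}(q,h^{i},h^{f})$, I would construct $\mathcal{P}'\in\mathscr{P}(h^{i},h^{f})$ by appending three operations: an LT from $h^{f}$ to $h^{(M)}$ defined by $h^{(M)}_{s}=h^{f}_{s}-M$ and $h^{(M)}_{n}=h^{f}_{n}$ for $n\neq s$; a thermalization; and an ITR from $h^{(M)}$ back to $h^{f}$, whose expected cost can be brought within any $\delta'>0$ of $F(h^{f})-F(h^{(M)})$ by Sec.~\ref{ITR}. The appended LT has expected cost $-M\cdot P(\mathcal{F}=s)\leq -M(1-\tau)$. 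Applying Proposition \ref{OptimalExpectedWork} to $\mathcal{P}'$ and rearranging, together with the asymptotic $F(h^{(M)})+M\to h^{f}_{s}$, yields
\begin{equation*}
\langle W(\mathcal{P},\mathcal{N})\rangle \geq h^{f}_{s}-F(h^{i})-\tfrac{1}{\beta}\ln(2)D(q\Vert G(h^{i}))-M\tau-\eta(M)-\delta',
\end{equation*}
where $\eta(M)\to 0$. The key move is to fix $M$ large (so $\eta(M)$ is small) and $\delta'$ small \emph{before} sending $\tau\to 0^{+}$, so that $M\tau\to 0$ for each fixed $M$.

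For the \emph{upper bound}, I would use the same intermediate levels $h^{(M)}$ with $h^{(M)}_{s}=-M$ and $h^{(M)}_{n}=h^{f}_{n}$ otherwise, and invoke Proposition \ref{OptimalExpectedWork} to obtain a process in $\mathscr{P}(h^{i},h^{(M)})$ with expected cost arbitrarily close to $F(h^{(M)})-F(h^{i})-kT\ln(2)D(q\Vert G(h^{i}))$. After this process the state is distributed $G(h^{(M)})$, which concentrates on $s$ as $M\to\infty$; hence $P(\mathcal{F}=s)=G_{s}(h^{(M)})\to 1$ and the $\tau$-constraint is met for all sufficiently large $M$. Appending a final LT from $h^{(M)}$ to $h^{f}$, whose expected cost $\sum_{n}G_{n}(h^{(M)})(h^{f}_{n}-h^{(M)}_{n})$ tends to $h^{f}_{s}+M$, and using $F(h^{(M)})\to -M$, shows the total expected cost converges to $h^{f}_{s}-F(h^{i})-kT\ln(2)D(q\Vert G(h^{i}))$, matching the lower bound.

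The main subtlety I anticipate is the ordering of the double limit $M\to\infty$, $\tau\to 0^{+}$ in the lower bound, handled as described above; this is the only place where the erasure constraint actually bites and where one could easily produce a spurious divergence by a careless interchange. A secondary technicality is the case where $q$ does not have full support, which can be absorbed using the same `lift-the-zero-probability-levels' trick employed at the end of the proof of Proposition \ref{OptimalExpectedWork}, contributing vanishingly to every expectation value. Everything else is bookkeeping on top of Proposition \ref{OptimalExpectedWork} and the ITR analysis of Sec.~\ref{ITR}.
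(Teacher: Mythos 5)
Your proposal is correct, and while your upper-bound construction is essentially the paper's (an initial LT matching $q$, a thermalization, an ITR-like passage to levels with $s$ pulled far down, and a final LT back to $h^{f}$, with the same $\lim_{M\rightarrow\infty}M[1-G_{s}(h^{(M)})]=0$ bookkeeping), your lower bound takes a genuinely different route. The paper splits a given $\mathcal{P}\in\mathscr{P}^{\tau}_{s}(q,h^{i},h^{f})$ at its \emph{last thermalization}, applies Proposition~\ref{OptimalExpectedWork} to the first piece (ending at some $\overline{h}$ with $G_{s}(\overline{h})\geq 1-\tau$), and then bounds the final LT directly via $F(\overline{h})-\langle\overline{h}_{\overline{\mathcal{N}}}\rangle=-kT\ln(2)H\boldsymbol{(}G(\overline{h})\boldsymbol{)}\rightarrow 0$; this forces a separate case for processes containing no thermalization at all. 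You instead \emph{complete the cycle}: you append an extraction tail that harvests the work stored in the nearly deterministic final state, turning $\mathcal{P}$ into an element of $\mathscr{P}(h^{i},h^{f})$ to which Proposition~\ref{OptimalExpectedWork} applies as a black box, with no case split needed. The price is the double limit you correctly flag — the error term $M\tau$ forces you to send $\tau\rightarrow 0^{+}$ at fixed $M$ before letting $M\rightarrow\infty$, which is compatible with the order of limits in Definition~\ref{nsfdbkvnbakn} — whereas the paper's split avoids any such interchange. Both arguments are sound; yours is the cleaner reduction, the paper's is the more self-contained estimate. One small point to make explicit: for the upper bound you need the state entering the final LT to be $G(h^{(M)})$-distributed, which requires the process realizing Proposition~\ref{OptimalExpectedWork} on $\mathscr{P}(h^{i},h^{(M)})$ to end with a thermalization at $h^{(M)}$; appending one costs nothing and settles this.
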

In the special case of $h:=h^{i}= h^{f}$ for a completely degenerate set of energy levels, $h_n:= r$, we find the expected work cost of erasure to be 
\begin{equation}
\mathcal{C}^{\mathrm{erase}}(q,h,h,s) = kT\ln(2)H(q),
\end{equation}
which is a standard result concerning Landauer's erasure principle \cite{Piechocinska, Landauer61, Maruyama09, Benett03, Shizume, Shulman99,Alicki}.

\begin{proof}[proof of Proposition \ref{OptimalExpectedWorkCost}]
Regarding the erasure process as an alternating sequence of LTs and thermalizations, we first distinguish the special case that the process does not contain any thermalization.
In this case the process only consists of one single LT.  This LT must transform $h^i$ to $h^f$. Furthermore, an LT does not change the state of the system. Hence, for this LT to be an admissible process it follows that the initial distribution $q$ must be such that $q_{s}\geq 1-\tau$. Hence, in the limit $\tau\rightarrow 0$ we can only accept the initial distribution $q_n = \delta_{sn}$. In this limit, the resulting energy cost, $h^{f}_s-h^{i}_{s}$, agrees with Eq.~(\ref{InfimumErasureCost}).

We next consider the case that the process $\mathcal{P}\in\mathscr{P}^{\tau}_{s}(q,h^i,h^f)$ does contain at least one thermalization. Denote  
 $R: = h^f_s  -F(h^i)  -kT\ln(2)D\boldsymbol{(}q\Vert G(h^i)\boldsymbol{)}$.
First we shall prove that the left hand side of Eq.~(\ref{InfimumErasureCost}) is lower bounded by $R$. 
Let us divide $\mathcal{P}$ into two parts: The first part, $\mathcal{P}_1$, is the whole process up to the very last thermalization.  We let $\overline{h}$ be the configuration of energy levels at this final thermalization. The second part, $\mathcal{P}_2$, consists only of the final LT that transforms $\overline{h}$ into $h^f$. By Proposition~\ref{OptimalExpectedWork} we know that $\langle W(\mathcal{P}_1,\mathcal{N})\rangle \geq  F(\overline{h})-F(h^i)    -kT\ln(2)D\boldsymbol{(}q\Vert G(h^i)\boldsymbol{)}$.
The expected work cost of $\mathcal{P}_2$ is $\langle W(\mathcal{P}_2,\overline{\mathcal{N}})\rangle = \langle h^f_{\overline{\mathcal{N}}} - \overline{h}_{\overline{\mathcal{N}}}\rangle$,  where the state $\overline{\mathcal{N}}$  is Gibbs distributed $G(\overline{h})$. 
By combining the above observations we find
 \begin{eqnarray}
\langle W(\mathcal{P},\mathcal{N}) \rangle & =  &  \langle W(\mathcal{P}_1,\mathcal{N})\rangle + \langle W(\mathcal{P}_2,\overline{\mathcal{N}})\rangle \nonumber \\
& \geq & F(\overline{h})-F(h^i)    -\frac{\ln 2}{\beta}D\boldsymbol{(}q\Vert G(h^i)\boldsymbol{)} \nonumber\\
 \label{asjkd}& & + \langle h^f_{\overline{\mathcal{N}}}\rangle- \langle \overline{h}_{\overline{\mathcal{N}}}\rangle     
\end{eqnarray} 
Since $\mathcal{P}\in\mathscr{P}^{\tau}_{s}(q,h^i,h^f)$ we must have $G_s(\overline{h})\geq 1-\tau$. 
It follows that $\lim_{\tau\rightarrow 0}\langle h^f_{\overline{\mathcal{N}}}\rangle = h^f_s$. Furthermore,  $F(\overline{h}) - \langle \overline{h}_{\overline{\mathcal{N}}}\rangle = -kT\ln(2)H\boldsymbol{(}G(\overline{h})\boldsymbol{)}$, which goes to zero as $\tau\rightarrow 0$. We can conclude that 
$\lim_{\tau \rightarrow 0}\inf_{\mathcal{P}\in\mathscr{P}^{\tau}_{s}(q,h^i,h^f) } \langle W(\mathcal{P},\mathcal{N})\rangle \geq R$.

Next we shall find a sequence of processes $\mathcal{P}_m$ such that $\lim_{m\rightarrow\infty}P\boldsymbol{(}\mathcal{F}(\mathcal{P}_m,\mathcal{N})=s\boldsymbol{)} = 1$, and $\lim_{m\rightarrow\infty}\langle W(\mathcal{P}_m,\mathcal{N})\rangle = R$. Together with the lower bound we proved above, this implies Eq.~(\ref{InfimumErasureCost}).
 We construct each $\mathcal{P}_m$  as a concatenation of an initial LT $\mathcal{P}^{(1)}$, a thermalization, a process $\mathcal{P}_{m}^{(2)}$, and a final LT $\mathcal{P}^{(3)}_{m}$. 
 
We begin by constructing $\mathcal{P}_m^{(1)}$ as the LT that takes $h^i$ to a configuration of energy levels $h'$. The latter we define as $h'_{n} := -(\ln q_{n})/\beta$ for all $n$ such that $q_n\neq 0$, and $h'_n = m$ otherwise.   The expected work cost of this process acting on the initial state $\mathcal{N}$ is $\langle W(\mathcal{P}^{(1)}_m,\mathcal{N})\rangle  = \sum_nq_n(h'_n-h^{i}_n ) =  -F(h^i) -kT\ln(2)D\boldsymbol{(}q\Vert G(h^i)\boldsymbol{)}$. 

Next, the system is thermalized, and we let $\mathcal{N}'$ denote the state of the system after this thermalization. Define $h''_n := -m\delta_{n,s} + h^f_n$.
 By Lemma \ref{LemmaITR} there exists a process $\mathcal{P}^{(2)}_m\in \mathscr{P}(h',h'')$ such that $|\langle W(\mathcal{P}^{(2)}_m,\mathcal{N}')\rangle-F(h'')+F(h')|\leq 1/m$.  
Note that $F(h') = -kT \ln( 1 + N_0e^{-\beta m})$, where $N_0$ is the number of energy levels $k$ for which $q_k=0$. 
The final state of process $\mathcal{P}^{(2)}_m$ is $\mathcal{N}'' := \mathcal{F}(\mathcal{P}^{(2)}_m,\mathcal{N}')$, which is Gibbs distributed $G(h'')$. 
 
Finally, we let $\mathcal{P}_m^{(3)}$ be the LT that takes $h''$ to $h^f$. This process has the expected work cost $\langle W(\mathcal{P}^{(3)}_m,\mathcal{N}'')\rangle  =  m G_s(h'')$. 

Combining the above results we find $|\langle W(\mathcal{P}_m,\mathcal{N})\rangle -R|= |\langle W(\mathcal{P}_m^{(2)}, \mathcal{N}')\rangle -h^{f}_s  + mG_s(h'')| \leq m^{-1} + |F(h'')- F(h') -h^{f}_s  + mG_s(h'')|$. By writing out the terms explicitly, and using $\lim_{m\rightarrow \infty}m[1-G_s(h'')] = 0$, one can show that the right hand side of the above inequality converges to zero as $m\rightarrow \infty$.
Furthermore, $P\boldsymbol{(}\mathcal{F}(\mathcal{P}_m,\mathcal{N}) = s\boldsymbol{)} = G_s(h'')$. Since $\lim_{m\rightarrow\infty}G_s(h'')=1$, the proposition is proved. 
\end{proof}


\section{ \label{fluctuations}Intrinsic fluctuations in optimal expected work extraction}

Here we show that for any sequence of processes for which the expected work cost approaches the minimal value, as given by Proposition \ref{OptimalExpectedWork}, the resulting work cost variable converges in probability to a specific function of the initial state.

In the main text (and in Sec.~\ref{Sec:OptimalExpected}) we considered a specific process (or rather limit process) that yields the optimal expected work extraction. Recall that this process proceeds by first changing the Hamiltonian $h$ to a new $h'$ such that the initial distribution $q$ becomes the Gibbs distribution of this new hamiltonian $q = G(h')$ (for the moment disregarding the special case of distributions that do not have full support). The random work cost of this initial step takes the value $F(h')  -   F(h) -  kT\ln\boldsymbol{(}q_n/G_n(h)\boldsymbol{)}$ with probability $q_n$. We can next find a family of processes that arbitrarily well approximates an ITR that brings back the new Hamiltonian to the original. This last step adds a random variable that converges in probability to the constant value $F(h)-F(h')$.
For this specific choice of process we can thus conclude that the work cost is clustered around the values $-kT\ln\boldsymbol{(} q_{n}/G_n(h)\boldsymbol{)}$, each carrying a `cloud' of values around them. The purpose of this section is to prove that this feature is not restricted to this specific choice of process, but is generally true for any sequence of processes that achieves optimal expected work extraction.

\subsection{Convergence in probability}

Recall that a sequence of random variables $X_k$ converges to $X$ in probability if for each $\delta>0$ it is true    that $\lim_{k\rightarrow \infty}P(|X_k-X| > \delta) = 0$ \cite{Gut}.

Given a real valued random variable $X$, the cumulative distribution function is defined as $F_{X}(x) := P(X\leq x)$. The moment generating function of $X$, if it exists, is defined as $M_{X}(t) := \langle e^{tX}\rangle = \int_{-\infty}^{\infty}e^{tx}dF_{X}(x)$.

Given a random variable $X$ and a sequence of random variables $X_k$, a well known result by Curtis \cite{Curtis} states that if the moment generating function $M_{X_k}(x)$ exists and converges pointwise to $M_{X}(x)$ in a neighborhood of $0$ along the real axis, then $X_k$ converge to $X$ in distribution, i.e., $F_{X_k}(x)$ converges to $F_{X}(x)$ for each $x$ where $F_{X}$ is continuous. This standard result is unfortunately not quite enough for our purpose. We need a generalization \cite{Mukherjea,Ushakov} where the interval does not contain $0$. 
\begin{Proposition}[\cite{Mukherjea,Ushakov}]
\label{Indistribution}
Let $0< a<b$. If 
\begin{equation}
\lim_{k\rightarrow\infty}M_{X_k}(x) = M_{X}(x),\quad \forall x\in (a,b),
\end{equation}
then
\begin{equation}
\lim_{k\rightarrow\infty}F_{X_k}(x) = F_{X}(x),
\end{equation}
for each $x$ where $F_{X}$ is continuous.
\end{Proposition}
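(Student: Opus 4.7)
The plan is to reduce the claim to Curtiss's classical continuity theorem, which requires pointwise convergence of MGFs on an open interval containing the origin. Since the given interval $(a,b)$ lies strictly on the positive real axis, I would first translate it to contain $0$ by an Esscher tilt (exponential reweighting), apply Curtiss to the tilted sequence, and then untilt to recover convergence in distribution of the original $X_k$.

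First I would fix any $c \in (a,b)$ and define tilted probability measures
\begin{equation*}
d\nu_k(y) := \frac{e^{cy}}{M_{X_k}(c)}\, dF_{X_k}(y), \qquad d\nu(y) := \frac{e^{cy}}{M_X(c)}\, dF_X(y),
\end{equation*}
noting that the normalizations are strictly positive since they are integrals of strictly positive functions against probability measures. A direct computation gives the tilted moment generating function
\begin{equation*}
M_{\nu_k}(t) = \frac{M_{X_k}(t+c)}{M_{X_k}(c)},
\end{equation*}
finite for $t \in (a-c,\, b-c)$ and converging pointwise to $M_\nu(t) = M_X(t+c)/M_X(c)$ there. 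Because $a < c < b$, this shifted interval contains $0$, so Curtiss's theorem applies and yields weak convergence of the tilted laws, $\nu_k \to \nu$.

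The main obstacle is the untilting step. Writing $dF_{X_k} = M_{X_k}(c)\, e^{-cy}\, d\nu_k$, convergence $X_k \to X$ in distribution is equivalent to
\begin{equation*}
M_{X_k}(c)\int f(y)\, e^{-cy}\, d\nu_k(y) \;\longrightarrow\; M_X(c)\int f(y)\, e^{-cy}\, d\nu(y)
\end{equation*}
for every bounded continuous $f$. The prefactor converges by hypothesis, but the weight $g(y) := e^{-cy}$ is continuous and \emph{unbounded}, so weak convergence of the $\nu_k$ alone does not suffice. The rescue is that the total $g$-mass is also known to converge: $\int g\, d\nu_k = 1/M_{X_k}(c) \to 1/M_X(c) = \int g\, d\nu$. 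This enables a standard truncation argument: with $g_M := g \wedge M$, the product $f g_M$ is bounded continuous, so $\int f g_M\, d\nu_k \to \int f g_M\, d\nu$, while the tail residual $\int f(g - g_M)\, d\nu_k$ is bounded in absolute value by $\|f\|_\infty \int(g - g_M)\, d\nu_k$, which converges to $\|f\|_\infty \int(g - g_M)\, d\nu$ (because both $\int g\, d\nu_k$ and $\int g_M\, d\nu_k$ converge), and this limit tends to $0$ as $M\to\infty$ by monotone convergence. Combining these gives the weak convergence $X_k\to X$, which is equivalent to $F_{X_k}(x)\to F_X(x)$ at every continuity point of $F_X$. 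I expect this untilting step, specifically the promotion of weak convergence against an unbounded weight using the hypothesis MGF convergence as the source of uniform integrability, to be the only genuinely delicate part of the argument.
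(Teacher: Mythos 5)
The paper does not actually prove Proposition~\ref{Indistribution}; it imports the statement from the cited references \cite{Mukherjea,Ushakov}, so there is no in-paper argument to compare against. Judged on its own, your proof is correct and complete, and it is a clean, self-contained route: tilting by $e^{cy}$ for a fixed $c\in(a,b)$ translates the interval of MGF convergence to the open neighborhood $(a-c,b-c)$ of the origin, Curtiss's theorem \cite{Curtis} (invoked, as you do, in the form where the limit function is identified as the MGF of the specific tilted measure $\nu$) gives weak convergence $\nu_k\to\nu$, and the untilting is handled properly. The genuinely delicate point is exactly the one you isolate: the weight $g(y)=e^{-cy}$ is unbounded on the left half-line, so weak convergence of the $\nu_k$ alone does not suffice, and the missing uniform integrability is supplied by the convergence of the total $g$-mass, $\int g\,d\nu_k = 1/M_{X_k}(c)\to 1/M_X(c)=\int g\,d\nu$, which is part of the hypothesis. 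Your truncation estimate is sound: for bounded continuous $f$ one gets $\limsup_k\bigl|\int fg\,d\nu_k-\int fg\,d\nu\bigr|\leq 2\Vert f\Vert_\infty\int(g-g_M)\,d\nu$, which vanishes as $M\to\infty$ by dominated convergence since $\int g\,d\nu<\infty$. The only thing I would make explicit in a written version is that the hypothesis $\lim_k M_{X_k}(x)=M_X(x)<\infty$ on $(a,b)$ guarantees the $M_{X_k}$ are (eventually) finite there, which is what licenses both the definition of the tilted laws and the application of Curtiss after discarding finitely many initial terms; this is a presentational point, not a gap.
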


Given distributions $q\in\mathbb{P}(N)$ and $r\in \mathbb{P}^{+}(N)$ and a real number $\alpha>0$, $\alpha \neq 1$, the relative R\'enyi  $\alpha$-entropy is defined as \cite{Renyi}
\begin{equation}
D_{\alpha}(q\Vert r) := \frac{1}{\alpha-1}\log_{2}\sum_{j=1}^{N}\frac{q_j^{\alpha}}{r_j^{\alpha-1}}.
\end{equation}
Furthermore, we can let $D_{1}(q\Vert r):= D(q\Vert r)$, since $\lim_{\alpha\rightarrow 1}D_{\alpha}(q\Vert r)  = D_{1}(q\Vert r)$ \cite{Renyi}. It is furthermore the case that $\lim_{\alpha\rightarrow 0^{+}}D_{\alpha}(q\Vert r) = D_{0}(q\Vert r) := -\log_2\sum_{n:q_n>0}r_n$ \cite{Erven}.
The quantity $D_{\alpha}(q\Vert r)$ is non-decreasing in $\alpha$, and furthermore non-negative, 
\begin{equation}
\label{increase}
0 \leq D_{\alpha}(q\Vert r) \leq D_{\beta}(q\Vert r),\quad 0<\alpha \leq \beta.
\end{equation}
The latter can be seen by the fact that $D_{\beta}$ can be expressed in terms of the power mean (also called H\"older mean). If $w\in\mathbb{P}(N)$, and $x =(x_1,\ldots,x_N)$ and $s\neq 0$, then the power mean is defined as $M_{s}(w,x): = (\sum_{k}w_kx_k^{s})^{1/s}$ \cite{Bullen}. Furthermore,  $M_0(w,x) := \Pi_{k}x_k^{w_k}$. One can see that $D_{\alpha}(q\Vert r) =  \log_2 M_{\alpha-1}(q,q/r)$, where $(q/r)_{n} := q_{n}/r_n$. Since the power mean is monotonically increasing in $s$ \cite{Bullen}, Eq.~(\ref{increase}) follows.

\begin{Proposition}
\label{IntrinsicFluctuations}
Let $h^{i},h^{f}\in\mathbb{R}^{N}$, and let $\mathcal{N}$ be a random variable with distribution $q\in\mathbb{P}(N)$.
If  $(\mathcal{P}_{m})_{m\in\mathbb{N}}$ with $\mathcal{P}_{m}\in \mathscr{P}(h^i,h^f)$ is such that 
\begin{equation}
\begin{split}
\lim_{m\rightarrow\infty }\langle W(\mathcal{P}_m,\mathcal{N})\rangle = & F(h^f)-F(h^i)\\
& -kT\ln(2)D\boldsymbol{(}q\Vert G(h^i)\boldsymbol{)}, 
\end{split}
\end{equation}
then 
\begin{equation*}
W(\mathcal{P}_m,\mathcal{N}) \rightarrow F(h^f)-F(h^i) -kT\ln\frac{q_{\mathcal{N}}}{G_{\mathcal{N}}(h^i)}
\end{equation*}
in probability.
\end{Proposition}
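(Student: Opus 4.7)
My plan is to invoke the moment generating function (MGF) convergence result, Proposition \ref{Indistribution}, together with the monotonicity in $\alpha$ of the relative R\'enyi entropies from Eq.~(\ref{increase}). Set $Y_m := W(\mathcal{P}_m,\mathcal{N}) - \tilde W$, where $\tilde W := F(h^f) - F(h^i) - kT\ln(q_{\mathcal{N}}/G_{\mathcal{N}}(h^i))$. I want to show $Y_m \to 0$ in probability; since convergence in distribution to a degenerate limit is equivalent to convergence in probability, it suffices by Proposition \ref{Indistribution} to exhibit an open interval of $s$ on which $\langle e^{sY_m}\rangle \to 1$ (the MGF of the point mass at $0$ being identically $1$).

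Without loss of generality each $\mathcal{P}_m$ alternates LTs with thermalizations through intermediate configurations $h^0_m = h^i, h^1_m,\ldots, h^{L_m}_m = h^f$. Starting from Eq.~(\ref{WminusWt}),
\begin{equation*}
Y_m = \frac{1}{\beta}\ln\frac{q_{\mathcal{N}}}{G_{\mathcal{N}}(h^1_m)} + \frac{1}{\beta}\sum_{l=1}^{L_m-1}\ln\frac{G_{\mathcal{N}_l}(h^l_m)}{G_{\mathcal{N}_l}(h^{l+1}_m)},
\end{equation*}
and the mutual independence of the initial variable $\mathcal{N}$ and the Gibbs-distributed $\mathcal{N}_l$ factorises $\langle e^{sY_m}\rangle$ into a product of sums of the form $\sum_n r_n^\alpha (r'_n)^{1-\alpha}$, each equal to $2^{(\alpha-1)D_\alpha(r\Vert r')}$. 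With $\alpha := 1 + s/\beta$ this gives
\begin{equation*}
\langle e^{sY_m}\rangle = 2^{(\alpha-1)\Sigma_m(\alpha)},
\end{equation*}
where
\begin{equation*}
\Sigma_m(\alpha) := D_{\alpha}\boldsymbol{(}q\Vert G(h^1_m)\boldsymbol{)} + \sum_{l=1}^{L_m-1} D_{\alpha}\boldsymbol{(}G(h^l_m)\Vert G(h^{l+1}_m)\boldsymbol{)}.
\end{equation*}
(The value $s = -\beta$, i.e.~$\alpha = 0$, reproduces the Jarzynski-type identity $\langle e^{-\beta Y_m}\rangle = 1$ whenever $q$ has full support.)

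The assumption, combined with Eqs.~(\ref{idealworkcost}) and (\ref{myvcmb}), forces $\Sigma_m(1) \to 0$. By the monotonicity Eq.~(\ref{increase}), for every fixed $\alpha \in (0,1)$ we have $0 \leq \Sigma_m(\alpha) \leq \Sigma_m(1) \to 0$, and since $\alpha - 1$ is a fixed negative number this yields $\langle e^{sY_m}\rangle \to 1$ for each $s \in (-\beta, 0)$. Applying Proposition \ref{Indistribution} to $-Y_m$, whose MGF at $t$ equals $\langle e^{-tY_m}\rangle$, on any positive interval $t \in (a,b) \subset (0,\beta)$, gives $-Y_m \to 0$ in distribution, hence $Y_m \to 0$ in probability. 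The only delicate point is when $q$ lacks full support: adopting $0^\alpha = 0$ for $\alpha \in (0,1)$, the sum defining $D_\alpha\boldsymbol{(}q\Vert G(h^1_m)\boldsymbol{)}$ simply restricts to $\{n : q_n > 0\}$ and remains finite, so both the MGF identity and the monotonicity bound persist on the interval we need; the Rényi monotonicity step is really the heart of the argument, turning the hypothesis on the expectation (the $\alpha=1$ divergence sum) into convergence of the full MGF in a nontrivial window.
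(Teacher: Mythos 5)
Your argument is correct and is essentially the paper's own proof: after the substitution $s=-\beta t$ your $Y_m$, the window $s\in(-\beta,0)$, and the order $\alpha=1+s/\beta$ become exactly the paper's $X_m=-\beta[W(\mathcal{P}_m,\mathcal{N})-\tilde W]$, $t\in(0,1)$, and $D_{1-t}$, with the same three ingredients (the decomposition in Eq.~(\ref{WminusWt}), the R\'enyi monotonicity of Eq.~(\ref{increase}), and Proposition~\ref{Indistribution} applied on an interval bounded away from $0$). Your explicit remark on the case where $q$ lacks full support is a small addition the paper leaves implicit, but it does not change the route.
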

In the special case $h := h^{f} = h^{i}$, we thus find that the work yield, $-W(\mathcal{P}_m,\mathcal{N})$, converges in probability to 
\begin{equation}
\label{fluctyield}
W_{\textrm{yield}}(h,\mathcal{N}) := kT\ln q_{\mathcal{N}} - kT\ln G_{\mathcal{N}}(h).
\end{equation}

\begin{proof}
In the following we let 
\begin{equation}
\label{overlWdef2}
\tilde{W} : =  F(h^f)-F(h^i) -kT\ln\frac{q_{\mathcal{N}}}{G_{\mathcal{N}}(h^i)}.
\end{equation}
Let $(\mathcal{P}_{m})_{m\in\mathbb{N}}$ be any sequence in  $\mathscr{P}(h^i,h^f)$ such that 
$\lim_{m\rightarrow\infty }\langle W(\mathcal{P}_m,\mathcal{N})\rangle = F(h^f)-F(h^i)-kT\ln(2)D\boldsymbol{(}q\Vert G(h^i)\boldsymbol{)}$. Define 
\begin{equation*}
\begin{split}
a_m:=  & \beta \langle W(\mathcal{P}_{m},\mathcal{N}) \rangle \\
& -\beta F(h^f)+\beta F(h^i)+\ln(2)D\boldsymbol{(}q\Vert G(h^i)\boldsymbol{)}.
\end{split}
\end{equation*}
Hence, $\lim_{m\rightarrow\infty}a_m = 0$.
Furthermore, by Eqs.~(\ref{idealworkcost}) and (\ref{WminusWt}) we know that 
\begin{equation*}
\begin{split}
a_m 
= & \frac{1}{\beta}D\boldsymbol{(}q\Vert G(h^{m,1})\boldsymbol{)} \\
& + \frac{1}{\beta}\sum_{l=1}^{L_m-1}D\boldsymbol{(}G(h^{m,l})\Vert G(h^{m,l+1})\boldsymbol{)},
\end{split}
\end{equation*}
where, for each $m$,  $(h^{m,l})_{l=0}^{L_m}$ is the sequence of energy level configurations in $\mathcal{P}_{m}$. Define the random variable 
\begin{equation}
X_m := -\beta[W(\mathcal{P}_{m},\mathcal{N})-  \tilde{W} ].
\end{equation}
It follows that the cumulant generating function of $X_m$ is 
\begin{equation*}
\begin{split}
\log_2 M_{X_m}(t)  = & -tD_{1-t}\boldsymbol{(}q\Vert G(h^{m,1})\boldsymbol{)}\\
  & -t\sum_{l=1}^{L_m} D_{1-t}\boldsymbol{(}G(h^{m,l})\Vert G(h^{m,l+1})\boldsymbol{)}, 
\end{split}
\end{equation*}
where we know that $M_{X_m}(t)$ exists, since $X_m$ only can take a finite number of values for each $m$. By the monotonic increase of $D_{\alpha}$ with respect to $\alpha$ (Eq.~(\ref{increase})), and $D_{\alpha}\geq 0$, it follows that
\begin{equation}
0\geq \log_2 M_{X_m}(t) \geq -ta_m,\quad \forall t \in (0,1).
\end{equation}
Thus
\begin{equation}
\label{nlkvn}
\lim_{m\rightarrow \infty} M_{X_m}(t)=1,\quad \forall t \in (0,1).
\end{equation}
The constant $X \equiv 0$ has the moment generating function $M_{X}(t) \equiv 1$.
Hence, according to Proposition \ref{Indistribution}, Eq.~(\ref{nlkvn}) implies that $X_n$ converges in distribution to $X$.  Since the cumulative distribution function of $X$ is the step function, the convergence in distribution yields 
\begin{equation}
\label{xcnvkmlm}
\lim_{m\rightarrow\infty}P(X_{m} \leq x) = \left\{\begin{matrix}
0,\quad x < 0,\\
1,\quad x > 0.
\end{matrix} \right.
\end{equation}
(We do not care about $x=0$ as it is a point of discontinuity.)
A direct consequence of Eq.~(\ref{xcnvkmlm}) is
\begin{equation*}
\lim_{m\rightarrow \infty}P\boldsymbol{(}|W(\mathcal{P}_{m},\mathcal{N})-  \tilde{W}| > \delta\boldsymbol{)} = 0,\quad \forall  \delta>0.
\end{equation*}
In other words, $W(\mathcal{P}_{m},\mathcal{N})$ converges in probability to $\tilde{W}$.
\end{proof}


\subsection{\label{StandardDeviation} Standard deviation}
In Sec.~\ref{comparisons} we will compare the expected work content $\mathcal{A}(q,h)$ with the fluctuations in the optimal expected work extraction. Since we just have proved that the random work yield of any family of processes that achieves the optimal expected work extraction, converges in probability to the variable
\begin{equation}
\tilde{W} := F(h^f)-F(h^i) -kT\ln q_{\mathcal{N}} + kT\ln G_{\mathcal{N}}(h^i),
\end{equation}
it would be very convenient to use the standard deviation of $\tilde{W}$ (or of $W_{\textrm{yield}}$ in Eq.~(\ref{fluctyield})) for the comparison. 

Here we prove that for all sequences of processes that achieves the optimal expected work extraction, the minimal amount of noise is determined by the standard deviation of $\tilde{W}$. 
To this end we define $\sigma(X) := \sqrt{\langle X^2\rangle-\langle X\rangle^2}$, and note that 
\begin{equation}
\sigma(\tilde{W}) = kT\ln(2)\sigma\boldsymbol{(}q\Vert G(h^{i})\boldsymbol{)}
\end{equation}
where
\begin{equation}
\sigma(q\Vert r)^2 : = \sum_{n}q_n(\log_{2}\frac{q_n}{r_n})^2 - (\sum_{n}q_n\log_2\frac{q_n}{r_n})^2.
\end{equation}
(We discus this quantity further in Sec.~\ref{comparisons}.)

Here we prove (Proposition \ref{MinStdv}) that for all sequences of processes for which the expected work cost converges to $C(q,h^i,h^f)$, the resulting variance can in the limit not be smaller than $\sigma(\tilde{W})$. Furthermore (Proposition \ref{ExistMinStdv}) there exists a sequence of processes that simultaneously achieves the expected work cost $C(q,h^i,h^f)$ and the variance $\sigma(\tilde{W})$. 

To prove this we make use of the following two theorems, taken from \cite{Gut}.
\begin{Theorem}[Theorem 10.3 in \cite{Gut}.]
\label{ithysitkt}
Let $X$ and $(X_n)_{n\in\mathbb{N}}$ be random variables such that $X_n\rightarrow X$ in probability, and suppose that $h$ is a continuous function. Then $h(X_n)\rightarrow h(X)$ in probability. 
\end{Theorem}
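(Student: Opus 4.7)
The plan is to establish the conclusion directly from the definition of convergence in probability, using the fact that a continuous function on a compact set is uniformly continuous. The main subtlety is that $h$ is only assumed continuous, not uniformly so, and $X$ may have unbounded support; the idea is to truncate $X$ to a large but finite window where uniform continuity kicks in, and show that the complement contributes negligibly.

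More concretely, I fix $\delta>0$ and $\epsilon>0$ and aim to bound $P\bigl(|h(X_n)-h(X)|>\delta\bigr)$ by $\epsilon$ for all sufficiently large $n$. First I would choose $M>0$ so large that $P(|X|>M)<\epsilon/2$; this is possible because $X$ is a real-valued (hence a.s.\ finite) random variable, so $P(|X|>M)\to 0$ as $M\to\infty$. Next, since $h$ is continuous on the compact interval $[-M-1,M+1]$, it is uniformly continuous there, so there exists $\eta\in(0,1)$ such that $|x-y|<\eta$ with $x,y\in[-M-1,M+1]$ implies $|h(x)-h(y)|<\delta$. Finally, by the hypothesis that $X_n\to X$ in probability, I can pick $N$ such that $P(|X_n-X|>\eta)<\epsilon/2$ for all $n\geq N$.

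The key observation is that on the event $\{|X|\leq M\}\cap\{|X_n-X|\leq \eta\}$, both $X$ and $X_n$ lie in $[-M-1,M+1]$ (since $\eta<1$), and therefore by uniform continuity $|h(X_n)-h(X)|<\delta$. Hence
\begin{equation*}
P\bigl(|h(X_n)-h(X)|>\delta\bigr)\leq P(|X|>M)+P(|X_n-X|>\eta)<\epsilon,
\end{equation*}
for all $n\geq N$. Since $\delta$ and $\epsilon$ were arbitrary, this establishes $h(X_n)\to h(X)$ in probability.

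The only real obstacle I anticipate is the need to handle the possibly unbounded range of $X$ carefully; without the truncation step one would be stuck trying to apply uniform continuity globally, which need not hold. The truncation trick, combined with the fact that real random variables are tight on the line, resolves this cleanly and keeps the argument elementary.
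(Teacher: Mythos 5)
The paper does not prove this statement at all; it simply imports it as Theorem 10.3 of Gut's textbook, so there is no internal proof to compare against. Your truncation-plus-uniform-continuity argument is the standard proof of the continuous mapping theorem for convergence in probability and is correct; the only cosmetic point is that on the good event you have $|X_n-X|\leq\eta$ rather than $<\eta$, which you can absorb by choosing $\eta$ to be half of the modulus supplied by uniform continuity.
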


\begin{Theorem}[Theorem 5.3 in \cite{Gut}.]
\label{xbnnbx}
Let $X$ and $(X_n)_{n\in\mathbb{N}}$ be random variables, and suppose that $X_{n}\rightarrow X$ in probability, then 
\begin{equation}
\langle |X|\rangle \leq \liminf_{n\rightarrow \infty} \langle |X_n|\rangle. 
\end{equation}
\end{Theorem}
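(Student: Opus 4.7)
The plan is to reduce to Fatou's lemma by way of the subsequence principle. Direct application of Fatou is not available because the hypothesis is only convergence in probability, which does not give pointwise (a.s.) convergence of $|X_n|$ to $|X|$; however, any subsequence of a sequence converging in probability still converges in probability to the same limit, and from any such sequence one can extract a further subsequence converging almost surely. This is the standard bridge from convergence in probability to the tools of measure-theoretic integration.

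Concretely, first I would set $L := \liminf_{n\to\infty}\langle|X_n|\rangle$. If $L = +\infty$ the inequality is trivial, so assume $L<\infty$. Choose a subsequence $(n_k)_{k\in\mathbb{N}}$ realising the liminf, i.e.\ $\langle|X_{n_k}|\rangle \to L$ as $k\to\infty$. Since $X_{n_k}\to X$ in probability, the Riesz subsequence theorem gives a further subsequence $(n_{k_j})_{j\in\mathbb{N}}$ such that $X_{n_{k_j}}\to X$ almost surely. The extraction is done by selecting $n_{k_j}$ so that $P(|X_{n_{k_j}}-X|>2^{-j})<2^{-j}$ and invoking Borel--Cantelli, a construction that is routine and may be cited rather than repeated.

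Second, since the map $x\mapsto|x|$ is continuous, $|X_{n_{k_j}}|\to|X|$ almost surely. Applying Fatou's lemma to the non-negative random variables $|X_{n_{k_j}}|$ yields
\begin{equation*}
\langle|X|\rangle \;=\; \Bigl\langle\liminf_{j\to\infty}|X_{n_{k_j}}|\Bigr\rangle \;\leq\; \liminf_{j\to\infty}\langle|X_{n_{k_j}}|\rangle \;=\; L,
\end{equation*}
where the last equality holds because $(\langle|X_{n_{k_j}}|\rangle)_j$ is a subsequence of the convergent sequence $(\langle|X_{n_k}|\rangle)_k$ and therefore shares its limit. This is exactly the asserted bound.

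There is no real obstacle here; the only point that requires care is the passage from convergence in probability to almost-sure convergence, which is handled precisely by extracting the further subsequence above. Invoking Theorem~\ref{ithysitkt} is an alternative way to obtain $|X_{n_{k_j}}|\to|X|$ in probability from $X_{n_{k_j}}\to X$ in probability, but since we have already strengthened to almost-sure convergence along this subsequence we get the pointwise limit needed for Fatou for free.
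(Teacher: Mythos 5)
Your proof is correct: the reduction to a liminf-realising subsequence, the Riesz extraction of a further almost-surely convergent subsequence, and the application of Fatou's lemma together give exactly the claimed inequality, and the edge case $L=+\infty$ is handled. The paper does not prove this statement at all --- it imports it verbatim as Theorem~5.3 of \cite{Gut} --- and your argument is precisely the standard textbook proof of that result, so there is nothing to compare beyond noting that you have supplied the proof the paper delegates to its reference.
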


\begin{Proposition}
\label{MinStdv}
Let $h^{i},h^{f}\in\mathbb{R}^{N}$, and let $\mathcal{N}$ be a random variable with distribution $q\in\mathbb{P}(N)$.
If  $(\mathcal{P}_{m})_{m\in\mathbb{N}}$ with $\mathcal{P}_{m}\in \mathscr{P}(h^i,h^f)$ is such that 
\begin{equation}
\label{yncykn}
\begin{split}
\lim_{m\rightarrow\infty }\langle W(\mathcal{P}_m,\mathcal{N})\rangle = & F(h^f)-F(h^i)\\
& -kT\ln(2)D\boldsymbol{(}q\Vert G(h^i)\boldsymbol{)}, 
\end{split}
\end{equation}
then 
\begin{equation}
\label{wqwefee}
\liminf_{m\rightarrow\infty } \sigma\boldsymbol{(}W(\mathcal{P}_m,\mathcal{N})\boldsymbol{)} \geq  kT\sigma\boldsymbol{(}q\Vert G(h^i)\boldsymbol{)}.
\end{equation}
\end{Proposition}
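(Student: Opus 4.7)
The plan is to lift the in-probability convergence already guaranteed by Proposition \ref{IntrinsicFluctuations} into a lower bound on the limiting variance, using the two convergence theorems quoted just above. Define, as in Eq.~(\ref{overlWdef2}),
\begin{equation*}
\tilde{W} := F(h^f)-F(h^i) - kT \ln\frac{q_{\mathcal{N}}}{G_{\mathcal{N}}(h^i)},
\end{equation*}
and note that $\tilde{W}$ takes only finitely many values, so in particular $\langle \tilde{W}^2\rangle < \infty$. By Proposition \ref{IntrinsicFluctuations}, the hypothesis (\ref{yncykn}) forces $W(\mathcal{P}_m,\mathcal{N}) \to \tilde{W}$ in probability. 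Applying the continuous mapping Theorem \ref{ithysitkt} with the map $x\mapsto x^2$ then gives $W(\mathcal{P}_m,\mathcal{N})^2 \to \tilde{W}^2$ in probability as well.

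Next I would invoke the Fatou-type Theorem \ref{xbnnbx} for the non-negative sequence $W(\mathcal{P}_m,\mathcal{N})^2$ to conclude
\begin{equation*}
\langle \tilde{W}^2\rangle \leq \liminf_{m\rightarrow\infty}\langle W(\mathcal{P}_m,\mathcal{N})^2\rangle.
\end{equation*}
On the other hand, Eq.~(\ref{idealworkcost}) identifies the right-hand side of the hypothesis (\ref{yncykn}) as exactly $\langle \tilde{W}\rangle$, so $\langle W(\mathcal{P}_m,\mathcal{N})\rangle$ converges to the constant $\langle \tilde{W}\rangle$ and hence $\langle W(\mathcal{P}_m,\mathcal{N})\rangle^2 \to \langle \tilde{W}\rangle^2$. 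Combining these two facts through the elementary inequality $\liminf(a_m-b_m) \geq \liminf a_m - \lim b_m$, valid whenever $b_m$ converges, yields
\begin{equation*}
\liminf_{m\rightarrow\infty}\sigma\boldsymbol{(}W(\mathcal{P}_m,\mathcal{N})\boldsymbol{)}^2 \geq \langle \tilde{W}^2\rangle - \langle \tilde{W}\rangle^2 = \sigma(\tilde{W})^2.
\end{equation*}

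To finish, I would compute $\sigma(\tilde{W})$ directly from the definition of $\tilde{W}$: the constant pieces $F(h^f)-F(h^i)$ drop out of the variance, leaving the variance of $-kT\ln(q_{\mathcal{N}}/G_{\mathcal{N}}(h^i))$, which by the definition of $\sigma(q\Vert r)$ given just before the proposition is $[kT\ln(2)]^2\sigma\boldsymbol{(}q\Vert G(h^i)\boldsymbol{)}^2$. Taking square roots then delivers (\ref{wqwefee}).

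The main obstacle, such as it is, has already been handled by Proposition \ref{IntrinsicFluctuations}, which pinpoints the in-probability limit. Once that is in hand the argument above is the standard passage from convergence in probability plus convergence of means to lower semicontinuity of the variance; the only mildly delicate point is justifying the split of $\liminf$ across the difference $\langle W_m^2\rangle - \langle W_m\rangle^2$, which is unproblematic here because the subtracted sequence genuinely converges rather than merely being bounded.
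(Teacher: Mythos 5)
Your proof is correct and follows essentially the same route as the paper's own: Proposition \ref{IntrinsicFluctuations} together with the continuous-mapping Theorem \ref{ithysitkt} gives convergence in probability of the squares, the Fatou-type Theorem \ref{xbnnbx} gives lower semicontinuity of the second moment, and convergence of the means (which justifies splitting the $\liminf$ across the difference) finishes the argument. One small remark: your computation correctly yields the bound $kT\ln(2)\,\sigma\boldsymbol{(}q\Vert G(h^i)\boldsymbol{)}$, in agreement with the identity $\sigma(\tilde{W}) = kT\ln(2)\,\sigma\boldsymbol{(}q\Vert G(h^i)\boldsymbol{)}$ displayed just before the proposition, so the absence of the factor $\ln(2)$ in Eq.~(\ref{wqwefee}) is evidently a typo in the statement rather than a defect of your argument.
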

\begin{proof}
By combining Eq.~(\ref{yncykn}), Proposition \ref{IntrinsicFluctuations}, and Theorem \ref{ithysitkt} we can conclude that 
\begin{equation*}
W(\mathcal{P}_m,\mathcal{N})^2 \rightarrow \Big( F(h^f)-F(h^i) -kT\ln\frac{q_{\mathcal{N}}}{G_{\mathcal{N}}(h^i)}\Big)^2
\end{equation*}
in probability. Theorem \ref{xbnnbx} yields
\begin{equation*}
\Big\langle  \Big[ F(h^f)-F(h^i) -kT\ln\frac{q_{\mathcal{N}}}{G_{\mathcal{N}}(h^i)}\Big]^2 \Big\rangle \leq \liminf_{n\rightarrow \infty} \langle W(\mathcal{P}_m,\mathcal{N})^2 \rangle. 
\end{equation*}
(The variances $\sigma\boldsymbol{(}W(\mathcal{P}_m,\mathcal{N})\boldsymbol{)}$  and $\sigma(\tilde{W})$ exist, as these random variables only take a finite number of values.)
Combined with Eq.~(\ref{yncykn}) it follows that Eq.~(\ref{wqwefee}) holds.
\end{proof}

\begin{Proposition}
\label{ExistMinStdv}
Let $h^{i},h^{f}\in\mathbb{R}^{N}$, and let $\mathcal{N}$ be a random variable with distribution $q\in\mathbb{P}(N)$.
Then there exists  a sequence of processes $(\mathcal{P}_{m})_{m\in\mathbb{N}}$ with $\mathcal{P}_{m}\in \mathscr{P}(h^i,h^f)$ such that 
\begin{equation}
\begin{split}
\lim_{m\rightarrow\infty }\langle W(\mathcal{P}_m,\mathcal{N})\rangle = & F(h^f)-F(h^i)\\
& -kT\ln(2)D\boldsymbol{(}q\Vert G(h^i)\boldsymbol{)}, 
\end{split}
\end{equation}
and 
\begin{equation}
\lim_{m\rightarrow\infty } \sigma\boldsymbol{(}W(\mathcal{P}_m,\mathcal{N})\boldsymbol{)} =  kT\sigma\boldsymbol{(}q\Vert G(h^i)\boldsymbol{)}.
\end{equation}
\end{Proposition}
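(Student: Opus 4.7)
The plan is to show that the very sequence $(\mathcal{P}_{m})_{m\in\mathbb{N}}$ already constructed at the end of the proof of Proposition~\ref{OptimalExpectedWork}, after a mild strengthening of its third stage, automatically achieves the claimed limiting standard deviation as well, so that no fundamentally new construction is required. Recall that $\mathcal{P}_{m}$ is built from three consecutive pieces: an LT $\mathcal{P}^{(1)}_{m}$ taking $h^{i}$ to the regularized configuration $h'(m)$ with $h'_{n}(m)=-kT\ln q_{n}$ when $q_{n}>0$ and $h'_{n}(m)=m$ otherwise; a thermalization producing an independent state $\mathcal{N}'_{m}$ with distribution $G(h'(m))$; and a finite-step ITR process $\mathcal{P}^{(2)}_{m}\in\mathscr{P}(h'(m),h^{f})$ whose discretization we remain free to refine. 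Convergence of $\langle W(\mathcal{P}_{m},\mathcal{N})\rangle$ to the optimal value was established in the proof of Proposition~\ref{OptimalExpectedWork}, so the entire task here is to analyze the variance.

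The key structural observation is that the thermalization separating $\mathcal{P}^{(1)}_{m}$ from $\mathcal{P}^{(2)}_{m}$ renders the work costs of these two pieces statistically independent, so that
\begin{equation*}
\sigma\bigl(W(\mathcal{P}_{m},\mathcal{N})\bigr)^{2}=\sigma\bigl(W(\mathcal{P}^{(1)}_{m},\mathcal{N})\bigr)^{2}+\sigma\bigl(W(\mathcal{P}^{(2)}_{m},\mathcal{N}'_{m})\bigr)^{2}.
\end{equation*}
For the LT piece, $W(\mathcal{P}^{(1)}_{m},\mathcal{N})=h'_{\mathcal{N}}(m)-h^{i}_{\mathcal{N}}$. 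Since states with $q_{n}=0$ occur with probability zero under $q$, and since $h'_{n}(m)=-kT\ln q_{n}$ is $m$-independent whenever $q_{n}>0$, this random variable has the same distribution for every $m$. Using the identity $h^{i}_{n}=F(h^{i})-kT\ln G_{n}(h^{i})$ from Eq.~(\ref{mfnxkdjng}), it can be rewritten as $-F(h^{i})-kT\ln\bigl(q_{\mathcal{N}}/G_{\mathcal{N}}(h^{i})\bigr)$, whose variance is then a direct computation giving $(kT\ln 2)^{2}\sigma\bigl(q\Vert G(h^{i})\bigr)^{2}$, precisely matching the lower bound of Proposition~\ref{MinStdv}.

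The remaining step is to drive the ITR contribution to zero. The explicit variance formula (\ref{variance}) shows that, for any fixed smooth bounded path from $h'(m)$ to $h^{f}$, an $L$-step discretization has ITR work-cost variance of order $1/L$. For each $m$ we may therefore tighten the choice of $\mathcal{P}^{(2)}_{m}$ so as to satisfy both the expectation bound from the original proof and an additional bound $\sigma\bigl(W(\mathcal{P}^{(2)}_{m},\mathcal{N}'_{m})\bigr)\leq 1/m$, by taking the discretization fine enough. Substituting these two pieces back into the variance decomposition yields $\lim_{m}\sigma\bigl(W(\mathcal{P}_{m},\mathcal{N})\bigr)=kT\ln(2)\,\sigma\bigl(q\Vert G(h^{i})\bigr)$, as required.

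The one delicate point, and the main obstacle worth flagging, is that $h'(m)$ itself diverges in the components where $q_{n}=0$, so the ITR path between $h'(m)$ and $h^{f}$ and its first derivatives grow with $m$, and the constant implicit in the $O(1/L)$ bound from Eq.~(\ref{variance}) worsens accordingly. This rules out any uniform-in-$m$ choice of $L$, but it is harmless for our purposes: we are free to pick $L_{m}$ as large as we please for each individual $m$, so if the bound takes the form $C_{m}/L_{m}$ we simply choose $L_{m}\geq mC_{m}$ and the ITR variance still vanishes as $m\to\infty$.
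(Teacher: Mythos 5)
Your proposal is correct and follows essentially the same route as the paper, which itself only sketches the proof by pointing to the sequence of processes from Proposition~\ref{OptimalExpectedWork} together with the observation that the approximate ITR stage can be chosen with arbitrarily small standard deviation; you have simply filled in the variance decomposition, the computation for the initial LT, and the per-$m$ choice of discretization, all of which are sound. (Your limiting value $kT\ln(2)\,\sigma\boldsymbol{(}q\Vert G(h^{i})\boldsymbol{)}$ is the one consistent with the paper's base-2 definition of $\sigma(q\Vert r)$; the missing $\ln 2$ in the statement of the proposition appears to be a typo.)
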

The proof of the above proposition uses the sequence of processes constructed in the proof of Proposition \ref{OptimalExpectedWork} (more precisely the proof of Eq.~(\ref{dtzkjdk})), with the addition that we use the fact that we always can find an approximate ITR with arbitrarily small standard deviation.


\section{\label{Properties} $(\epsilon,\delta)$-deterministic values}

Here we construct a cost function that favors work cost variables that have a sufficiently narrow distribution, i.e, for which we are more or less certain of what the work cost will be.
\begin{Definition}
Given a  real-valued random variable $X$, and $0 < \epsilon \leq 1$ and $0 \leq \delta<+\infty$, we say that $x\in\mathbb{R}$ is an $(\epsilon,\delta)$-deterministic value of $X$ if $P(|X -x| \leq \delta)> 1-\epsilon$. 
We denote the set of all $(\epsilon,\delta)$-deterministic values of $X$ as 
\begin{equation}
\Delta_{\delta}^{\epsilon}(X) := \{x\in\mathbb{R}: P( |X-x| \leq  \delta)> 1-\epsilon\}.
\end{equation}
\end{Definition}
Note that it may very well be the case $X$ does not have any $(\epsilon,\delta)$-deterministic value, or that it has more than one  $(\epsilon,\delta)$-deterministic value.

As we apply this concept, the random variable $X$ will  typically be the work cost of a given process, i.e., $W(\mathcal{P},\mathcal{N})$. The set $\Delta_{\delta}^{\epsilon}\boldsymbol{(}W(\mathcal{P},\mathcal{N})\boldsymbol{)}$ corresponds to work values around which the distribution is sufficiently peaked. Since our goal is to find the minimal work cost, we select the `smallest' of these sufficiently concentrated work costs, or more precisely, the infimum $\inf\!\Delta_{\delta}^{\epsilon}\boldsymbol{(}W(\mathcal{P},\mathcal{N})\boldsymbol{)}$. The quantity $\inf\!\Delta_{\delta}^{\epsilon}(X)$ will serve as the cost function that defines what, e.g.,  $(\epsilon,\delta)$-deterministic work content is (Sec~\ref{Sec:AlmDetWorkExtr}).
It is maybe worth pointing out that  $\inf\!\Delta_{\delta}^{\epsilon}\boldsymbol{(}W(\mathcal{P},\mathcal{N})\boldsymbol{)}$ is only the cost of one fixed process $\mathcal{P}$. To obtain the cost of extraction and information erasure we still need to minimize over the set of allowed processes.

In the following we establish some properties of the quantity $\inf\!\Delta_{\delta}^{\epsilon}(X)$ that will prove useful in the subsequent derivations.

We first note that $\inf\!\Delta_{\delta}^{\epsilon}(X) = +\infty$ if and only if $\Delta_{\delta}^{\epsilon}(X) = \emptyset$.  With the assumptions $0 < \epsilon <1$ and $0 \leq \delta <+\infty$ it follows that $-\infty <\inf\!\Delta_{\delta}^{\epsilon}(X)$.

The following two lemmas show that $ \inf\!\Delta_{\delta}^{\epsilon}(X)$ decreases monotonically with increasing $\epsilon$ and $\delta$.
\begin{Lemma}
\label{decreaseineps}
Let $X$ be a real-valued random variable, and let $0 < \epsilon \leq \epsilon' \leq 1$ and $0\leq \delta <+\infty$, then 
\begin{equation}
\label{nsdklvn}
\Delta_{\delta}^{\epsilon}(X) \subseteq \Delta_{\delta}^{\epsilon'}(X),
\end{equation}
and thus 
\begin{equation}
\label{ynkfdb}
 \inf\!\Delta_{\delta}^{\epsilon}(X) \geq  \inf\!\Delta_{\delta}^{\epsilon'}(X). 
\end{equation}
\end{Lemma}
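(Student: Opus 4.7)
The plan is to read off both claims directly from the definition of $\Delta_{\delta}^{\epsilon}(X)$, using only the monotonicity of $1-\epsilon$ in $\epsilon$ and the standard fact that $A\subseteq B$ implies $\inf A \geq \inf B$.

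First I would prove the set inclusion \eqref{nsdklvn}. Fix any $x\in\Delta_{\delta}^{\epsilon}(X)$, so by definition $P(|X-x|\leq\delta)>1-\epsilon$. Since $\epsilon\leq\epsilon'$ we have $1-\epsilon\geq 1-\epsilon'$, and therefore
\begin{equation*}
P(|X-x|\leq\delta) > 1-\epsilon \geq 1-\epsilon',
\end{equation*}
which means $x\in\Delta_{\delta}^{\epsilon'}(X)$. This gives $\Delta_{\delta}^{\epsilon}(X)\subseteq\Delta_{\delta}^{\epsilon'}(X)$.

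Second, I would deduce \eqref{ynkfdb} from the inclusion. If $\Delta_{\delta}^{\epsilon}(X)=\emptyset$ then $\inf\!\Delta_{\delta}^{\epsilon}(X)=+\infty$ by the convention noted just before the lemma, and the inequality is trivial. Otherwise every element of $\Delta_{\delta}^{\epsilon}(X)$ is also an element of $\Delta_{\delta}^{\epsilon'}(X)$, so any lower bound of $\Delta_{\delta}^{\epsilon'}(X)$ is a lower bound of $\Delta_{\delta}^{\epsilon}(X)$; taking the greatest such bound yields $\inf\!\Delta_{\delta}^{\epsilon}(X)\geq\inf\!\Delta_{\delta}^{\epsilon'}(X)$.

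There is no real obstacle here; the content of the lemma is a direct bookkeeping consequence of the definition, and the only point that requires a moment's attention is the empty-set edge case, which is handled by the convention $\inf\emptyset=+\infty$ implicit in the remarks just preceding the lemma.
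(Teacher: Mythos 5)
Your proof is correct and follows essentially the same route as the paper's: unpack the definition, use $1-\epsilon\geq 1-\epsilon'$ to get the inclusion, and pass to infima (with the $\inf\emptyset=+\infty$ convention for the empty case). If anything, you are slightly more careful than the paper's own write-up, which states the defining condition with $\geq$ where the definition uses a strict inequality.
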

\begin{proof}
If $ \inf\!\Delta_{\delta}^{\epsilon}(X) = +\infty$ the lemma is trivially true. Hence, without loss of generality we assume $\inf\!\Delta_{\delta}^{\epsilon}(X) < +\infty$, and thus $\Delta_{\delta}^{\epsilon}(X)$ is non-empty.  If $x\in \Delta_{\delta}^{\epsilon}(X)$, then $P(|X-x|\leq \delta) \geq 1- \epsilon$. Since  $\epsilon \leq \epsilon'$, this implies $P(|X-x|\leq \delta) \geq 1- \epsilon'$, and thus Eq.~(\ref{nsdklvn}) holds. This immediately implies Eq.~(\ref{ynkfdb}).
\end{proof}

\begin{Lemma}
\label{decreaseindelta}
Let $X$ be a real-valued random variable, and let  $0< \epsilon \leq 1$, and $0\leq \delta \leq \delta'<+\infty$, then 
\begin{equation}
\label{ndslvakn}
  \Delta_{\delta}^{\epsilon}(X)\subseteq  \Delta_{\delta'}^{\epsilon}(X),
\end{equation}
and thus 
\begin{equation}
\label{dfdlkfb}
 \inf\!\Delta_{\delta}^{\epsilon}(X) \geq  \inf\!\Delta_{\delta'}^{\epsilon}(X). 
\end{equation}
\end{Lemma}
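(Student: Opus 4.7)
The plan is to mimic the proof of Lemma \ref{decreaseineps} essentially verbatim, exchanging the role of the tolerance parameter $\epsilon$ for the precision parameter $\delta$. The key observation is purely set-theoretic: enlarging $\delta$ enlarges the event on which $X$ is within the required window of $x$, so it can only make things easier, not harder, to qualify as an $(\epsilon,\delta)$-deterministic value.

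First, I would dispose of the trivial case $\Delta_{\delta}^{\epsilon}(X)=\emptyset$, where Eq.~(\ref{ndslvakn}) holds vacuously and Eq.~(\ref{dfdlkfb}) follows from the convention $\inf\emptyset = +\infty$. Otherwise, pick any $x\in\Delta_{\delta}^{\epsilon}(X)$. By definition, $P(|X-x|\leq\delta) > 1-\epsilon$. Since $\delta\leq\delta'$, the event $\{|X-x|\leq\delta\}$ is contained in $\{|X-x|\leq\delta'\}$, hence by monotonicity of probability
\begin{equation*}
P(|X-x|\leq\delta') \;\geq\; P(|X-x|\leq\delta) \;>\; 1-\epsilon,
\end{equation*}
so $x\in\Delta_{\delta'}^{\epsilon}(X)$, establishing Eq.~(\ref{ndslvakn}).

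From the set inclusion Eq.~(\ref{ndslvakn}), the infimum inequality Eq.~(\ref{dfdlkfb}) is immediate: the infimum of a subset is never smaller than the infimum of the superset. There is no real obstacle here; the lemma is a direct counterpart to Lemma \ref{decreaseineps}, and the only thing to be mildly careful about is the empty-set edge case and the convention for $\inf\emptyset$, both of which the text has already implicitly adopted in the discussion preceding the two lemmas.
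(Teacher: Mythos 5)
Your proof is correct and follows essentially the same route as the paper's: handle the empty/trivial case, then use the inclusion of events $\{|X-x|\leq\delta\}\subseteq\{|X-x|\leq\delta'\}$ and monotonicity of probability to get the set inclusion, from which the infimum inequality is immediate. (If anything, your version is slightly more careful than the paper's, which writes the defining condition with a non-strict inequality $\geq 1-\epsilon$ where the definition uses $>1-\epsilon$.)
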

\begin{proof}
If $ \inf\!\Delta_{\delta}^{\epsilon}(X) = + \infty$, the lemma is trivially true. We thus assume $ \inf\!\Delta_{\delta}^{\epsilon}(X) < +\infty$, and hence $\Delta_{\delta}^{\epsilon}(X)$ is non-empty. If $x\in \Delta_{\delta}^{\epsilon}(X)$, then $P(|X-x|\leq \delta) \geq 1- \epsilon$. Since $ 0\leq\delta \leq \delta' $ it follows that  $P(|X -x|\leq \delta') \geq P(|X-x| \leq \delta)$ for every $x\in\mathbb{R}$. From this we can conclude that Eq.~(\ref{ndslvakn}) holds, which also proves Eq.~(\ref{dfdlkfb}).
\end{proof}

\begin{Lemma}
\label{lowerbound}
Let $X$ be a real valued random variable,  and let $0 < \epsilon \leq \frac{1}{2}$ and $0 \leq\delta<+\infty$.
If there exists a real number $x$ such that 
\begin{equation}
\label{sndfklbvnfdb}
P( |X-x| \leq \delta)> 1-\epsilon
\end{equation}
then
\begin{equation}
\label{dfmngkl}
x-2\delta \leq  \inf\!\Delta_{\delta}^{\epsilon}(X) 
\end{equation}
\end{Lemma}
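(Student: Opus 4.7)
The plan is to argue by contradiction, exploiting disjointness of the two tolerance intervals.

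First, I would unpack the goal. The quantity $\inf\!\Delta_{\delta}^{\epsilon}(X)$ is the infimum over all $y\in\mathbb{R}$ with $P(|X-y|\leq\delta) > 1-\epsilon$; the hypothesis tells us that $\Delta_{\delta}^{\epsilon}(X)$ is nonempty because $x$ itself lies in it. So the statement to prove is that no admissible $y$ can be strictly smaller than $x-2\delta$.

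Next, I would suppose for contradiction that there is some $y \in \Delta_{\delta}^{\epsilon}(X)$ with $y < x - 2\delta$. The key geometric observation is that the two closed intervals $[y-\delta, y+\delta]$ and $[x-\delta, x+\delta]$ are then disjoint, since $y+\delta < x-\delta$. Therefore the events $A := \{|X-x|\leq\delta\}$ and $B := \{|X-y|\leq\delta\}$ are disjoint, and by additivity $P(A\cup B) = P(A) + P(B)$. The hypothesis gives $P(A) > 1-\epsilon$, and the assumption $y \in \Delta_{\delta}^{\epsilon}(X)$ gives $P(B) > 1-\epsilon$. Adding and using $\epsilon \leq 1/2$,
\begin{equation*}
P(A\cup B) > 2(1-\epsilon) \geq 1,
\end{equation*}
contradicting the fact that $P$ is a probability measure. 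Hence every $y\in\Delta_{\delta}^{\epsilon}(X)$ satisfies $y\geq x-2\delta$, which yields Eq.~(\ref{dfmngkl}).

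There is no real obstacle here; the only subtlety to be careful about is that both probability bounds are strict, which is what makes the case $\epsilon = 1/2$ still go through (we obtain $P(A\cup B) > 1$ rather than merely $\geq 1$). I would also mention that the constraint $\epsilon \leq 1/2$ is essential: if $\epsilon$ were allowed to exceed $1/2$, the union bound would no longer force the two intervals to overlap, and the lemma would fail in general.
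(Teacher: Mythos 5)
Your proof is correct and is essentially the same as the paper's: both argue by contradiction that any admissible value more than $2\delta$ away from $x$ would give two disjoint $\delta$-intervals each carrying probability strictly greater than $1-\epsilon$, forcing total probability above $1$ when $\epsilon\leq\tfrac{1}{2}$. Your handling of the strict inequalities at $\epsilon=\tfrac{1}{2}$ matches the paper's reasoning exactly.
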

\begin{proof}
First note that Eq.~(\ref{sndfklbvnfdb}) implies $\Delta_{\delta}^{\epsilon}(X) \neq \emptyset$.
Suppose there is an $x'\in \Delta_{\delta}^{\epsilon}(X)$ such that  $|x-x'| > 2\delta$.  It follows that $\{z\in\mathbb{R}: |z-x|\leq \delta\} \cap \{z\in\mathbb{R}: |z-x'|\leq \delta\} = \emptyset$. Hence, $P(\{z\in\mathbb{R}: |z-x|\leq \delta\} \cap \{z\in\mathbb{R}: |z-x'|\leq \delta\}) = P(|X-x|\leq \delta) + P(|X-x'|\leq \delta) > 2-2\epsilon \geq 1$, which is a contradiction. 
Thus we must conclude  $|x-x'| \leq 2\delta$. Since this is true for all $x'\in \Delta_{\delta}^{\epsilon}(X)$ we can conclude that Eq.~(\ref{dfmngkl}) holds.
\end{proof}

\begin{Lemma}
\label{finite}
Let $X$ and $Y$ be two independent real-valued random variables, and let $0 < \epsilon < 1$ and $0\leq \delta<+\infty$. Then 
\begin{equation*}
  \Delta_{\delta}^{\epsilon}(X+Y) \neq \emptyset \,\,\,\Rightarrow\,\,\, \Delta_{\delta}^{\epsilon}(X) \neq \emptyset ,\,\,  \Delta_{\delta}^{\epsilon}(Y) \neq \emptyset.  
\end{equation*}
\end{Lemma}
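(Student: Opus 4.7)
The plan is to reduce the joint concentration of $X+Y$ to concentration of $X$ alone (and, symmetrically, of $Y$) by using independence and a Fubini-type averaging argument. The intuition is: if $X+Y$ is tightly concentrated around some $z$, then by conditioning on typical values of $Y$, the variable $X$ must be tightly concentrated around a shifted point $z-y_0$ for at least one such $y_0$.

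Concretely, I would first fix some $z\in\Delta_{\delta}^{\epsilon}(X+Y)$, which by definition satisfies $P(|X+Y-z|\leq\delta)>1-\epsilon$. Using the independence of $X$ and $Y$ together with Fubini's theorem, I would rewrite this probability as an iterated integral,
\begin{equation*}
P(|X+Y-z|\leq\delta) \;=\; \int f(y)\,dP_Y(y),
\end{equation*}
where $f(y):=P(|X-(z-y)|\leq\delta)$. Since $0\leq f\leq 1$ and its integral with respect to $P_Y$ strictly exceeds $1-\epsilon$, the set $\{y: f(y)>1-\epsilon\}$ must carry positive $P_Y$-measure; otherwise $f\leq 1-\epsilon$ almost everywhere and the integral would be bounded by $1-\epsilon$, contradicting the assumption.

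Picking any $y_0$ in that set and setting $x_0:=z-y_0$, I would conclude $P(|X-x_0|\leq\delta)=f(y_0)>1-\epsilon$, so $x_0\in\Delta_{\delta}^{\epsilon}(X)$ and in particular $\Delta_{\delta}^{\epsilon}(X)\neq\emptyset$. The statement for $Y$ then follows immediately by exchanging the roles of $X$ and $Y$ in the argument (or by symmetry of $X+Y=Y+X$).

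There is no genuine obstacle here; the only point requiring any care is the measure-theoretic step that a non-negative bounded function whose integral against a probability measure strictly exceeds a constant $c$ must exceed $c$ on a set of positive measure, which is a standard contrapositive argument. No earlier results from the excerpt are needed, only independence and elementary integration.
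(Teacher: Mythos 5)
Your proof is correct. The paper reaches the same conclusion by a slightly different route: it invokes L\'evy's concentration function $Q(X,\delta)=\sup_{s}P(|X-s|\leq\delta)$ and the inequality $Q(X+Y,\delta)\leq\min[Q(X,\delta),Q(Y,\delta)]$, cited from Petrov, and then has to unpack the supremum with an auxiliary $\xi>0$ to extract an actual point $x'$ with $P(|X-x'|\leq\delta)>1-\epsilon$. Your Fubini argument is essentially the proof of that concentration-function inequality done inline, but it buys you two small things: the argument is self-contained (no external lemma needed), and because you locate a $y_0$ with $f(y_0)>1-\epsilon$ directly from the strict inequality on the integral, you avoid the epsilon-of-room step entirely. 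The only point worth stating explicitly is the measurability of $y\mapsto P(|X-(z-y)|\leq\delta)$, which is standard from Fubini--Tonelli applied to the closed set $\{(x,y):|x+y-z|\leq\delta\}$. Both proofs are valid; yours is arguably the cleaner presentation for this specific statement, while the paper's version makes the connection to the concentration function explicit, which is conceptually natural here since the lemma is precisely a statement about non-spreading of concentration under independent sums.
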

\begin{proof}
The function $Q(X,\delta):= \sup_{s\in\mathbb{R}}P(|X-s|\leq \delta)$ is sometimes referred to as  Levy's  concentration function \cite{Petrov}. For two independent random variables $X$ and $Y$ it can be shown (see Lemma 1.11 in \cite{Petrov}) that  
\begin{equation}
\label{concentrfcn}
Q(X+Y,\delta) \leq \min[Q(X,\delta),Q(Y,\delta)].
\end{equation}
If we assume  $\Delta_{\delta}^{\epsilon}(X+Y) \neq \emptyset$ it implies that there exists a $z$ such that 
$P(|X+Y-z|\leq \delta)>1-\epsilon$. By Eq.~(\ref{concentrfcn}) we can thus conclude that 
$1-\epsilon < Q(X,\delta)$.
By the properties of the supremum, it follows that for every $\xi>0$ there exists an $x'$ such that $Q(X,\delta)-\xi < P(|X-x'|\leq \delta)$. Since $1-\epsilon <Q(X,\delta)$ it follows that we can find a $\xi>0$, and a corresponding $x'$, such that $1-\epsilon < Q(X,\delta)-\xi < P(|X-x'|\leq \delta)$. Thus, 
$x'\in  \Delta_{\delta}^{\epsilon}(X)$, and hence $ \Delta_{\delta}^{\epsilon}(X)\neq \emptyset$.
By an equivalent argument $ \Delta_{\delta}^{\epsilon}(Y)\neq \emptyset$.
\end{proof}

\begin{Lemma}
\label{mindetaddition}
Let $X$ and $Y$ be two independent real-valued random variables. Let $0 < \epsilon \leq 1-\frac{1}{\sqrt{2}}$ and $0\leq\delta <+\infty$. Then 
\begin{equation}
\label{nklfdb}
 \inf\!\Delta_{\delta}^{\epsilon}(X) +  \inf\!\Delta_{\delta}^{\epsilon}(Y) -4\delta\leq   \inf\!\Delta_{\delta}^{\epsilon}(X+Y).
\end{equation}
\end{Lemma}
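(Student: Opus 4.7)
The plan is to bound $\inf\Delta_\delta^\epsilon(X+Y)$ from below by building it up from $(\epsilon,\delta)$-deterministic values of $X$ and $Y$ separately, rather than starting from a generic $z\in\Delta_\delta^\epsilon(X+Y)$. First I would clear the degenerate case: if $\Delta_\delta^\epsilon(X)$ or $\Delta_\delta^\epsilon(Y)$ is empty, then the contrapositive of Lemma \ref{finite} makes $\Delta_\delta^\epsilon(X+Y)$ empty as well, so both sides of Eq.~(\ref{nklfdb}) equal $+\infty$ and the inequality holds trivially. Otherwise I fix arbitrary $x^*\in\Delta_\delta^\epsilon(X)$ and $y^*\in\Delta_\delta^\epsilon(Y)$, and it suffices to establish $x^*+y^*-4\delta \leq \inf\Delta_\delta^\epsilon(X+Y)$; taking the infimum separately over $x^*$ and over $y^*$ then recovers Eq.~(\ref{nklfdb}).

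The main step rides on independence. The defining inequalities $P(|X-x^*|\leq\delta)>1-\epsilon$ and $P(|Y-y^*|\leq\delta)>1-\epsilon$ combine, via independence of $X$ and $Y$, into a joint probability strictly above $(1-\epsilon)^2$. The hypothesis $\epsilon\leq 1-1/\sqrt{2}$ is precisely what guarantees $(1-\epsilon)^2\geq 1/2$, so the joint event has probability strictly above $1/2$, and on that event the triangle inequality forces $|(X+Y)-(x^*+y^*)|\leq 2\delta$. Hence $x^*+y^*$ lies in $\Delta_{2\delta}^{1/2}(X+Y)$, which is the hook I need.

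I would then invoke Lemma \ref{lowerbound} applied to the random variable $X+Y$ with tolerance $\epsilon=1/2$ and width $2\delta$, obtaining $x^*+y^*-4\delta \leq \inf\Delta_{2\delta}^{1/2}(X+Y)$, and close the loop by using Lemmas \ref{decreaseineps} and \ref{decreaseindelta} (with $\epsilon\leq 1/2$ and $\delta\leq 2\delta$) to sandwich $\inf\Delta_{2\delta}^{1/2}(X+Y)\leq \inf\Delta_\delta^\epsilon(X+Y)$. The real obstacle is that independence only yields the weaker joint concentration $(1-\epsilon)^2$, which forces me to relax both parameters --- to confidence $1/2$ and to width $2\delta$ --- before Lemma \ref{lowerbound} becomes applicable. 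This two-fold relaxation, one factor $\delta\mapsto 2\delta$ coming from the triangle inequality and a further doubling baked into Lemma \ref{lowerbound}, is exactly what produces the $4\delta$ slack, and the requirement $(1-\epsilon)^2\geq 1/2$ is what pins the allowed range of $\epsilon$ down to $[0,\,1-1/\sqrt{2}]$.
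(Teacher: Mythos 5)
Your proposal is correct and follows essentially the same route as the paper's proof: handle the empty case via Lemma \ref{finite}, use independence to show $x^*+y^*$ concentrates $X+Y$ at width $2\delta$ with probability above $(1-\epsilon)^2\geq 1/2$, apply Lemma \ref{lowerbound}, and relax back via Lemmas \ref{decreaseineps} and \ref{decreaseindelta}. The only (harmless) difference is that you pass through the tolerance $1/2$ directly, whereas the paper records the intermediate value $2\epsilon-\epsilon^2$ before comparing back to $\epsilon$.
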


\begin{proof}
First of all we note that the statement of the lemma is trivially true if $ \inf\!\Delta_{\delta}^{\epsilon}(X+Y) = +\infty$. Thus, without loss of generality we assume $ \inf\!\Delta_{\delta}^{\epsilon}(X+Y) < +\infty$. By  Lemma \ref{finite} this implies that that $ \Delta_{\delta}^{\epsilon}(X)\neq \emptyset$ and $ \Delta_{\delta}^{\epsilon}(Y)\neq \emptyset$. Thus there exist $x\in  \Delta_{\delta}^{\epsilon}(X)$ and $y\in  \Delta_{\delta}^{\epsilon}(Y)$. Since $X$ and $Y$ are independent it follows that 
\begin{equation*}
\begin{split}
& P(|X + Y-x-y| \leq  2\delta) \\
& \geq P(|X-x|\leq \delta)P(|Y-y|\leq \delta) \geq (1-\epsilon)^{2}.
\end{split}
\end{equation*}
Hence, $x+y$ is an $(2\epsilon - \epsilon^{2},2\delta)$-deterministic value of $X+Y$.
By assumption $\epsilon \leq 1-1/\sqrt{2}$ and thus $2\epsilon-\epsilon^{2}\leq  1/2$. Lemma \ref{lowerbound} yields  
\begin{equation}
\label{nvflnv}
x+y -4\delta \leq  \inf\!\Delta_{2\delta}^{2\epsilon-\epsilon^{2}}(X+Y).
\end{equation}
Since $x\in \Delta_{\delta}^{\epsilon}(X)$ it follows that $\inf\!\Delta_{\delta}^{\epsilon}(X) \leq x$, and similarly $\inf\!\Delta_{\delta}^{\epsilon}(Y) \leq y$. Combined with Eq.~(\ref{nvflnv}) this yields
\begin{equation}
\label{sfbjd}
 \inf\!\Delta_{\delta}^{\epsilon}(X) +  \inf\!\Delta_{\delta}^{\epsilon}(Y) -4\delta \leq  \inf\!\Delta_{2\delta}^{2\epsilon-\epsilon^{2}}(X+Y).
\end{equation}
Since $1\geq 2\epsilon-\epsilon^{2} \geq \epsilon>0$, Lemma \ref{decreaseineps} yields $\inf\!\Delta_{2\delta}^{2\epsilon-\epsilon^{2}}(X+Y) \leq  \inf\!\Delta_{2\delta}^{\epsilon}(X+Y)$.
By Lemma \ref{decreaseindelta} we furthermore find $ \inf\!\Delta_{2\delta}^{\epsilon}(X+Y)\leq  \inf\!\Delta_{\delta}^{\epsilon}(X+Y)$.
By combining Eq.~(\ref{sfbjd}) with the above observations we obtain Eq.~(\ref{nklfdb}). 
\end{proof}


\section{ \label{Sec:FandD0}The $\epsilon$-free energy and the smoothed relative R\'enyi $0$-entropy}
As we have seen in Sec.~\ref{Sec:OptimalExpected}, the minimal expected work cost of the extraction process can be expressed in terms of the generalized free energy $F(q,h)$, or equivalently, the relative Shannon entropy. In the $\epsilon$-deterministic setting, the roles of these measures are, as we shall see in Sec.~\ref{Sec:AlmDetWorkExtr}, taken over by two other quantities: the $\epsilon$-free energy and a smoothed relative R\'enyi $0$-entropy.

Given a distribution $q\in\mathbb{P}(N)$, and an event $\Lambda \subseteq \{1,\ldots, N\}$ we denote the probability of the event $\Lambda$ with respect to $q$ as
\begin{equation}
q(\Lambda):= \sum_{n\in\Lambda}q_n.
\end{equation} 

\begin{Definition}
Let $h\in\mathbb{R}^{N}$ and $\Lambda \subseteq \{1,\ldots, N\}$. We define 
the truncated partition function with respect to $\Lambda$ as
\begin{equation}
Z_{\Lambda}(h) := \sum_{n\in\Lambda}e^{-\beta h_n}. 
\end{equation}
Given $0 < \epsilon \leq 1$, and  $q\in\mathbb{P}(N)$ we define the $\epsilon$-free energy as 
\begin{equation}
F^{\epsilon}(q,h) := -\frac{1}{\beta}\ln\inf_{\Lambda: q(\Lambda)>1-\epsilon}Z_{\Lambda}(h)
\end{equation}
\end{Definition}
In other words, we make the free energy as large as possible by finding the smallest partition function truncated to a sufficiently likely event. Since the underlying set is finite, the infimum can be replaced by a minimum, i.e., there exists a sufficiently likely subset subset $\Lambda^{*}$ such that $F^{\epsilon}(q,h) = -kT\ln Z_{\Lambda^*}(h)$. Note that the concept of one-shot free energy has been introduced independently in \cite{Horodecki11}.

The  `standard'  relative R\'enyi $0$-entropy (see e.g.~\cite{Erven}) between two distributions $q$ and $p$  over $\{1,\ldots,N\}$  is defined as
\begin{equation}
D_{0}(q\Vert p) := -\log_{2}\sum_{j: q_{j}>0}p_{j},
\end{equation}
where we sum $p$ only over the support of $q$. 
We can obtain an $\epsilon$-smoothed relative R\'enyi $0$-entropy in a manner very similar to how we defined the $\epsilon$-free energy; we sum $p$ over the `best' sufficiently likely support: 
\begin{Definition}
Let $q,p\in\mathbb{P}(N)$ and $0 < \epsilon \leq 1$. We define
\begin{equation}
\label{relRzeroeps}
D^{\epsilon}_{0}(q\Vert p) :=  -\log_{2}\inf_{\Lambda: q(\Lambda)> 1-\epsilon}p(\Lambda).
\end{equation}
\end{Definition}
Up to some purely technical differences concerning the smoothing,  this entropy measure was introduced in \cite{Wang1}. (See also \cite{Datta,Wang2} for related measures in the quantum setting.) 

The relative entropy $D_{0}^{\epsilon}$ and the $\epsilon$-free energy are related as 
\begin{equation}
\label{RelationDeps0andFeps}
F^{\epsilon}(q,h) = F(h) + \frac{1}{\beta}\ln(2)D^{\epsilon}_{0}\boldsymbol{(}q\Vert G(h)\boldsymbol{)}.
\end{equation}
In what follows we shall switch freely between $F^{\epsilon}$ and $D^{\epsilon}_{0}$ without comment.
\begin{Lemma}
Let $h\in\mathbb{R}^N$, $0<\epsilon'\leq \epsilon \leq 1$, and $q\in\mathbb{P}(N)$, then
\begin{equation}
F^{\epsilon'}(q,h) \leq  F^{\epsilon}(q,h).
\end{equation}
\end{Lemma}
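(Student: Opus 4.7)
The plan is to exploit the elementary fact that shrinking the feasible set of an infimum can only raise its value, combined with monotonicity of $-\frac{1}{\beta}\ln$.

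First I would translate the hypothesis $\epsilon' \leq \epsilon$ into the inclusion
\begin{equation*}
\{\Lambda \subseteq \{1,\ldots,N\}: q(\Lambda) > 1-\epsilon'\} \subseteq \{\Lambda \subseteq \{1,\ldots,N\}: q(\Lambda) > 1-\epsilon\},
\end{equation*}
which follows immediately from $1-\epsilon' \geq 1-\epsilon$: any $\Lambda$ satisfying the stricter constraint automatically satisfies the looser one. Next, since we are taking an infimum of the same functional $\Lambda \mapsto Z_{\Lambda}(h)$ over a smaller set on the left-hand side, the general monotonicity property of infima yields
\begin{equation*}
\inf_{\Lambda: q(\Lambda) > 1-\epsilon'} Z_{\Lambda}(h) \;\geq\; \inf_{\Lambda: q(\Lambda) > 1-\epsilon} Z_{\Lambda}(h).
\end{equation*}

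Finally I would apply the function $x \mapsto -\frac{1}{\beta}\ln x$, which is strictly decreasing on $(0,\infty)$, to both sides. This reverses the inequality and gives exactly $F^{\epsilon'}(q,h) \leq F^{\epsilon}(q,h)$ by the definition of the $\epsilon$-free energy. There is no real obstacle here; the only thing worth a sentence of care is that both infima are taken over nonempty finite collections of subsets (e.g.\ $\Lambda = \{1,\ldots,N\}$ is always admissible since $q(\{1,\ldots,N\}) = 1 > 1-\epsilon'$), so both infima are finite positive reals and the logarithm is well-defined, making the argument completely rigorous.
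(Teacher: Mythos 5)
Your proof is correct and follows essentially the same route as the paper: the paper picks a minimizing set $\Lambda^*$ for $F^{\epsilon'}$, observes $q(\Lambda^*)>1-\epsilon'\geq 1-\epsilon$ so that $\Lambda^*$ is feasible for the $\epsilon$-problem, and concludes $Z_{\Lambda^*}(h)\geq \inf_{q(\Lambda)>1-\epsilon}Z_\Lambda(h)$, which is just your nested-feasible-set argument phrased through the minimizer. Nothing is missing.
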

\begin{proof}
Let $\Lambda^*\subseteq\{1,\ldots,N\}$ be such that $F^{\epsilon'}(q,h)  = -kT\ln Z_{\Lambda^*}(h)$. Hence,
$q(\Lambda^*)  >1-\epsilon'\geq 1-\epsilon$. Thus, $Z_{\Lambda^*}(h) \geq \inf_{q(\Lambda)> 1-\epsilon}Z_{\Lambda}(h) $.
\end{proof}

\begin{Lemma}
\label{Fcontinuity}
Let $h\in \mathbb{R}^{N}$, and $q\in\mathbb{P}(N)$. Then the function $\epsilon \mapsto F^{\epsilon}(q,h)$ is left-continuous on $(0,1]$.
\end{Lemma}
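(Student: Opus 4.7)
My plan is to exploit the fact that the optimization defining $F^\epsilon(q,h)$ takes place over the finite power set of $\{1,\ldots,N\}$. Since there are only finitely many subsets $\Lambda$, the set of attainable probability values $\{q(\Lambda) : \Lambda \subseteq \{1,\ldots,N\}\}$ is a finite subset of $[0,1]$. The key observation is that $F^\epsilon(q,h)$ depends on $\epsilon$ only through the collection of admissible subsets $\mathcal{S}(\epsilon) := \{\Lambda : q(\Lambda) > 1-\epsilon\}$, via
\begin{equation*}
F^\epsilon(q,h) = -\tfrac{1}{\beta}\ln \min_{\Lambda \in \mathcal{S}(\epsilon)} Z_\Lambda(h).
\end{equation*}
I will show that $\mathcal{S}$ (and hence $F^\epsilon(q,h)$) is in fact \emph{locally constant} from the left at each $\epsilon_0 \in (0,1]$, which is strictly stronger than left-continuity.

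Fix $\epsilon_0 \in (0,1]$. Since $\epsilon_0 > 0$, the full set satisfies $q(\{1,\ldots,N\}) = 1 > 1-\epsilon_0$, so $\{1,\ldots,N\} \in \mathcal{S}(\epsilon_0)$ and in particular $\mathcal{S}(\epsilon_0) \neq \emptyset$. Define
\begin{equation*}
\eta := \min_{\Lambda \in \mathcal{S}(\epsilon_0)} \bigl( q(\Lambda) - (1-\epsilon_0)\bigr).
\end{equation*}
This is a minimum of strictly positive numbers over a finite non-empty index set, so $\eta > 0$. Now take any $\epsilon \in (\epsilon_0 - \eta, \epsilon_0]$. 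The inclusion $\mathcal{S}(\epsilon) \subseteq \mathcal{S}(\epsilon_0)$ is immediate from $\epsilon \leq \epsilon_0$. Conversely, for every $\Lambda \in \mathcal{S}(\epsilon_0)$ we have $q(\Lambda) \geq 1-\epsilon_0 + \eta > 1-\epsilon$, so $\Lambda \in \mathcal{S}(\epsilon)$. Hence $\mathcal{S}(\epsilon) = \mathcal{S}(\epsilon_0)$ throughout the left neighborhood $(\epsilon_0 - \eta, \epsilon_0]$, and therefore $F^\epsilon(q,h) = F^{\epsilon_0}(q,h)$ on this neighborhood. Left-continuity at $\epsilon_0$ follows trivially.

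There is no real obstacle here; the proof is essentially bookkeeping on a finite combinatorial structure. The only point deserving care is ensuring $\mathcal{S}(\epsilon_0)$ is non-empty so that $\eta$ is well-defined, which is guaranteed by restricting to $\epsilon_0 > 0$. It is also worth noting that this argument shows $F^\epsilon(q,h)$ is actually a (non-decreasing) step function of $\epsilon$, with jumps only at the finitely many values $\epsilon = 1 - q(\Lambda)$; one-sided continuity is then just a statement about which side of each jump the value is assigned to, and the strict inequality $q(\Lambda) > 1-\epsilon$ in the definition is precisely what forces left-continuity rather than right-continuity.
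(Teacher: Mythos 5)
Your proof is correct and follows essentially the same route as the paper's: both arguments exploit the strict inequality $q(\Lambda)>1-\epsilon$ together with the finiteness of the collection of subsets to produce a left neighborhood of $\epsilon_0$ on which $F^{\epsilon}(q,h)$ is constant (the paper tracks only the single minimizing set $\Lambda^{*}$ and notes it remains admissible, hence minimizing, while you show the entire admissible family $\mathcal{S}(\epsilon)$ is unchanged — a cosmetic difference). Your closing remark that $F^{\epsilon}(q,h)$ is a step function with jumps at the values $1-q(\Lambda)$, and that the strict inequality is exactly what selects left- rather than right-continuity, is a nice observation consistent with the paper's stated motivation for using a strict inequality in the definition.
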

\begin{proof}
Take an $\epsilon\in(0,1]$. We know that there exists a set $\Lambda^{*}\subseteq \{1,\ldots,N\}$ such that $F^{\epsilon}(q,h) = -kT\ln Z_{\Lambda^*}(h)$. By definition $q(\Lambda^*)>1-\epsilon$. Hence, there exists an $\overline{\epsilon}$ such that $q(\Lambda^*)> 1-\overline{\epsilon}>1-\epsilon$.
As a consequence, $\Lambda^*$ is also a minimizing set for $F^{\overline{\epsilon}}(q,h)$, i.e., $F^{\overline{\epsilon}}(q,h)= -kT\ln Z_{\Lambda^*}(h)$. Moreover, this is true for all $\overline{\epsilon}\in (1-q(\Lambda^*),\epsilon]$. Hence, for each $\epsilon$ there exists a left neighborhood to $\epsilon$, where $F^{\overline{\epsilon}}(q,h)$ is constant. This proves the lemma.
\end{proof}

One may wonder why we have defined $F^{\epsilon}(q,h)$ as $\sup_{q(\Lambda)>1-\epsilon}F_{\Lambda}$ and not $\sup_{q(\Lambda)\geq 1-\epsilon}F_{\Lambda}$, i.e., why a strict inequality rather than just an inequality? This is due to technicalities concerning the proofs of the upper bounds in Propositions \ref{LowerMax} and \ref{AlmDetErasure} (via Lemmas \ref{Fcontinuity}, \ref{ExistenceMindet}, and \ref{nklfbanlkb}). This also the reason why we similarly define $\Delta_{\delta}^{\epsilon}$ in terms of a strict inequality.


\section{ \label{Sec:AlmDetWorkExtr} Optimal $\epsilon$-deterministic work extraction}

In Secs.~\ref{Sec:OptimalExpected} and \ref{Sec:OptimalWorkCostErasure} we minimized the expectation value of the work cost variable for the given task. Instead of using the expectation value as the cost function, we here use the cost function $\inf\!\Delta_{\delta}^{\epsilon}$ introduced in Sec.~\ref{Properties}. In other words, we demand that the random work cost variable, or work yield variable, have a very high degree of predictability.

\begin{Definition}
\label{vnalksdjvn}
Let $h^{i},h^{f}\in\mathbb{R}^{N}$, $q\in\mathbb{P}(N)$, and let $\mathcal{N}$ be a random variable with distribution $q$. Let $\epsilon,\delta\in\mathbb{R}$ be such that $0 < \epsilon < 1$ and $0 \leq \delta <+\infty$. Define the $(\epsilon,\delta)$-deterministic cost of work extraction as
\begin{equation}
\label{DefCextrepsdelta}
\begin{split}
\mathcal{C}^{\mathrm{extr}}_{\epsilon,\delta}(q,h^{i},h^{f}) := & \inf\bigcup_{\mathcal{P}\in\mathscr{P}(h^{i},h^{f})}\Delta_{\delta}^{\epsilon}\boldsymbol{(}W(\mathcal{P},\mathcal{N})\boldsymbol{)}\\
= &   \inf_{\mathcal{P}\in\mathscr{P}(h^{i},h^{f})}\inf\!\Delta_{\delta}^{\epsilon}\boldsymbol{(}W(\mathcal{P},\mathcal{N})\boldsymbol{)}.
\end{split}
\end{equation} 
Furthermore, define the $\epsilon$-deterministic work cost of work extraction as
\begin{equation}
\label{DefCextr}
\begin{split}
\mathcal{C}^{\mathrm{extr}}_{\epsilon}(q,h^{i},h^{f})  : = \lim_{\delta \rightarrow 0^+}\mathcal{C}^{\mathrm{extr}}_{\epsilon,\delta}(q,h^{i},h^{f}). 
\end{split}
\end{equation} 
\end{Definition}
The two lines in Eq.~(\ref{DefCextrepsdelta}) merely reflects the fact that it is a matter of convenience whether we wish to view the problem as finding the infimum of the set of all $(\epsilon,\delta)$-deterministic values generated by all allowed processes, $\cup_{\mathcal{P}\in\mathscr{P}(h^{i},h^{f})}\Delta_{\delta}^{\epsilon}\boldsymbol{(}W(\mathcal{P},\mathcal{N})\boldsymbol{)}$, or whether we wish to view it as a minimization of the cost function $\inf\!\Delta_{\delta}^{\epsilon}$ over the set of processes $\mathscr{P}(h^{i},h^{f})$.

By combining the first line of Eq.~(\ref{DefCextrepsdelta}) in the above definition, with Eq.~(\ref{nsdklvn}) in  Lemma \ref{decreaseineps}, and Eq.~(\ref{ndslvakn}) in Lemma \ref{decreaseindelta}, we obtain the following:
\begin{Lemma}
For fixed $q\in\mathbb{P}(N)$, and $h^{i},h^{f}\in\mathbb{R}^{N}$, the quantity $\mathcal{C}^{\mathrm{extr}}_{\epsilon,\delta}(q,h^{i},h^{f})$ increases monotonically with decreasing $\epsilon$, as well as with decreasing $\delta$.
\end{Lemma}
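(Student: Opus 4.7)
\medskip
\noindent\textbf{Proof proposal.} The plan is to reduce the statement directly to Lemmas \ref{decreaseineps} and \ref{decreaseindelta} via the first line of Eq.~(\ref{DefCextrepsdelta}), which writes $\mathcal{C}^{\mathrm{extr}}_{\epsilon,\delta}(q,h^{i},h^{f})$ as the infimum of a single set, namely the union $\bigcup_{\mathcal{P}\in\mathscr{P}(h^{i},h^{f})}\Delta_{\delta}^{\epsilon}\boldsymbol{(}W(\mathcal{P},\mathcal{N})\boldsymbol{)}$. The elementary fact I will use repeatedly is that if $A\subseteq B$ then $\inf A\geq \inf B$.

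First I would handle monotonicity in $\epsilon$. Fix $\delta$ and suppose $0<\epsilon\leq\epsilon'\leq 1$. By Lemma \ref{decreaseineps}, for every process $\mathcal{P}\in\mathscr{P}(h^{i},h^{f})$ we have the set inclusion $\Delta_{\delta}^{\epsilon}\boldsymbol{(}W(\mathcal{P},\mathcal{N})\boldsymbol{)}\subseteq \Delta_{\delta}^{\epsilon'}\boldsymbol{(}W(\mathcal{P},\mathcal{N})\boldsymbol{)}$. Taking the union over $\mathcal{P}$ preserves the inclusion, and passing to infima yields $\mathcal{C}^{\mathrm{extr}}_{\epsilon,\delta}(q,h^{i},h^{f})\geq \mathcal{C}^{\mathrm{extr}}_{\epsilon',\delta}(q,h^{i},h^{f})$, which is exactly the claim that $\mathcal{C}^{\mathrm{extr}}_{\epsilon,\delta}$ increases as $\epsilon$ decreases.

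The argument for $\delta$ is completely parallel. Fix $\epsilon$ and take $0\leq\delta\leq\delta'<+\infty$. Lemma \ref{decreaseindelta} gives $\Delta_{\delta}^{\epsilon}\boldsymbol{(}W(\mathcal{P},\mathcal{N})\boldsymbol{)}\subseteq \Delta_{\delta'}^{\epsilon}\boldsymbol{(}W(\mathcal{P},\mathcal{N})\boldsymbol{)}$ for each $\mathcal{P}$, and the same union-then-infimum manipulation gives $\mathcal{C}^{\mathrm{extr}}_{\epsilon,\delta}(q,h^{i},h^{f})\geq \mathcal{C}^{\mathrm{extr}}_{\epsilon,\delta'}(q,h^{i},h^{f})$.

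There is essentially no obstacle here; the statement is a bookkeeping corollary of the two pointwise monotonicity lemmas already proved in Sec.~\ref{Properties}. The only thing worth being careful about is the convention $\inf\emptyset=+\infty$, so that if $\Delta_{\delta}^{\epsilon}\boldsymbol{(}W(\mathcal{P},\mathcal{N})\boldsymbol{)}$ happens to be empty for some or all $\mathcal{P}$ the inclusions and the infimum inequalities still make sense; this is consistent with the remarks at the start of Sec.~\ref{Properties} that $\inf\!\Delta_{\delta}^{\epsilon}(X)=+\infty$ precisely when $\Delta_{\delta}^{\epsilon}(X)$ is empty.
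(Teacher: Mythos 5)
Your argument is correct and coincides with the paper's own reasoning: the lemma is stated there as an immediate consequence of combining the first line of Eq.~(\ref{DefCextrepsdelta}) with the set inclusions (\ref{nsdklvn}) and (\ref{ndslvakn}) from Lemmas \ref{decreaseineps} and \ref{decreaseindelta}, exactly the union-then-infimum manipulation you describe. Your extra remark about the convention $\inf\emptyset=+\infty$ is a sensible precaution consistent with the conventions of Sec.~\ref{Properties}.
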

As we would expect,  the work cost thus increases if we demand a lower risk of failure (smaller $\epsilon$), or if we require a higher precision in the extraction (smaller $\delta$). Note that a direct consequence of the monotonicity of $\mathcal{C}^{\mathrm{extr}}_{\epsilon,\delta}(q,h^{i},h^{f})$ in $\delta$ is that the limit in Eq.~(\ref{DefCextr}) is well defined. 

\begin{Definition}
Let $h^{i},h^{f}\in\mathbb{R}^{N}$, $q\in\mathbb{P}(N)$, and let $\mathcal{N}$ be a random variable with distribution $q$. Let $\epsilon,\delta\in\mathbb{R}$ be such that $0 < \epsilon < 1$ and $0 \leq \delta <+\infty$. We define the $(\epsilon,\delta)$-deterministic work content of $q$ relative to $h$ as
\begin{equation}
\begin{split}
\mathcal{A}^{\epsilon}_{\delta}(q,h): = &  -\mathcal{C}^{\mathrm{extr}}_{\epsilon,\delta}(q,h,h)\\
= & \sup\bigcup_{\mathcal{P}\in\mathscr{P}(h^{i},h^{f})}\Delta_{\delta}^{\epsilon}\boldsymbol{(}-W(\mathcal{P},\mathcal{N})\boldsymbol{)},
\end{split}
\end{equation}
 and the $\epsilon$-deterministic work content as
\begin{equation}
\mathcal{A}^{\epsilon}(q,h): = 
-\mathcal{C}^{\mathrm{extr}}_{\epsilon}(q,h,h).
\end{equation} 
\end{Definition}

\begin{Proposition}
\label{LowerMax}
Let $h^i,h^f\in\mathbb{R}^{N}$, and $q\in\mathbb{P}(N)$. Let $0< \epsilon\leq 1-\frac{1}{\sqrt{2}}$, and $0<\delta < +\infty$.  Then
\begin{equation}
\label{akdfnlb}
\begin{split}
&   F(h^f)- F^{\epsilon}(q,h^i) +\frac{1}{\beta}\ln(1-\epsilon)  -6\delta  \\
& \leq  \mathcal{C}^{\mathrm{extr}}_{\epsilon,\delta}(q,h^{i},h^{f}) \\
& \leq    F(h^f)- F^{\epsilon}(q,h^i). 
\end{split}
\end{equation}
\end{Proposition}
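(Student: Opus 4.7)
The plan is to prove the two bounds separately: the upper bound on $\mathcal{C}^{\mathrm{extr}}_{\epsilon,\delta}$ by exhibiting an explicit process, and the lower bound by decomposing an arbitrary process into its first LT plus a remainder that starts from equilibrium, and then applying an LT work bound together with a Jarzynski/Crooks argument to each piece.

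For the upper bound I would analyze the process of Fig.~\ref{fig3}. Let $\Lambda^{\ast}$ achieve the minimum in the definition of $F^{\epsilon}(q,h^{i})$, so $q(\Lambda^{\ast})>1-\epsilon$ and $F^{\epsilon}(q,h^{i})=-kT\ln Z_{\Lambda^{\ast}}(h^{i})$. For a parameter $E>0$ let $h'_{n}=h^{i}_{n}$ for $n\in\Lambda^{\ast}$ and $h'_{n}=h^{i}_{n}+E$ otherwise. Concatenate (a) the LT $h^{i}\to h'$, (b) a thermalization, and (c) an approximate ITR from $h'$ to $h^{f}$ supplied by Lemma~\ref{LemmaITR}. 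On the event $\mathcal{N}\in\Lambda^{\ast}$ (probability $>1-\epsilon$) the work cost of the first LT is $0$; the ITR contributes $F(h^{f})-F(h')$ up to $\delta/2$ with probability as close to $1$ as desired; and $F(h')\to F^{\epsilon}(q,h^{i})$ as $E\to\infty$. Choosing $E$ large enough makes $F(h^{f})-F^{\epsilon}(q,h^{i})+\delta/2$ an $(\epsilon,\delta)$-deterministic value of the total work, which gives the required inequality.

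For the lower bound, let $\mathcal{P}\in\mathscr{P}(h^{i},h^{f})$. Without loss of generality write $\mathcal{P}=\mathrm{LT}_{0}\circ\mathrm{Therm}\circ\mathcal{P}'$ where $\mathrm{LT}_{0}$ takes $h^{i}\to h^{1}$ and $\mathcal{P}'\in\mathscr{P}(h^{1},h^{f})$; the work cost splits as $W=W_{0}+W'$ with $W_{0}$ and $W'$ independent because of the intervening thermalization, and $W'$ starts from the Gibbs distribution $G(h^{1})$. First I would establish a general \emph{LT work bound}: for any LT from $h$ to $h'$ acting on a random variable of distribution $p$, if $x$ is an $(\epsilon,\delta)$-deterministic value of the work cost then the set $\Lambda=\{n:|h'_{n}-h_{n}-x|\leq\delta\}$ satisfies $p(\Lambda)>1-\epsilon$, and the inequality $Z(h')\geq Z_{\Lambda}(h')\geq e^{-\beta(x+\delta)}Z_{\Lambda}(h)$ yields
\[
x\geq F(h')-F^{\epsilon}(p,h)-\delta.
\]
Applied to $\mathrm{LT}_{0}$ this gives $\inf\!\Delta^{\epsilon}_{\delta}(W_{0})\geq F(h^{1})-F^{\epsilon}(q,h^{i})-\delta$. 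Then I would invoke the Crooks/Jarzynski relation (Lemma~\ref{Crooks}) for $\mathcal{P}'$, which starts from equilibrium; Markov's inequality applied to $\langle e^{-\beta W'}\rangle=e^{-\beta(F(h^{f})-F(h^{1}))}$ converts the telescoping identity into a one-sided tail bound $P(W'\leq W^{\ast})\leq e^{\beta(W^{\ast}-F(h^{f})+F(h^{1}))}$, and if $x'$ is an $(\epsilon,\delta)$-deterministic value of $W'$ then $P(W'\leq x'+\delta)>1-\epsilon$ forces
\[
x'\geq F(h^{f})-F(h^{1})-\delta+\tfrac{1}{\beta}\ln(1-\epsilon).
\]
Finally, Lemma~\ref{mindetaddition} for the independent sum $W_{0}+W'$ (which requires the hypothesis $\epsilon\leq 1-1/\sqrt{2}$) gives
\[
\inf\!\Delta^{\epsilon}_{\delta}(W)\geq\inf\!\Delta^{\epsilon}_{\delta}(W_{0})+\inf\!\Delta^{\epsilon}_{\delta}(W')-4\delta,
\]
and summing the three contributions the $F(h^{1})$ terms cancel, leaving the announced lower bound with slack $\delta+\delta+4\delta=6\delta$. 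The two degenerate cases (no thermalization at all, so $\mathcal{P}$ is a single LT; or an initial thermalization, so effectively $W_{0}\equiv 0$ and the distribution is replaced by $G(h^{i})$) are handled directly and give at least as strong a bound, using $F^{\epsilon}(q,h^{i})\geq F(h^{i})$.

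The main obstacle is packaging the Crooks/Jarzynski step. The usual identity assumes a process beginning in strict equilibrium, which is precisely why the decomposition above is forced: the first LT must be isolated from the argument and controlled by the separate LT work bound, and the independence of $W_{0}$ and $W'$ (ensured by the intervening thermalization) is essential in order to invoke Lemma~\ref{mindetaddition} rather than having to deal with cross-correlations. The only nontrivial bookkeeping is tracking the $\delta$ budget through the LT bound, the one-sided Markov bound, and Lemma~\ref{mindetaddition}, which is what produces the $6\delta$ on the right-hand side.
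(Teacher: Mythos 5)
Your overall architecture is the paper's: the lower bound is obtained by splitting off the first LT, bounding it with exactly the inequality of Lemma \ref{singleLT}, bounding the post-thermalization remainder (which starts in $G(h^{1})$) below by $F(h^{f})-F(h^{1})-\delta+\frac{1}{\beta}\ln(1-\epsilon)$, and recombining via Lemma \ref{mindetaddition}; the $\delta$ budget $(1+1+4)\delta=6\delta$ matches. The one genuine methodological difference is how you obtain the equilibrium bound: you pass to the integrated Jarzynski identity $\langle e^{-\beta W'}\rangle=e^{-\beta\Delta F}$ and apply Markov's inequality, whereas the paper proves the detailed, path-wise Crooks relation (Lemma \ref{Crooks}) and then bounds the reverse-process probability by $1$ (Corollary \ref{MaxThermal}). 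Both yield the identical one-sided tail bound $P(W'\leq W^{*})\leq e^{\beta(W^{*}-\Delta F)}$; your route is more elementary and avoids the path-reversal bookkeeping, at the price of not producing the two-sided Crooks statement the paper also wants for its own sake. You should verify the Jarzynski identity within this model, but it is a two-line computation: by the independence of the thermalized states each LT step contributes a factor $Z(h^{l+1})/Z(h^{l})$ and the product telescopes.

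One slip in the upper bound: centring the $(\epsilon,\delta)$-deterministic value at $F(h^{f})-F^{\epsilon}(q,h^{i})+\delta/2$ only yields $\mathcal{C}^{\mathrm{extr}}_{\epsilon,\delta}(q,h^{i},h^{f})\leq F(h^{f})-F^{\epsilon}(q,h^{i})+\delta/2$, which falls short of the stated upper bound by $\delta/2$. Centre instead at $w=F(h^{f})-F(h')$: on the good event the lifting LT contributes exactly $0$ and the ITR lies within $\delta$ of $F(h^{f})-F(h')$ with probability close enough to $1$ (use the strict inequality $q(\Lambda^{*})>1-\epsilon$ to absorb the ITR's failure probability, as in Lemma \ref{ExistenceMindet}), so $w$ itself is an $(\epsilon,\delta)$-deterministic value, and $w$ decreases to $F(h^{f})-F^{\epsilon}(q,h^{i})$ as $E\to\infty$, giving the claimed infimum.
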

The proof of Proposition \ref{LowerMax} is presented in Sec.~\ref{ProofofMain}.
Note that due to Eq.~(\ref{RelationDeps0andFeps}) we have 
\begin{equation*}
\begin{split}
 F(h^f)- F^{\epsilon}(q,h^i) = & F(h^{f})-F(h^{i})\\
&   -kT\ln(2)D^{\epsilon}_{0}\boldsymbol{(}q\Vert G(h^{i})\boldsymbol{)},
 \end{split}
 \end{equation*}
  which puts Eq.~(\ref{akdfnlb}) in a form more similar to the results concerning the expected work extraction in Proposition \ref{OptimalExpectedWork}.

\begin{Corollary}
\label{AlmostDetermWorkCont}
Let $h\in\mathbb{R}^{N}$, $q\in\mathbb{P}(N)$, and let $\mathcal{N}$ be a random variable with distribution $q$. Let $\epsilon\in\mathbb{R}$ be such that $0 < \epsilon \leq 1-\frac{1}{\sqrt{2}}$, then 
\begin{equation*}
0  \leq \mathcal{A}^{\epsilon}(q,h)- F^{\epsilon}(q,h)+F(h) \leq  -kT\ln(1-\epsilon).
\end{equation*}
\end{Corollary}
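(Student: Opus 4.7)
The plan is to derive the corollary as a direct specialization of Proposition \ref{LowerMax}, with the only real work being the sign flip inherent in passing from work cost to work content, and the limit $\delta \to 0^+$. Since the corollary concerns cyclic processes that begin and end in the same energy levels, I would apply Proposition \ref{LowerMax} with $h^i = h^f = h$. Under this specialization the $F(h^f) - F(h^i)$ piece vanishes, and the proposition reads
\begin{equation*}
- F^{\epsilon}(q,h) + \tfrac{1}{\beta}\ln(1-\epsilon) - 6\delta \;\leq\; \mathcal{C}^{\mathrm{extr}}_{\epsilon,\delta}(q,h,h) \;\leq\; - F^{\epsilon}(q,h) + F(h) - F(h),
\end{equation*}
which after adding $F(h)$ throughout becomes a two-sided bound on $\mathcal{C}^{\mathrm{extr}}_{\epsilon,\delta}(q,h,h)$.

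Next I would invoke the definitional identity $\mathcal{A}^{\epsilon}_{\delta}(q,h) = -\mathcal{C}^{\mathrm{extr}}_{\epsilon,\delta}(q,h,h)$, multiply the entire chain by $-1$, and reverse the inequalities to obtain
\begin{equation*}
F^{\epsilon}(q,h) - F(h) \;\leq\; \mathcal{A}^{\epsilon}_{\delta}(q,h) \;\leq\; F^{\epsilon}(q,h) - F(h) - \tfrac{1}{\beta}\ln(1-\epsilon) + 6\delta.
\end{equation*}
Rearranging gives $0 \leq \mathcal{A}^{\epsilon}_{\delta}(q,h) - F^{\epsilon}(q,h) + F(h) \leq -kT\ln(1-\epsilon) + 6\delta$, valid for every $\delta > 0$ in the stated $\epsilon$-range.

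Finally, I would pass to the limit $\delta \to 0^+$. This is legal because Definition \ref{vnalksdjvn} together with the monotonicity lemma guarantees that $\mathcal{C}^{\mathrm{extr}}_{\epsilon,\delta}$ is monotone in $\delta$, so $\mathcal{A}^{\epsilon}(q,h) = \lim_{\delta \to 0^+}\mathcal{A}^{\epsilon}_{\delta}(q,h)$ exists (as an infimum over $\delta$). The lower bound $F^{\epsilon}(q,h) - F(h)$ is $\delta$-independent and so passes to the limit trivially, while the $6\delta$ correction in the upper bound vanishes, leaving exactly $-kT\ln(1-\epsilon)$. This yields the claimed two-sided inequality.

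There is essentially no obstacle here beyond Proposition \ref{LowerMax} itself: the nontrivial analytic content — the construction of the near-optimal extraction process of Fig.~\ref{fig3} for the lower bound, and the Crooks-fluctuation-theorem-based upper bound — is already absorbed into that proposition. The only minor point to verify is that the limit $\delta \to 0^+$ commutes with the sandwich, which is immediate because the bounding quantities depend on $\delta$ only through the additive $6\delta$ term.
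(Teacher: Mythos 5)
Your proposal is correct and is exactly the argument the paper intends: the corollary is an immediate specialization of Proposition \ref{LowerMax} to $h^i=h^f=h$, followed by the sign flip $\mathcal{A}^{\epsilon}_{\delta}(q,h)=-\mathcal{C}^{\mathrm{extr}}_{\epsilon,\delta}(q,h,h)$ and the (monotonicity-justified) limit $\delta\rightarrow 0^{+}$, which kills the $6\delta$ term. Your first intermediate display is garbled (the lower bound drops an $F(h)$ and the upper bound carries a spurious $F(h)-F(h)$), but the two-sided bound on $\mathcal{A}^{\epsilon}_{\delta}(q,h)$ that you state afterwards is the correct one, so this is only a transcription slip and not a gap.
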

As considered in more detail in Sec.~\ref{thermfluct}, the quantity $-kT\ln(1-\epsilon)$ is the largest $\epsilon$-deterministic work that can be extracted from a thermal equilibrium state.

If we in Corollary \ref{AlmostDetermWorkCont} take the special case of a completely degenerate set of energy levels, we find $\mathcal{A}^{\epsilon}(q,h)\approx kT[\ln(N)-H_{0}^{\epsilon}(q)]$.
Here $H_{0}^{\epsilon}(q)$ is a smoothed R\'enyi $0$-entropy, defined as $H_{0}^{\epsilon}(q) := \min_{q(\Lambda)>1-\epsilon}\log_2|\Lambda|$. Up to technical differences this is essentially the result  obtained in \cite{Dahlsten}. Apart from a difference in terms of the choice of smoothing, the result in \cite{Dahlsten} is stated in terms of a smoothed R\'enyi $1/2$-entropy, rather than the smoothed R\'enyi $0$-entropy we use. However, results in, e.g., \cite{RennerWolf} suggest that these entropies can be substituted at the cost of error terms constant in the system size. A perhaps more relevant difference is that \cite{Dahlsten} does not employ the same type of cost function as we do here, but rather the type of threshold function  briefly described in Sec.~\ref{OtherCostFcns}.


\section{ \label{ProofofMain} Proof of Proposition~\ref{LowerMax}}

The proof idea of the lower bound in Prop.~\ref{LowerMax} is to decompose the total process into two parts, where the first part is the initial LT of the process, and where the second part (if any) begins with a thermalization. We proceed by finding a lower bound on the work cost for a single LT.
 In Sec.~\ref{epsextracthermal} we prove a variant of Crook's fluctuation theorem to find a similar bound for any process that operates on an equilibrium distribution. In Sec.~\ref{kmfbl} we combine these two bounds to obtain the lower bound in Eq.~(\ref{akdfnlb}). To obtain the upper bound in Prop.~\ref{LowerMax} we construct a sequence of processes for which the $\epsilon$-deterministic work cost converges to the upper bound.

\subsection{\label{LowerBoundLT}Lower bound on the work cost of a single LT}

\begin{Lemma}
\label{singleLT}
Let $h^{i},h'\in\mathbb{R}^{N}$, and let $\mathcal{P}$ be a single LT that takes $h^i$ to $h'$. Let $\mathcal{N}$ be distributed $q\in\mathbb{P}(N)$. Let $w$ be an $(\epsilon,\delta)$-deterministic value of $W(\mathcal{P},\mathcal{N})$, then
\begin{equation}
\label{tpopui}
w \geq F(h')- F^{\epsilon}(q,h^{i}) -\delta.
\end{equation}
\end{Lemma}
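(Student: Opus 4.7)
The plan is to directly exploit the fact that for a single LT the work cost is the deterministic function $W(\mathcal{P},\mathcal{N})=h'_{\mathcal{N}}-h^{i}_{\mathcal{N}}$ of the random state, so the condition that $w$ is $(\epsilon,\delta)$-deterministic translates straight into a statement about a sufficiently likely subset of energy levels, which can then be fed into the definition of $F^{\epsilon}(q,h^{i})$.

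More concretely, I would first define
\begin{equation*}
\Lambda_{w}:=\{n\in\{1,\ldots,N\}:|h'_{n}-h^{i}_{n}-w|\leq\delta\}.
\end{equation*}
Since $W(\mathcal{P},\mathcal{N})=h'_{\mathcal{N}}-h^{i}_{\mathcal{N}}$, we have $\{|\,W(\mathcal{P},\mathcal{N})-w|\leq\delta\}=\{\mathcal{N}\in\Lambda_{w}\}$, and the hypothesis $w\in\Delta^{\epsilon}_{\delta}\bigl(W(\mathcal{P},\mathcal{N})\bigr)$ becomes $q(\Lambda_{w})>1-\epsilon$. Hence $\Lambda_{w}$ is an admissible set in the minimization defining $F^{\epsilon}(q,h^{i})$, so that
\begin{equation*}
F^{\epsilon}(q,h^{i})\geq -\tfrac{1}{\beta}\ln Z_{\Lambda_{w}}(h^{i}).
\end{equation*}

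The second step is to use the pointwise bound $h^{i}_{n}\geq h'_{n}-w-\delta$ available for every $n\in\Lambda_{w}$, which gives $e^{-\beta h^{i}_{n}}\leq e^{\beta(w+\delta)}e^{-\beta h'_{n}}$; summing over $\Lambda_{w}$ and then enlarging the sum to all of $\{1,\ldots,N\}$ yields
\begin{equation*}
Z_{\Lambda_{w}}(h^{i})\leq e^{\beta(w+\delta)}Z_{\Lambda_{w}}(h')\leq e^{\beta(w+\delta)}Z(h').
\end{equation*}
Taking $-\tfrac{1}{\beta}\ln$ on both sides and combining with the previous inequality gives $F^{\epsilon}(q,h^{i})\geq F(h')-w-\delta$, which rearranges to Eq.~(\ref{tpopui}).

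There is essentially no obstacle here: the only thing to be careful about is the direction of the inequality in passing from an infimum over admissible $\Lambda$ to the specific choice $\Lambda=\Lambda_{w}$, and the correct sign in the bound $h^{i}_{n}\geq h'_{n}-w-\delta$ (using only the upper half of $|h'_{n}-h^{i}_{n}-w|\leq\delta$). Note that only one side of the $(\epsilon,\delta)$-deterministic condition is used, which explains why the final estimate loses just a single $\delta$ rather than $2\delta$.
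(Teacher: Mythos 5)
Your proposal is correct and follows essentially the same route as the paper's proof: both define the same sufficiently likely set $\Lambda_{w}$ of levels whose work cost lies within $\delta$ of $w$, use the upper half of the $(\epsilon,\delta)$-condition to compare $e^{-\beta h^{i}_{n}}$ with $e^{\beta(w+\delta)}e^{-\beta h'_{n}}$ on $\Lambda_{w}$, and then relate $Z_{\Lambda_{w}}(h^{i})$ to $F^{\epsilon}(q,h^{i})$ and $Z(h')$ to $F(h')$. The only difference is the cosmetic one of which end of the chain of inequalities you start from.
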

\begin{proof}
We first note that with probability $q_j$ the work cost of the single LT is $h'_j -h^{i}_j$. 
Define $\Lambda := \{j\in\{1,\ldots, N\}: |h'_{j}-h^{i}_{j} -w|\leq \delta \}$. 
Since $w$ is an $(\epsilon,\delta)$-deterministic value of $W(\mathcal{P},\mathcal{N})$ it means, by definition, that $q(\Lambda)>1-\epsilon$.
\begin{equation}
\begin{split}
F(h') \leq & -\frac{1}{\beta}\ln\sum_{j\in\Lambda}e^{-\beta h'_{j}}\\
\leq & -\frac{1}{\beta}\ln\sum_{j\in\Lambda }e^{-\beta h^{i}_{j} - \beta (\delta+w)}\\
= & \delta+w  -\frac{1}{\beta}\ln\sum_{j\in\Lambda }e^{-\beta h^{i}_{j}}\\
\leq & \delta + w + F^{\epsilon}(q,h^{i}).
\end{split}
\end{equation}
\end{proof}

\subsection{\label{epsextracthermal} A variation on Crook's theorem}

Crook's theorem \cite{CrooksTheorem} relates the probability distribution of  the entropy production, or work cost, of a process and its reversal. We shall here derive a slight variation of Crook's theorem within our model. Since the thermalization in our model is a special case of the detailed balance condition, it is maybe not particularly surprising that we can derive Crook's theorem. (Note, e.g., that a similar type of model, based on detailed balance, was employed in \cite{Crooks98} to derive the Jarzynski equality \cite{Jarzynski}.) The reason for this derivation is more on a technical level; we need a particular version that fits well within our framework.

Given a process $\mathcal{P}\in\mathscr{P}(h^{i},h^{f})$ we define the process $\mathcal{P}^{\textrm{rev}}\in\mathscr{P}(h^{f},h^{i})$ as the reversal of the sequence of thermalizations and LTs in $\mathcal{P}$. To be more precise, let $\mathcal{P}$ is a sequence of alternating thermalizations and LTs along a sequence of energy level configurations $(h^{0},h^{1},\ldots,h^{L-1}, h^{L})$, with $h^{0} := h^{i}$ and $h^{L}:=h^f$. Then $\mathcal{P}^{\textrm{rev}}$ is the sequence of alternating thermalizations and LTs along the sequence $(h^{\textrm{rev},0},\ldots, h^{\textrm{rev},L-1}) := (h^{L},h^{L-1},\ldots, h^{1}, h^{0})$. Hence $h^{\textrm{rev},l} = h^{L-l}$. 

 In the following lemma we assume that the forward process $\mathcal{P}$  operates on an initial state $\mathcal{N}^{i}$ that is Gibbs distributed $G(h^i)$. Similarly, the reversed process $\mathcal{P}^{\textrm{rev}}$ operates on the initial state $\mathcal{N}^{f}$, which has the Gibbs distribution $G(h^{f})$. Note that $\mathcal{N}^{f}$ should not be confused with the final state $\mathcal{F}(\mathcal{P},\mathcal{N}^i)$. These are not necessarily equal, and may not have the same distribution (unless $\mathcal{P}$ ends with a thermalization). Similarly, $\mathcal{F}(\mathcal{P}^{\textrm{rev}},\mathcal{N}^f)$ does not necessarily have to have the same distribution as $\mathcal{N}^{i}$.

\begin{Lemma}
\label{Crooks}
Let $h^{i},h^{f}\in \mathbb{R}^{N}$, let $\mathcal{P}\in \mathscr{P}(h^{i},h^{f})$, and $\delta >0$. 
Denote $\Delta F := F(h^{f})-F(h^{i})$.
 Then, for all $w\in\mathbb{R}$,
\begin{equation*}
\begin{split}
e^{\beta (w-\Delta F)}e^{-\beta \delta} \leq & \frac{P\boldsymbol{(} |W(\mathcal{P},\mathcal{N}^i) -w | \leq \delta \boldsymbol{)}}{P\boldsymbol{(} |W(\mathcal{P}^{\textrm{rev}},\mathcal{N}^f) +w | \leq \delta \boldsymbol{)}}\\
\leq & e^{\beta (w-\Delta F)}e^{\beta \delta},
\end{split}
\end{equation*}
where $\mathcal{N}^i$ is distributed $G(h^i)$, and $\mathcal{N}^{f}$ is distributed $G(h^{f})$. 
\end{Lemma}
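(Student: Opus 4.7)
The plan is to derive a pointwise ratio identity between the probability of a trajectory under $\mathcal{P}$ and the probability of its time-reversed trajectory under $\mathcal{P}^{\textrm{rev}}$, and then sum this identity over all trajectories whose work cost lies in a window of width $\delta$ around $w$.

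First I would use the reduction from Sec.~\ref{model} and write $\mathcal{P}$ as an alternating sequence of LTs and thermalizations along a sequence of configurations $h^{0}=h^{i},h^{1},\ldots,h^{L}=h^{f}$. Here one should also check that prepending or appending a thermalization (if needed to achieve the strictly alternating form) is harmless: it costs no work, and since $\mathcal{N}^{i}$ is already Gibbs distributed $G(h^{i})$ and analogously for $\mathcal{N}^{f}$, any such added thermalization leaves the initial/final distribution invariant. A trajectory is then specified by the states $(n_{0},\ldots,n_{L-1})$ at the moments of the LTs. Because the initial state is Gibbs and each thermalization produces an independent Gibbs state at the current configuration,
\begin{equation*}
P_{\textrm{fwd}}(n_{0},\ldots,n_{L-1})=\prod_{l=0}^{L-1}G_{n_{l}}(h^{l}),\qquad W(n)=\sum_{l=0}^{L-1}(h^{l+1}_{n_{l}}-h^{l}_{n_{l}}).
\end{equation*}

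Next I would write the analogous expression for $\mathcal{P}^{\textrm{rev}}$, which proceeds along $\tilde h^{l}=h^{L-l}$ starting from a state distributed $G(h^{f})$. Pairing a forward trajectory with its reverse via $\tilde n_{l}=n_{L-1-l}$ and re-indexing, one checks that the reverse work cost equals $-W(n)$ and that the trajectory probabilities satisfy
\begin{equation*}
\frac{P_{\textrm{fwd}}(n)}{P_{\textrm{rev}}(\tilde n)}=\prod_{l=0}^{L-1}e^{\beta(h^{l+1}_{n_{l}}-h^{l}_{n_{l}})}\frac{Z(h^{l+1})}{Z(h^{l})}=e^{\beta W(n)}\frac{Z(h^{f})}{Z(h^{i})}=e^{\beta(W(n)-\Delta F)},
\end{equation*}
which is the desired pointwise Crooks-type identity.

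Finally, I would sum over all trajectories with $|W(n)-w|\leq\delta$. On this set the factor $e^{\beta(W(n)-\Delta F)}$ is trapped in the interval $[e^{\beta(w-\delta-\Delta F)},e^{\beta(w+\delta-\Delta F)}]$, while the involution $n\mapsto\tilde n$ is a bijection that takes the event $\{|W-w|\leq\delta\}$ to the event $\{|W^{\textrm{rev}}+w|\leq\delta\}$. Substituting the pointwise identity and pulling the exponential factor outside the sum immediately yields the two-sided bound stated in the lemma.

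The main obstacles I expect are purely bookkeeping: keeping the indexing of the forward and reversed trajectories consistent so that the telescoping of $Z(h^{l+1})/Z(h^{l})$ into $Z(h^{f})/Z(h^{i})=e^{-\beta\Delta F}$ comes out cleanly, and justifying the WLOG reduction to a strictly alternating process without altering any work statistics or the Gibbs property of $\mathcal{N}^{i}$ and $\mathcal{N}^{f}$. Once these are in hand, the derivation is a direct trajectory-level computation together with a worst-case bound over a $\delta$-window, with no further analytic input needed.
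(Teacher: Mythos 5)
Your proposal is correct and follows essentially the same route as the paper's proof: both establish the pointwise trajectory identity $P_{\textrm{fwd}}(n)=e^{\beta(w_{n}-\Delta F)}P_{\textrm{rev}}(\tilde n)$ via the Gibbs weights, use the reversal bijection $\tilde n_{l}=n_{L-1-l}$ to identify the event $\{|W-w|\leq\delta\}$ with $\{|W^{\textrm{rev}}+w|\leq\delta\}$, and then bound the exponential factor over the $\delta$-window. The only cosmetic difference is that the paper derives the identity by substituting $h_{n}=F(h)-\tfrac{1}{\beta}\ln G_{n}(h)$ rather than forming the ratio of path probabilities directly, which is algebraically equivalent.
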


\begin{proof}
Consider a specific path $\overline{n} := (n_0,\ldots, n_{L-1})\in \{1,\ldots, N\}^{\times L}$ of the forward process $\mathcal{P}$, i.e., initially the system is in state $n_{0}$, next in state $n_{1}$, etc. The work cost of this path is  $w_{\overline{n}} := \sum_{l=0}^{L-1}(h^{l+1}_{n_l}- h^{l}_{n_l})$.
 Analogously, the work cost of a path $\overline{m} = (m_0,\ldots, m_{L-1})\in \{1,\ldots, N\}^{\times L}$ of the reversed process $\mathcal{P}^{\textrm{rev}}$ is
 \begin{equation}
 w^{\textrm{rev}}_{\overline{m}} :=  \sum_{l=0}^{L-1}(h^{\textrm{rev}, l+1}_{m_l}- h^{\textrm{rev},l}_{m_l}).
 \end{equation}
 Let us now define the operation
 $\textrm{rev}(\overline{n})_{l} := n_{L-l-1}$, for $l= 0,\ldots,L-1$, 
which is a bijection on $\{1,\ldots, N\}^{\times L}$. 

 With these definitions, one can check that
  \begin{equation}
  \label{reverse}
 w^{\textrm{rev}}_{\overline{m}}  =  -w_{\textrm{rev}(\overline{m})}.
 \end{equation}
 Let us again consider the work cost $w_{\overline{n}}$ of the path $\overline{n}$ of the forward process. By using Eq.~(\ref{mfnxkdjng}) it follows that
\begin{equation}
\label{doitfhj}
\begin{split}
w_{\overline{n}}   = & F(h^L)-F(h^0)   - \frac{1}{\beta}\ln \Pi_{l=0}^{L-1}  G_{n_l}(h^{l+1})\\
&  +\frac{1}{\beta}\ln\Pi_{l=0}^{L-1}G_{n_l}(h^l) .
\end{split}
\end{equation}
Furthermore, the probability to get path $\overline{n}$ in the forward process $\mathcal{P}$ is  $P(\overline{n}) :=  \Pi_{l=0}^{L-1}G_{n_l}(h^l)$.
Hence, we can rewrite Eq.~(\ref{doitfhj}) as 
\begin{equation}
\label{dlfbys}
\begin{split}
P(\overline{n}) = e^{\beta(w_{\overline{n}}  -\Delta F)} \Pi_{l=0}^{L-1}  G_{n_l}(h^{l+1}).\\
\end{split}
\end{equation}
Let us define the set of paths 
\begin{equation}
\label{flbnmfsl}
\begin{split}
A := \{ \overline{n}\in \{1,\ldots,N\}^{\times L}:  |w_{\overline{n}} -w  |\leq \delta\}.
\end{split}
\end{equation}
In other words, $A$ consists of those  paths for which the work cost of the forward process differs at most  $\delta$ from $w$. Hence, using Eq.~(\ref{dlfbys}), this results in 
\begin{equation}
\begin{split}
& P(|W(\mathcal{P},\mathcal{N}^{i})-w|\leq \delta) \\
& = \sum_{\overline{n}\in A}P(\overline{n})\\
& =  \sum_{\overline{n}\in A}e^{\beta(w_{\overline{n}}  -\Delta F)} \Pi_{l=0}^{L-1}  G_{n_l}(h^{l+1}).
\end{split}
\end{equation}
Utilizing the defining condition in Eq.~(\ref{flbnmfsl}), i.e., $w-\delta \leq w_{\overline{n}} \leq w + \delta$, yields the inequalities
\begin{equation}
\label{kvbjirut}
\begin{split}
 e^{\beta(w  -\Delta F  - \delta)}  \leq &  \frac{P(|W(\mathcal{P},\mathcal{N}^{i})-w|\leq \delta)}{
 \sum_{\overline{n}\in A} \Pi_{l=0}^{L-1}  G_{n_l}(h^{l+1}) }\\
 \leq & e^{\beta(w  -\Delta F+ \delta )}.
 \end{split}
\end{equation}
Let us now define the set $B$ of paths such the reversed process $\mathcal{P}^{\textrm{rev}}$ gives a work cost that differs at most $\delta$ from $-w$.
\begin{equation}
\begin{split}
B := \{ \overline{m}\in \{1,\ldots,N\}^{\times L}:  |w^{\textrm{rev}}_{\overline{m}} + w  |\leq \delta\}.
\end{split}
\end{equation}
By Eq.~(\ref{reverse}) and the fact that $\overline{n}\mapsto \textrm{rev}(\overline{n})$ is a bijection it follows that $\textrm{rev}(B)=  A$.

The probability of a path $\overline{m}$ of the reversed process $\mathcal{P}^{\textrm{rev}}$ is 
\begin{equation}
\begin{split}
\mathcal{P}^{\textrm{rev}}(\overline{m}) 
= & \Pi_{l=0}^{L-1}G_{m_l}(h^{L-l})\\
= & \Pi_{l'=0}^{L-1}G_{\textrm{rev}(\overline{m})_{l'}}(h^{l'+1}).
\end{split}
\end{equation}
(Note that $\mathcal{P}^{\textrm{rev}}(\overline{m}) \neq \mathcal{P}\boldsymbol{(}\textrm{rev}(\overline{m})\boldsymbol{)}$.)

By construction of the set $B$, we can conclude that
\begin{equation*}
\begin{split}
& P(|W(\mathcal{P}^{\textrm{rev}},\mathcal{N}^{f}) + w|\leq \delta ) \\
& \quad  =  \sum_{\overline{m}\in B}\Pi_{l=0}^{L-1}G_{\textrm{rev}(\overline{m})_{l}}(h^{l+1})\\
& \quad =  \sum_{\overline{n}\in A}\Pi_{l=0}^{L-1}G_{\overline{n}_{l}}(h^{l+1}).
\end{split}
\end{equation*}
Inserting the above equation into Eq.~(\ref{kvbjirut}) yields the statement of the lemma.
\end{proof}
\begin{Corollary}
\label{MaxThermal}
Let $h^{i},h^{f}\in \mathbb{R}^{N}$, let $\mathcal{P}\in \mathscr{P}(h^{i},h^{f})$, and let $0< \epsilon <1$ and $0 <\delta <+\infty$. If $\mathcal{N}$ is distributed $G(h^{i})$ then 
\begin{equation*}
\inf\!\Delta_{\delta}^{\epsilon}\boldsymbol{(}W(\mathcal{P},\mathcal{N})\boldsymbol{)} \geq \frac{1}{\beta}\ln(1-\epsilon) +F(h^f) -F(h^i) -\delta.
\end{equation*}
\end{Corollary}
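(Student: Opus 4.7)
The plan is to derive this as a direct consequence of the Crooks-type fluctuation relation in Lemma \ref{Crooks}, by using the trivial upper bound of $1$ on the reverse-process probability.

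First, I would take an arbitrary $w \in \Delta_{\delta}^{\epsilon}(W(\mathcal{P},\mathcal{N}))$ (if this set is empty, the infimum is $+\infty$ and the claim is vacuous). By definition of $\Delta_{\delta}^{\epsilon}$ we have $P(|W(\mathcal{P},\mathcal{N}) - w| \leq \delta) > 1-\epsilon$. Since $\mathcal{N}$ is distributed $G(h^{i})$, this is precisely the quantity appearing in the numerator of the ratio in Lemma \ref{Crooks}.

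Next, I would apply the upper bound from Lemma \ref{Crooks},
\begin{equation*}
P\bigl(|W(\mathcal{P},\mathcal{N}^i) - w| \leq \delta\bigr) \leq e^{\beta(w - \Delta F)}e^{\beta \delta}\, P\bigl(|W(\mathcal{P}^{\textrm{rev}},\mathcal{N}^f) + w| \leq \delta\bigr),
\end{equation*}
with $\Delta F = F(h^{f}) - F(h^{i})$, and bound the reverse-process probability trivially by $1$. Chaining this with the lower bound $1-\epsilon < P(|W(\mathcal{P},\mathcal{N}^i) - w| \leq \delta)$ gives $1-\epsilon < e^{\beta(w-\Delta F)+\beta\delta}$. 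Taking logarithms and rearranging,
\begin{equation*}
w \geq F(h^{f}) - F(h^{i}) + \frac{1}{\beta}\ln(1-\epsilon) - \delta.
\end{equation*}
Since this holds for every $w \in \Delta_{\delta}^{\epsilon}(W(\mathcal{P},\mathcal{N}))$, taking the infimum over $w$ yields the stated bound.

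I do not expect any real obstacle here: the work has already been done in establishing Lemma \ref{Crooks}. The only mild subtlety is bookkeeping — making sure that the initial state for the forward process in Lemma \ref{Crooks} is indeed the Gibbs distribution $G(h^{i})$ (which is exactly the hypothesis of the corollary), and handling the trivial case $\Delta_{\delta}^{\epsilon}(W(\mathcal{P},\mathcal{N})) = \emptyset$ where the infimum is $+\infty$ and the inequality is automatic.
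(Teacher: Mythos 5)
Your proof is correct and is precisely the derivation the paper intends: the corollary is stated as an immediate consequence of Lemma \ref{Crooks}, obtained exactly by bounding the reverse-process probability by $1$ and using $P\boldsymbol{(}|W(\mathcal{P},\mathcal{N})-w|\leq\delta\boldsymbol{)}>1-\epsilon$ for any $w\in\Delta_{\delta}^{\epsilon}\boldsymbol{(}W(\mathcal{P},\mathcal{N})\boldsymbol{)}$. Your handling of the empty-set case and the observation that the hypothesis supplies the Gibbs initial distribution required by the lemma are both the right bookkeeping.
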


\subsection{\label{kmfbl} Proof of the lower bound in Eq.~(\ref{akdfnlb})}

\begin{proof}
Without loss of generality the process $\mathcal{P}$ can be regarded as an alternating sequence of LT processes and thermalizations, beginning with an LT that takes $h^i$ to some configuration of energy levels $h'$. (We can let $h' = h^{i}$.)  We let $\mathcal{P}^{(1)}$ denote this initial LT of the process. Next, the system is thermalized, and the remaining part of the process is denoted $\mathcal{P}^{(2)}$. 
The process $\mathcal{P}^{(2)}$ begins in the state $\mathcal{N}'$ which is distributed $G(h')$. Since $\mathcal{P}^{(1)}$ and $\mathcal{P}^{(2)}$ are separated by a thermalization, it follows that $W(\mathcal{P}^{(1)},\mathcal{N}^{i})$ and $W(\mathcal{P}^{(2)},\mathcal{N}')$ are independent.
Furthermore, by assumption $0\leq \epsilon \leq 1-1/\sqrt{2}$, and $0<\delta<+\infty$. Hence, by Lemma \ref{mindetaddition} we find that
\begin{equation*}
\begin{split}
\inf\!\Delta_{\delta}^{\epsilon}\boldsymbol{(}W(\mathcal{P},\mathcal{N}^{i})\boldsymbol{)} \geq & \inf\!\Delta_{\delta}^{\epsilon}\boldsymbol{(}W(\mathcal{P}^{(1)},\mathcal{N}^{i})\boldsymbol{)} \\
& +  \inf\!\Delta_{\delta}^{\epsilon}\boldsymbol{(}W(\mathcal{P}^{(2)},\mathcal{N}')\boldsymbol{)} -4\delta.
\end{split}
\end{equation*} 
We know from Lemma \ref{singleLT} that
\begin{equation*}
\inf\!\Delta_{\delta}^{\epsilon}\boldsymbol{(}W(\mathcal{P}^{(1)},\mathcal{N}^{i})\boldsymbol{)} \geq  F(h')- F^{\epsilon}(q,h^i)-\delta.
\end{equation*}
Since $\mathcal{P}^{(2)}$ starts in the Gibbs distribution $G(h')$, Corollary \ref{MaxThermal} yields
\begin{equation*}
 \inf\!\Delta_{\delta}^{\epsilon}\boldsymbol{(}W(\mathcal{P}^{(2)},\mathcal{N}')\boldsymbol{)}  \geq \frac{1}{\beta}\ln(1-\epsilon) +F(h^f) -F(h') -\delta.
\end{equation*}
Combining the above inequalities yields the lower bound in Eq.~(\ref{akdfnlb}).
\end{proof}

\subsection{\label{NearOptimal}Proof of the upper bound in Eq.~(\ref{akdfnlb})}

To prove the upper bound in Proposition \ref{LowerMax} we shall here construct an explicit family of processes whose $(\epsilon,\delta)$-deterministic work values approach the bound. 

However, shall first prove a lemma. In the proof of this lemma make use of Lemma \ref{LemmaITR}, which can be rephrased as asserting the existence of a process $\mathcal{P}\in \mathscr{P}(h^i,h^f)$, such that $F(h^f)- F(h^i)$ is an $(\epsilon,\delta)$-deterministic value of $W(\mathcal{P},\mathcal{N})$, where $\mathcal{N}$ is distributed $G(h^i)$. 

\begin{Lemma}
\label{ExistenceMindet}
Let $h^i,h^f\in\mathbb{R}^{N}$ and let  $0< \epsilon \leq 1$, $0<\delta<+\infty$, and $0< \xi < 1$. Let $\mathcal{N}^{i}$ have the distribution $q\in\mathbb{P}(N)$. Then there exists a $\mathcal{P}\in\mathscr{P}(h^{i},h^{f})$ and a $w\in \mathbb{R}$, such that $w$ is an 
   an $(\epsilon,\delta)$-deterministic value of $W(\mathcal{P},\mathcal{N}^{i})$, and
   \begin{equation}
   \label{nadvjs}
   w \leq \xi + F(h^f)-F^{\epsilon}(q,h^i).
   \end{equation}
\end{Lemma}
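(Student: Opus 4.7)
The plan is to implement the construction sketched in Figure~\ref{fig3} within our rigorous framework, keeping careful track of the probability budget. Concretely, the process $\mathcal{P}$ will consist of three parts: (i) an LT that raises all energy levels outside a carefully chosen set $\Lambda^{\ast}$ by a large amount $E$; (ii) a thermalization; (iii) an essentially deterministic ITR-type process, supplied by Lemma~\ref{LemmaITR}, that brings the new energy-level configuration $h'$ down to $h^{f}$. The value $w$ will be the target value of the ITR, namely $F(h^{f})-F(h')$, and the two sources of randomness (the initial LT acting on $\mathcal{N}^{i}$, and the fluctuations of the ITR around its target value) will be independent because the intervening thermalization makes the state that enters step~(iii) independent of $\mathcal{N}^{i}$.

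More precisely, I would first invoke the definition of $F^{\epsilon}$ to pick $\Lambda^{\ast}\subseteq\{1,\dots,N\}$ with $q(\Lambda^{\ast})>1-\epsilon$ and $F^{\epsilon}(q,h^{i})=-kT\ln Z_{\Lambda^{\ast}}(h^{i})$, and write $\epsilon':=1-q(\Lambda^{\ast})<\epsilon$. Define $h'_{n}:=h^{i}_{n}$ for $n\in\Lambda^{\ast}$ and $h'_{n}:=h^{i}_{n}+E$ otherwise. The initial LT then has work cost $0$ on the event $\{\mathcal{N}^{i}\in\Lambda^{\ast}\}$ of probability $1-\epsilon'>1-\epsilon$, and cost $E$ on the complement. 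By Lemma~\ref{LemmaITR}, for any $\eta>0$ there exists an ITR-type process $\mathcal{P}_{\mathrm{ITR}}\in\mathscr{P}(h',h^{f})$ whose work cost, conditioned on its starting state being Gibbs-distributed with respect to $h'$, lies within $\delta$ of $F(h^{f})-F(h')$ with probability at least $1-\eta$. Let $\mathcal{P}$ be the concatenation of the initial LT, the thermalization, and $\mathcal{P}_{\mathrm{ITR}}$.

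By independence of the two failure events, the total cost lies within $\delta$ of $w:=F(h^{f})-F(h')$ with probability at least $(1-\epsilon')(1-\eta)$. Choosing $\eta$ small enough that $(1-\epsilon')(1-\eta)>1-\epsilon$ — which is possible precisely because the strict inequality $q(\Lambda^{\ast})>1-\epsilon$ leaves the gap $\epsilon-\epsilon'>0$ — shows that $w$ is an $(\epsilon,\delta)$-deterministic value of $W(\mathcal{P},\mathcal{N}^{i})$. Finally, since
\begin{equation*}
Z(h')=Z_{\Lambda^{\ast}}(h^{i})+e^{-\beta E}\sum_{n\notin\Lambda^{\ast}}e^{-\beta h^{i}_{n}},
\end{equation*}
we have $F(h')\uparrow F^{\epsilon}(q,h^{i})$ as $E\to\infty$, so taking $E$ large enough that $F^{\epsilon}(q,h^{i})-F(h')<\xi$ gives $w<\xi+F(h^{f})-F^{\epsilon}(q,h^{i})$, establishing Eq.~(\ref{nadvjs}).

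There is no deep obstacle here; the argument is essentially the formal counterpart of Figure~\ref{fig3}. The only point that requires attention is the accounting of two independent failure probabilities, and the key structural observation is that the strict inequality built into the definitions of both $\Delta^{\epsilon}_{\delta}$ and $F^{\epsilon}$ is exactly what allows the $\epsilon$-budget to be split between the LT-step and the ITR-step.
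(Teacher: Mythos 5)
Your proposal is correct and follows essentially the same route as the paper's proof: the same lift-thermalize-ITR construction with $w=F(h^{f})-F(h')$, the same use of the strict inequality $q(\Lambda^{*})>1-\epsilon$ to absorb the ITR's small failure probability (your $\eta$ plays the role of the paper's parameter $r$), and the same limit $F(h')\uparrow F^{\epsilon}(q,h^{i})$ as $E\to\infty$ to obtain Eq.~(\ref{nadvjs}).
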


\begin{proof}
As noted in Sec.~\ref{Sec:FandD0} there exists a $\Lambda^*\subseteq \{1,\ldots,N\}$ such that $F^{\epsilon}(q,h^i) = -kT\ln\sum_{j\in \Lambda^*}e^{-\beta h^i_n}$.  Due to the strict inequality, $q(\Lambda^*) >1-\epsilon$, there exists a number  $1> r>0$ such that $q(\Lambda^*) \geq rq(\Lambda^*) > 1-\epsilon$.
We construct a process $\mathcal{P}$ as a concatenation of a process $\mathcal{P}^{(1)}$, a thermalization, and a process $\mathcal{P}^{(2)}$.
Let $E\in\mathbb{R}$, and let $\mathcal{P}^{(1)}$ be the LT that takes $h^i$ to the new configuration of energy levels $h'$, defined by 
\begin{equation}
h'_{k} := \left\{ \begin{matrix} 
h^i_{k} & \textrm{if} & k\in \Lambda^*,\\
h^i_{k}+ E & \textrm{if} & k\notin \Lambda^*. 
\end{matrix}\right.
\end{equation}
As seen, $W(\mathcal{P}^{(1)},\mathcal{N}^{i})$ takes the value $0$ with probability $q(\Lambda^*)$. Next we thermalize the system, thus putting it in a state $\mathcal{N}'$ distributed $G(h')$. 

By Lemma \ref{LemmaITR} we know that there exists a  process $\mathcal{P}^{(2)}\in\mathscr{P}(h',h^f)$, such that $F(h^f)-F(h')$ is an $(1-r,\delta)$-deterministic value of $W(\mathcal{P}^{(2)},\mathcal{N}')$.
Since $W(\mathcal{P}^{(1)},\mathcal{N}^{i})$ and $W(\mathcal{P}^{(2)},\mathcal{N}')$ are independent it follows that  $P\boldsymbol{(}|W(\mathcal{P}^{(1)},\mathcal{N}^{i}) + W(\mathcal{P}^{(2)},\mathcal{N}') -F(h^f)+F(h') | \leq \delta \boldsymbol{)} >   q(\Lambda^*)r >  1-\epsilon$.
Hence, $w = F(h^f)-F(h')$ is an $(\epsilon,\delta)$-deterministic value of $W(\mathcal{P},\mathcal{N}^{i})$.
Note that $F(h')$ is a monotonically increasing function in increasing $E$, and $\lim_{E\rightarrow +\infty}F(h') = F^{\epsilon}(q,h^i)$. Hence, for each $\xi >0$ there exists an $E$ such that $-F(h') \leq \xi -F^{\epsilon}(q,h^i)$. 
For such a choice of $E$, the corresponding process $\mathcal{P}$ thus have $w = F(h^f)-F(h')$ as an $(\epsilon,\delta)$-deterministic value of $W(\mathcal{P},\mathcal{N}^i)$, which moreover  satisfies  Eq.~(\ref{nadvjs}).
\end{proof}

\begin{proof}[Proof of the upper bound in Eq.~(\ref{akdfnlb}).]
By Lemma \ref{ExistenceMindet} we can conclude that there exists a sequence of  processes $\mathcal{P}_m \in \mathscr{P}(h^i,h^f)$ and a sequence of real numbers $w_m$, such that $w_m$ is an $(\epsilon,\delta)$-deterministic value of $W(\mathcal{P}_m,\mathcal{N})$, and $w_m \leq \frac{1}{m} + F(h^f)-F^{\epsilon}(q,h^i)$. This proves the upper bound in Eq.~(\ref{akdfnlb}).
\end{proof}


\section{\label{thermfluct}  $\epsilon$-deterministic work extraction from thermal equilibrium}

Let $h\in\mathbb{R}^{N}$. A direct consequence of Corollary \ref{MaxThermal} is that 
\begin{equation}
\label{fjnbdfajk}
\mathcal{A}^{\epsilon}\boldsymbol{(}G(h),h\boldsymbol{)} \leq -kT\ln(1-\epsilon) .
\end{equation}
In other words $-kT\ln(1-\epsilon)$ is an upper bound to the $\epsilon$-deterministic work content of a system that is in equilibrium with a heat bath of temperature $T$. 
(Note that Eq.~(\ref{fjnbdfajk}) implies that the upper bound in Corollary \ref{AlmostDetermWorkCont} is not sharp for all initial distributions and configurations of energy levels.) Corollary \ref{AlmostDetermWorkCont} furthermore provides a lower bound:
\begin{equation}
\label{mgfkkfsfg}
\mathcal{A}^{\epsilon}\boldsymbol{(}G(h),h\boldsymbol{)} \geq  -kT \ln\inf_{G_{\Lambda}(h)>1-\epsilon}G_{\Lambda}(h).
\end{equation}
This lower bound implies that with suitable configurations of energy levels $h$ we can extract $\epsilon$-deterministic work arbitrarily close to the upper bound in Eq.~(\ref{fjnbdfajk}). (For example, if we let one energy level, $1$ say, be sufficiently much lower in energy than all others, we can have $G_1(h)>1-\epsilon$ with $G_1(h)$ arbitrarily close to $1-\epsilon$.)

Note that the above statement can be rephrased as the probability of success of the extraction being exponentially small in the extracted energy. This is in agreement with the standard fluctuation theorems, where thermal fluctuations can violate the macroscopic notion of the second law, but with a probability that is exponentially small in the size of the violation \cite{ReviewFluctThm}. 

From Sec.~\ref{Sec:OptimalExpected} we know that the expected work content of the equilibrium is zero. Hence, the gain in the successful case in the $\epsilon$-deterministic work extraction has to be compensated by a corresponding loss in the unsuccessful case. (In the specific process used in the proof of  Lemma \ref{ExistenceMindet} the cost of failure approaches infinity.)


\section{ \label{Sec:AlmDetEr} Optimal $\epsilon$-deterministic work cost of erasure}
Like for the expected erasure cost in Sec.~\ref{Sec:OptimalWorkCostErasure}, we do not require the erasure process to establish the selected state $s$ perfectly, but rather allow an error that in the end is taken to zero.
\begin{Definition}
Let $h^i,h^f\in\mathbb{R}^{N}$, let $\mathcal{N}$ be a random variable with distribution $q\in\mathbb{P}(N)$, and let $s\in\{1,\ldots,N\}$. Let $0 < \epsilon < 1$ and $0 \leq \delta <+\infty$. Define the $(\epsilon,\delta)$-deterministic work cost of erasure as
\begin{equation}
\label{defepsdelterase}
\begin{split}
 & \mathcal{C}^{\mathrm{erase}}_{\epsilon,\delta}(q,h^i,h^f,s) \\
 & :=  \lim_{\tau\rightarrow 0^+}\inf \bigcup_{\mathcal{P}\in\mathscr{P}^{\tau}_{s}(q,h^i,h^f)} \Delta_{\delta}^{\epsilon}\boldsymbol{(}W(\mathcal{P},\mathcal{N})\boldsymbol{)}\\
 & =  \lim_{\tau\rightarrow 0^+}\inf_{\mathcal{P}\in\mathscr{P}^{\tau}_{s}(q,h^i,h^f)} \inf\!\Delta_{\delta}^{\epsilon}\boldsymbol{(}W(\mathcal{P},\mathcal{N})\boldsymbol{)},
\end{split}
\end{equation} 
and the $\epsilon$-deterministic work cost of erasure as
\begin{equation}
\label{bfdnmyny}
\mathcal{C}^{\mathrm{erase}}_{\epsilon}(q,h^i,h^f,s) := \lim_{\delta \rightarrow 0^+} \mathcal{C}^{\mathrm{erase}}_{\epsilon,\delta}(q,h^i,h^f,s).
\end{equation}
\end{Definition}
Analogously as for the definition of the expected work cost of erasure, Def.~\ref{nsfdbkvnbakn} in Sec.~\ref{Sec:OptimalWorkCostErasure}, the quantity $\inf \bigcup_{\mathcal{P}\in\mathscr{P}^{\tau}_{s}(q,h^i,h^f)} \Delta_{\delta}^{\epsilon}\boldsymbol{(}W(\mathcal{P},\mathcal{N})\boldsymbol{)}$ increases monotonically with decreasing $\tau$. Hence, the limit $\tau\rightarrow 0$ in Eq.~(\ref{defepsdelterase}) is well defined.

As a direct consequence of Eq.~(\ref{nsdklvn}) in Lemma \ref{decreaseineps} and Eq.~(\ref{ndslvakn}) in  Lemma \ref{decreaseindelta} it follows that, for a fixed $\tau$, the quantity $\inf \bigcup_{\mathcal{P}\in\mathscr{P}^{\tau}_{s}(q,h^i,h^f)} \Delta_{\delta}^{\epsilon}\boldsymbol{(}W(\mathcal{P},\mathcal{N})\boldsymbol{)}$ decreases monotonically with increasing $\epsilon$, as well as with increasing $\delta$. This remains true in the limit $\tau \rightarrow 0$. We can thus conclude:
\begin{Lemma}
For fixed $q\in\mathbb{P}(N)$, $h^{i},h^{f}\in\mathbb{R}^{N}$, and $s\in\{1,\ldots,N\}$ the quantity $\mathcal{C}^{\mathrm{erase}}_{\epsilon,\delta}(q,h^{i},h^{f},s)$ increases monotonically with decreasing $\epsilon$, as well as with decreasing $\delta$.
\end{Lemma}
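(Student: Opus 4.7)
The plan is to reduce the statement to a direct application of Lemmas \ref{decreaseineps} and \ref{decreaseindelta}, combined with the observation that weak inequalities are preserved under limits. Since the whole argument works pointwise at the level of random variables $W(\mathcal{P},\mathcal{N})$, nothing really happens beyond assembling these monotonicity properties through two layers (a union over $\mathcal{P}$, and the limit $\tau\to 0^{+}$).

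First I would fix $\tau>0$ and argue monotonicity of
\begin{equation*}
g_{\tau}(\epsilon,\delta):=\inf\bigcup_{\mathcal{P}\in\mathscr{P}^{\tau}_{s}(q,h^{i},h^{f})}\Delta_{\delta}^{\epsilon}\boldsymbol{(}W(\mathcal{P},\mathcal{N})\boldsymbol{)}.
\end{equation*}
Let $0<\epsilon\leq\epsilon'\leq 1$. By Lemma \ref{decreaseineps}, applied to the random variable $W(\mathcal{P},\mathcal{N})$ for each $\mathcal{P}\in\mathscr{P}^{\tau}_{s}(q,h^{i},h^{f})$, we have $\Delta_{\delta}^{\epsilon}\boldsymbol{(}W(\mathcal{P},\mathcal{N})\boldsymbol{)}\subseteq\Delta_{\delta}^{\epsilon'}\boldsymbol{(}W(\mathcal{P},\mathcal{N})\boldsymbol{)}$. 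Unions preserve inclusions, so the corresponding inclusion holds after taking the union over $\mathcal{P}\in\mathscr{P}^{\tau}_{s}(q,h^{i},h^{f})$, and taking the infimum reverses the inequality, giving $g_{\tau}(\epsilon,\delta)\geq g_{\tau}(\epsilon',\delta)$. The identical argument with Lemma \ref{decreaseindelta} in place of Lemma \ref{decreaseineps} gives $g_{\tau}(\epsilon,\delta)\geq g_{\tau}(\epsilon,\delta')$ whenever $0\leq\delta\leq\delta'$.

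Next I would pass to the limit $\tau\to 0^{+}$. The existence of $\mathcal{C}^{\mathrm{erase}}_{\epsilon,\delta}(q,h^{i},h^{f},s)=\lim_{\tau\to 0^{+}}g_{\tau}(\epsilon,\delta)$ is already guaranteed by the monotonicity in $\tau$ noted right after Definition of $\mathcal{C}^{\mathrm{erase}}_{\epsilon,\delta}$ (shrinking $\mathscr{P}^{\tau}_{s}$ as $\tau$ decreases makes the union smaller and thus the infimum larger). Now, for $0<\epsilon\leq\epsilon'\leq 1$ we have $g_{\tau}(\epsilon,\delta)\geq g_{\tau}(\epsilon',\delta)$ for every $\tau>0$; since weak inequalities survive limits, taking $\tau\to 0^{+}$ on both sides yields
\begin{equation*}
\mathcal{C}^{\mathrm{erase}}_{\epsilon,\delta}(q,h^{i},h^{f},s)\geq\mathcal{C}^{\mathrm{erase}}_{\epsilon',\delta}(q,h^{i},h^{f},s).
\end{equation*}
The same reasoning with $\delta$ in place of $\epsilon$ establishes monotonicity in $\delta$, proving the lemma.

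No real obstacle is anticipated: the only thing to be a little careful about is the order of quantifiers (apply the set-level inclusion \emph{before} taking either the union over processes or the infimum, and only \emph{then} take the $\tau\to 0^{+}$ limit) and the fact that limits of weakly monotone sequences stay weakly monotone, both of which are elementary.
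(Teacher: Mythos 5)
Your proposal is correct and follows essentially the same route as the paper: the paper likewise obtains the result as a direct consequence of the set inclusions in Lemmas \ref{decreaseineps} and \ref{decreaseindelta}, applied for fixed $\tau$ to the union over $\mathscr{P}^{\tau}_{s}(q,h^{i},h^{f})$, and then notes that the monotonicity survives the limit $\tau\rightarrow 0^{+}$. Your added care about the order of operations (inclusion first, then union, then infimum, then the $\tau$-limit) is exactly the right bookkeeping and matches the intended argument.
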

Due to the monotonicity of $\mathcal{C}^{\mathrm{erase}}_{\epsilon,\delta}(q,h^{i},h^{f},s)$ with respect to $\delta$, the limit $\delta\rightarrow 0$ in Eq.~(\ref{bfdnmyny}) is well defined.

\begin{Proposition}
\label{AlmDetErasure}
Let $h^i,h^f\in\mathbb{R}^{N}$, and let $q\in \mathbb{P}(N)$.  Let $0  < \epsilon \leq 1-\frac{1}{\sqrt{2}}$,  $0<\delta<+\infty$, and $s\in \{1,\ldots, N\}$.  
Then
\begin{equation}
\label{ndfjb}
\begin{split}
 & h^f_s  - F^{\epsilon}(h^i) +  \frac{1}{\beta}\ln(1-\epsilon) -8\delta\\
& \leq \mathcal{C}^{\mathrm{erase}}_{\epsilon,\delta}(q,h^i,h^f,s)\\
& \leq  h^f_s  - F^{\epsilon}(h^i).
\end{split}
\end{equation}
\end{Proposition}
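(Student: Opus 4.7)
The plan is to adapt the proof of Proposition~\ref{LowerMax} to the erasure setting by splitting off the final level transformation, with the erasure constraint playing the role of concentrating the distribution on $s$ at the last thermalization. The upper bound will come from an explicit construction that combines the strategy of Fig.~\ref{fig3} with the ITR of Lemma~\ref{LemmaITR} and a final deterministic ``raising'' of level $s$.

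For the lower bound, fix $\mathcal{P}\in\mathscr{P}^{\tau}_{s}(q,h^{i},h^{f})$ with $\tau\leq\epsilon$. The case that $\mathcal{P}$ contains no thermalization is handled separately: $\mathcal{P}$ is then a single LT, $q_s\geq 1-\tau\to 1$, and $F^{\epsilon}(q,h^{i})$ reduces to $h^{i}_{s}$, so the bound is trivial. Otherwise split $\mathcal{P}=\mathcal{P}_1\cdot\mathcal{P}_2$, where $\mathcal{P}_{1}\in\mathscr{P}(h^{i},\overline{h})$ ends with the last thermalization (producing $\overline{\mathcal{N}}\sim G(\overline{h})$, independent of $\mathcal{N}$) and $\mathcal{P}_{2}$ is the final LT from $\overline{h}$ to $h^{f}$. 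Since $\mathcal{P}_{2}$ does not affect the state, the erasure condition forces $G_{s}(\overline{h})\geq 1-\tau$, and $W(\mathcal{P}_{2},\overline{\mathcal{N}})=h^{f}_{\overline{\mathcal{N}}}-\overline{h}_{\overline{\mathcal{N}}}$ takes the specific value $h^{f}_{s}-\overline{h}_{s}$ with probability $\geq 1-\tau$.

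Exploiting this near-determinism via a union-bound argument analogous to the one in Lemma~\ref{mindetaddition} (but tighter, since $Y=W(\mathcal{P}_{2},\overline{\mathcal{N}})$ is nearly constant rather than just $(\epsilon,\delta)$-concentrated), one shows that any $(\epsilon,\delta)$-deterministic value $w$ of $W(\mathcal{P},\mathcal{N})$ gives $w-(h^{f}_{s}-\overline{h}_{s})\in\Delta^{\epsilon+\tau}_{\delta}(W(\mathcal{P}_{1},\mathcal{N}))$, so
\begin{equation*}
\inf\!\Delta^{\epsilon}_{\delta}\bigl(W(\mathcal{P},\mathcal{N})\bigr)\geq\inf\!\Delta^{\epsilon+\tau}_{\delta}\bigl(W(\mathcal{P}_{1},\mathcal{N})\bigr)+h^{f}_{s}-\overline{h}_{s}.
\end{equation*}
Now apply Proposition~\ref{LowerMax} to $\mathcal{P}_{1}$, and use the identity $\overline{h}_{s}=F(\overline{h})-kT\ln G_{s}(\overline{h})$ with $G_{s}(\overline{h})\geq 1-\tau$. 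The $F(\overline{h})$ terms cancel, leaving a bound of the form $h^{f}_{s}+kT\ln(1-\tau)-F^{\epsilon+\tau}(q,h^{i})+\tfrac{1}{\beta}\ln(1-\epsilon-\tau)-6\delta$. Taking $\inf_{\mathcal{P}}$ and $\lim_{\tau\to 0^{+}}$ then gives the result, provided we can bridge $\lim_{\tau\to 0^{+}}F^{\epsilon+\tau}(q,h^{i})$ back to $F^{\epsilon}(q,h^{i})$. This is the main obstacle: by Lemma~\ref{Fcontinuity} the map $\epsilon'\mapsto F^{\epsilon'}$ is only guaranteed to be left-continuous, so the right limit may genuinely exceed $F^{\epsilon}$ (any set $\Lambda_{0}$ with $q(\Lambda_{0})=1-\epsilon$ exactly is admissible in the $\epsilon+\tau$ problem but excluded at $\epsilon$). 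The remaining slack from $-6\delta$ to the claimed $-8\delta$ is precisely what is needed to absorb this jump via a perturbation argument (presumably the content of the yet-to-be-proved Lemma~\ref{nklfbanlkb}), trading any ``critical'' $\Lambda_{0}$ for a slightly larger set whose $Z$-value is within $2\delta$.

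For the upper bound, fix a minimizer $\Lambda^{*}$ with $q(\Lambda^{*})>1-\epsilon$ and $F^{\epsilon}(q,h^{i})=-kT\ln Z_{\Lambda^{*}}(h^{i})$, and build a five-stage process parametrized by large $E,E'>0$: (i) an LT $h^{i}\to h'$ that leaves levels in $\Lambda^{*}$ unchanged and lifts those outside by $E$ (zero work cost on the branch of probability $q(\Lambda^{*})>1-\epsilon$); (ii) a thermalization concentrating on $\Lambda^{*}$ as $E\to\infty$, with $F(h')\to F^{\epsilon}(q,h^{i})$; (iii) an approximate ITR (Lemma~\ref{LemmaITR}) from $h'$ to an intermediate $h''$ defined by $h''_{s}:=h^{f}_{s}-E'$ and $h''_{n}:=h^{f}_{n}$ for $n\neq s$, with essentially deterministic work cost $F(h'')-F(h')$; (iv) a thermalization concentrating on $s$ as $E'\to\infty$, so the erasure constraint is satisfied; (v) a final LT $h''\to h^{f}$ with deterministic cost $E'$ on state $s$. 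The total work cost along the successful branch equals $[F(h'')-F(h')]+E'$, which in the joint limit $E,E'\to\infty$ tends to $h^{f}_{s}-F^{\epsilon}(q,h^{i})$, proving the upper bound. The residual $\delta$ slack from the ITR step is absorbed into the $(\epsilon,\delta)$-deterministic framework as in Lemma~\ref{ExistenceMindet}.
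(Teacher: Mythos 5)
Your upper bound is essentially the paper's construction (Lemma~\ref{nklfbanlkb} built on Lemma~\ref{ExistenceMindet}): lift the levels outside $\Lambda^{*}$, thermalize, ITR to an intermediate configuration with level $s$ pulled down by $E'$, thermalize, and lift $s$ back up, with the success probabilities absorbed using the strict inequality $q(\Lambda^{*})>1-\epsilon$. That half is fine. The lower bound, however, has a genuine gap, and it sits exactly where you flag it. Your decomposition runs in the wrong direction: from an $(\epsilon,\delta)$-deterministic value of the \emph{whole} process you infer one for $\mathcal{P}_{1}$ at the degraded level $\epsilon+\tau$, which forces you to invoke Proposition~\ref{LowerMax} at level $\epsilon+\tau$ and leaves $F^{\epsilon+\tau}(q,h^{i})$ in the bound. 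Since $\epsilon'\mapsto F^{\epsilon'}(q,h^{i})$ is only left-continuous (Lemma~\ref{Fcontinuity}), the right limit $\lim_{\tau\to 0^{+}}F^{\epsilon+\tau}(q,h^{i})$ can strictly exceed $F^{\epsilon}(q,h^{i})$, and the jump is a property of $(q,h^{i},\epsilon)$ alone, not controlled by $\delta$. For instance, with $N=2$, $q=(1-\epsilon,\epsilon)$ and $h^{i}=(0,-M)$, the only set with $q(\Lambda)>1-\epsilon$ is $\{1,2\}$, so $F^{\epsilon}\approx -M$, whereas $\{1\}$ is admissible at every level $\epsilon+\tau$, giving $F^{\epsilon+\tau}=0$; the jump $M$ is arbitrary and cannot be traded against $2\delta$. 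So your proposed repair fails, and note also that Lemma~\ref{nklfbanlkb} and the left-continuity argument belong to the \emph{upper} bound, not to a rescue of the lower one.

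The paper's Lemma~\ref{anvvaks} avoids the problem by arguing in the opposite direction. By Lemma~\ref{finite}, $W(\mathcal{P}_{1},\mathcal{N})$ possesses some $(\epsilon,\delta)$-deterministic value $w$ at the \emph{original} level $\epsilon$; independence then makes $w+h^{f}_{s}-\overline{h}_{s}$ an $(\epsilon+\tau-\epsilon\tau,\delta)$-deterministic value of the full process, so Lemmas~\ref{lowerbound} and~\ref{decreaseineps} give $w+h^{f}_{s}-\overline{h}_{s}-2\delta\leq\inf\!\Delta_{\delta}^{\epsilon}\boldsymbol{(}W(\mathcal{P},\mathcal{N})\boldsymbol{)}$, and $w$ itself is bounded below by Proposition~\ref{LowerMax} applied at level $\epsilon$, which is where $F^{\epsilon}(q,h^{i})$ enters. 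The degradation of the failure probability is thus pushed onto the full-process side, where it is harmless by monotonicity of $\inf\!\Delta_{\delta}^{\epsilon'}$ in $\epsilon'$; the passage from $6\delta$ to $8\delta$ is the $2\delta$ of Lemma~\ref{lowerbound}, not a perturbation of the minimizing set. If you reorganize your argument along these lines the rest of your computation (cancellation of $F(\overline{h})$ via $\overline{h}_{s}\leq F(\overline{h})-kT\ln(1-\tau)$, then $\inf_{\mathcal{P}}$ and $\tau\to 0^{+}$) goes through.
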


\begin{Corollary}
\label{AlmostDetermEreasureCost}
Let $h\in\mathbb{R}^{N}$, $q\in\mathbb{P}(N)$, and let $\mathcal{N}$ be a random variable with distribution $q$. Let $\epsilon\in\mathbb{R}$ be such that $0 < \epsilon \leq 1-\frac{1}{\sqrt{2}}$, then 
\begin{equation*}
  \frac{1}{\beta}\ln(1-\epsilon) \leq \mathcal{C}^{\mathrm{erase}}_{\epsilon}(q,h,h,s) -h_s +F^{\epsilon}(h) \leq 0.
\end{equation*}
\end{Corollary}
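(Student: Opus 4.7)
The plan is to obtain Corollary \ref{AlmostDetermEreasureCost} directly from Proposition \ref{AlmDetErasure}, in complete analogy with how Corollary \ref{AlmostDetermWorkCont} is obtained from Proposition \ref{LowerMax}. First I would specialize Proposition \ref{AlmDetErasure} to $h^i = h^f = h$, which gives
\begin{equation*}
h_s - F^{\epsilon}(h) + \tfrac{1}{\beta}\ln(1-\epsilon) - 8\delta \leq \mathcal{C}^{\mathrm{erase}}_{\epsilon,\delta}(q,h,h,s) \leq h_s - F^{\epsilon}(h),
\end{equation*}
valid for all $0 < \epsilon \leq 1 - 1/\sqrt{2}$ and all $0 < \delta < +\infty$.

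Next I would pass to the limit $\delta \to 0^+$ and invoke the definition in Eq.~(\ref{bfdnmyny}). The well-definedness of this limit has already been secured by the monotonicity lemma stated just after Eq.~(\ref{bfdnmyny}): $\mathcal{C}^{\mathrm{erase}}_{\epsilon,\delta}(q,h,h,s)$ is non-decreasing as $\delta$ decreases, so $\lim_{\delta\to 0^+}\mathcal{C}^{\mathrm{erase}}_{\epsilon,\delta}(q,h,h,s) = \mathcal{C}^{\mathrm{erase}}_{\epsilon}(q,h,h,s)$ exists in $(-\infty,+\infty]$. The upper bound $h_s - F^{\epsilon}(h)$ is independent of $\delta$ and passes through unchanged, while the $-8\delta$ correction in the lower bound vanishes. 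Subtracting $h_s - F^{\epsilon}(h)$ from the resulting double inequality and rearranging gives exactly the statement of the corollary.

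There is essentially no obstacle on the corollary level: the entire content is already packaged inside Proposition \ref{AlmDetErasure}, and the reduction is mechanical. All the genuine work lies upstream in the proof of that proposition, which I would expect to mirror the proof of Proposition \ref{LowerMax}: split an admissible erasure process at its last thermalization, bound the two resulting pieces via Lemma \ref{singleLT} and Corollary \ref{MaxThermal} combined through Lemma \ref{mindetaddition}, and finally absorb the constraint $P(\mathcal{F}(\mathcal{P},\mathcal{N}) = s) \geq 1 - \tau$ in the $\tau \to 0^+$ limit (which plausibly accounts for the extra $-2\delta$ in the lower bound of Proposition \ref{AlmDetErasure} as compared to Proposition \ref{LowerMax}).
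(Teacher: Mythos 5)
Your proposal is correct and coincides with the paper's (implicit) derivation: the corollary is obtained exactly by specializing Proposition~\ref{AlmDetErasure} to $h^i=h^f=h$ and letting $\delta\rightarrow 0^{+}$, with the limit justified by the monotonicity of $\mathcal{C}^{\mathrm{erase}}_{\epsilon,\delta}$ in $\delta$. Your side remarks about the structure of the proof of Proposition~\ref{AlmDetErasure} are also consistent with how the paper actually proves it.
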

As a special case of Corollary \ref{AlmostDetermEreasureCost} it follows that the $\epsilon$-deterministic work cost of erasure for the case of a completely degenerate set of energy levels is bounded as $ kT\ln(1-\epsilon) \leq \mathcal{C}^{\mathrm{erase}}_{\epsilon}(q,h,h,s) -kTH_{0}^{\epsilon}(q) \leq 0$.


\section{\label{ProofAlmDetErasure} Proof of Proposition \ref{AlmDetErasure}}

The proof idea of the lower bound in Proposition \ref{AlmDetErasure} is to divide the total erasure process into two parts. The first part is almost the entire process apart from the very last LT.  For the first part we can apply our results on work extraction in Proposition \ref{LowerMax} to find a bound on the work cost. We next observe that the very last LT is very constrained by the requirement that the system with high probability should end up in  state $s$. This leads to a bound on the work cost. To prove the upper bound in  Proposition \ref{AlmDetErasure} we define a specific sequence of processes for which the $\epsilon$-deterministic erasure cost converge to the upper bound in Proposition \ref{AlmDetErasure}.

\subsection{\label{BoundErasure} Proof of the lower bound in Eq.~(\ref{ndfjb})}

\begin{Lemma}
\label{anvvaks}
Let $h^i,h^f\in\mathbb{R}^{N}$, let $\mathcal{N}$ be a random variable distributed $q\in\mathbb{P}(N)$, and let $s\in \{1,\ldots, N\}$.  Let $0 < \tau < \epsilon \leq 1-\frac{1}{\sqrt{2}}$, and $0<\delta<+\infty$.  
Then
\begin{equation}
\label{amvaa}
\begin{split}
 \inf\!\Delta_{\delta}^{\epsilon}\boldsymbol{(}W(\mathcal{P},\mathcal{N})\boldsymbol{)} \geq &  h^f_s  - F^{\epsilon}(q,h^i) -8\delta\\
  & + \frac{1}{\beta}\ln[(1-\epsilon)(1-\tau)], 
 \end{split}
\end{equation}
for all $\mathcal{P}\in\mathscr{P}^{\tau}_{s}(q,h^i,h^{f})$.
\end{Lemma}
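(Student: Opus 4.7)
The plan is to split $\mathcal{P}$ at the last thermalization and bound the two resulting pieces with Proposition \ref{LowerMax} and an elementary estimate on a single LT, respectively. First I would dispose of the degenerate case in which $\mathcal{P}$ contains no thermalization at all: then $\mathcal{P}$ is a single LT, the final state coincides with the initial one, and the constraint $\mathcal{P}\in\mathscr{P}^{\tau}_{s}(q,h^i,h^f)$ forces $q_s\geq 1-\tau>1-\epsilon$. Consequently $\{s\}$ is admissible in the definition of $F^{\epsilon}(q,h^i)$, which gives $F^{\epsilon}(q,h^i)\geq h^i_s$. The work cost $h^f_{\mathcal{N}}-h^i_{\mathcal{N}}$ takes the value $h^f_s-h^i_s$ with probability at least $1-\tau>1-\epsilon$, so Lemma \ref{lowerbound} yields $\inf\!\Delta_{\delta}^{\epsilon}\boldsymbol{(}W(\mathcal{P},\mathcal{N})\boldsymbol{)}\geq h^f_s-h^i_s-2\delta\geq h^f_s-F^{\epsilon}(q,h^i)-2\delta$, which exceeds the right-hand side of Eq.~(\ref{amvaa}) since $\frac{1}{\beta}\ln[(1-\epsilon)(1-\tau)]\leq 0$ and $-2\delta\geq -8\delta$.

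For the generic case I would write $\mathcal{P}=\mathcal{P}_1\cdot\mathcal{P}_2$, where $\mathcal{P}_1\in\mathscr{P}(h^i,\bar{h})$ comprises everything up to and including the final thermalization (performed at some intermediate configuration $\bar{h}$), and $\mathcal{P}_2$ collects the remaining LTs, which can be combined into a single LT from $\bar{h}$ to $h^f$. The state after $\mathcal{P}_1$, $\mathcal{N}':=\mathcal{F}(\mathcal{P}_1,\mathcal{N})$, is Gibbs distributed $G(\bar{h})$ and equals the overall final state $\mathcal{F}(\mathcal{P},\mathcal{N})$, so $\mathcal{P}\in\mathscr{P}^{\tau}_{s}$ translates to $G_s(\bar{h})\geq 1-\tau$. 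The variables $X_1:=W(\mathcal{P}_1,\mathcal{N})$ and $X_2:=W(\mathcal{P}_2,\mathcal{N}')$ are independent because they are separated by a thermalization, and $X_2=h^f_{\mathcal{N}'}-\bar{h}_{\mathcal{N}'}$ attains the single value $x_2^{\ast}:=h^f_s-\bar{h}_s$ with probability at least $G_s(\bar{h})\geq 1-\tau$.

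The heart of the argument is a sharpening of Lemma \ref{mindetaddition} that exploits the near-deterministic character of $X_2$. For any $x_1\in\Delta_{\delta}^{\epsilon}(X_1)$, independence gives $P(|X_1+X_2-(x_1+x_2^{\ast})|\leq\delta)>(1-\epsilon)(1-\tau)$. Since $\tau<\epsilon\leq 1-1/\sqrt{2}$, this lower bound is at least $(1-\epsilon)^2\geq 1/2$, so Lemma \ref{lowerbound} applies with $\zeta:=1-(1-\epsilon)(1-\tau)\leq 1/2$ and produces $x_1+x_2^{\ast}-2\delta\leq \inf\!\Delta_{\delta}^{\zeta}(X_1+X_2)$. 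Because $\zeta\geq\epsilon$, Lemma \ref{decreaseineps} gives $\inf\!\Delta_{\delta}^{\epsilon}(X_1+X_2)\geq \inf\!\Delta_{\delta}^{\zeta}(X_1+X_2)$, and taking the infimum over $x_1$ then yields $\inf\!\Delta_{\delta}^{\epsilon}(X_1+X_2)\geq \inf\!\Delta_{\delta}^{\epsilon}(X_1)+x_2^{\ast}-2\delta$. This trims the trailing-LT contribution to $2\delta$, whereas a direct invocation of Lemma \ref{mindetaddition} would cost $4\delta$; this is precisely what allows the final bound to close at $-8\delta$ rather than $-12\delta$.

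To finish, I would apply Proposition \ref{LowerMax} to $\mathcal{P}_1\in\mathscr{P}(h^i,\bar{h})$ to obtain $\inf\!\Delta_{\delta}^{\epsilon}(X_1)\geq F(\bar{h})-F^{\epsilon}(q,h^i)+\frac{1}{\beta}\ln(1-\epsilon)-6\delta$, and then use the identity $F(\bar{h})-\bar{h}_s=\frac{1}{\beta}\ln G_s(\bar{h})\geq \frac{1}{\beta}\ln(1-\tau)$ to assemble everything into Eq.~(\ref{amvaa}). The key obstacle is exactly the refined addition step in the previous paragraph: a naive use of Lemma \ref{mindetaddition} loses $4\delta$ on the trailing LT instead of $2\delta$, and one must exploit that $X_2$ is concentrated on a single atom, with failure probability governed by $\tau$ rather than $\epsilon$, in order to recover the advertised constant.
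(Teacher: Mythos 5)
Your proof is correct and follows essentially the same route as the paper's: split at the last thermalization, exploit that the trailing LT's work cost is an atom at $h^f_s-\bar{h}_s$ with probability at least $1-\tau$ so that Lemma \ref{lowerbound} applies at level $\epsilon+\tau-\epsilon\tau\leq 1/2$ with only a $2\delta$ loss, then invoke Proposition \ref{LowerMax} on the first piece together with $F(\bar{h})-\bar{h}_s=\frac{1}{\beta}\ln G_s(\bar{h})\geq\frac{1}{\beta}\ln(1-\tau)$. The only step the paper makes explicit that you leave implicit is the disposal of the case $\inf\!\Delta_{\delta}^{\epsilon}\boldsymbol{(}W(\mathcal{P},\mathcal{N})\boldsymbol{)}=+\infty$ and the appeal to Lemma \ref{finite} to guarantee $\Delta_{\delta}^{\epsilon}(X_1)\neq\emptyset$ in the remaining cases.
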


\begin{proof}
First we note that Eq.~(\ref{amvaa}) is trivially true for processes $\mathcal{P}$ such that $\inf\!\Delta_{\delta}^{\epsilon}\boldsymbol{(}W(\mathcal{P}),\mathcal{N}\boldsymbol{)} = +\infty$. Hence, without loss of generality we may in the following restrict to processes $\mathcal{P}$ is such that $\inf\!\Delta_{\delta}^{\epsilon}\boldsymbol{(}W(\mathcal{P}),\mathcal{N}\boldsymbol{)} < +\infty$. Since $\mathcal{P}$ can be regarded as an alternating sequence of LTs and thermalizations, we can distinguish two cases:  $\mathcal{P}$ contains no thermalization, and thus effectively consists only of a single LT, or $\mathcal{P}$ contains at least one thermalization.

In the first case $\mathcal{P}$ consists only of a single LT. Hence, this LT must transform $h^i$ to $h^f$. Since an LT does not change the distribution of the state, we must have $q_s  \geq 1-\tau$. Since $\tau<\epsilon \leq 1-1/\sqrt{2} <1/2$ it means that $q_s > 1/2$. Hence, any subset $\Lambda\subseteq\{1,\ldots,N\}$, with $q(\Lambda)>1-\epsilon \geq 1/2$ \emph{must} contain $s$. Moreover, since $\tau<\epsilon$ implies $q_s \geq 1-\tau > 1-\epsilon$, it is enough if $\Lambda = \{s\}$ for $q(\Lambda) > 1-\epsilon$ to hold. One can thus realize that  $F^{\epsilon}(q,h^i) = -\ln\inf_{q(\Lambda)>1-\epsilon}\sum_{n\in\Lambda}e^{-\beta h^i_n} = h^i_{s}$.  Furthermore, since $q_s \geq 1-\tau > 1-\epsilon > 1/2$, it follows that $h^f_s-h^i_s$ is an $(\epsilon,\delta)$-deterministic value of $W(\mathcal{P})$. By Lemma \ref{lowerbound}, we thus have $h^f_s-h^i_s -2\delta\leq  \inf\!\Delta_{\delta}^{\epsilon}\boldsymbol{(}W(\mathcal{P}),\mathcal{N}\boldsymbol{)}$. We can conclude that the inequality in Eq.~(\ref{amvaa}) is satisfied.

The second case is that the process contains at least one thermalization. We may thus decompose $\mathcal{P}$ into two parts.
The first part, $\mathcal{P}^{(1)}$, is the entire process up to (and including) the last thermalization. This thermalization is done with respect to some set of energy levels $h'$ and leads to the state $\mathcal{N}'$, which is distributed $G(h')$.  The second part, $\mathcal{P}^{(2)}$, consists only of a single LT that takes $h'$ to $h^f$. Due to the thermalization at the end of $\mathcal{P}^{(1)}$, it follows that $W(\mathcal{P}^{(1)},\mathcal{N})$ and $W(\mathcal{P}^{(2)},\mathcal{N}')$ are independent. Since $\inf\!\Delta_{\delta}^{\epsilon}\boldsymbol{(}W(\mathcal{P}^{(1)},\mathcal{N}) + W(\mathcal{P}^{(2)},\mathcal{N}')\boldsymbol{)} < +\infty$, by assumption, it follows by Lemma \ref{finite} that 
$ \inf\!\Delta_{\delta}^{\epsilon}\boldsymbol{(}W(\mathcal{P}^{(1)},\mathcal{N})\boldsymbol{)} < +\infty$. Hence, there exists a $w\in \mathbb{R}$ such that 
\begin{equation}
\label{Aepsdet}
P\boldsymbol{(}|W(\mathcal{P}^{(1)},\mathcal{N})-w|\leq \delta\boldsymbol{)} > 1-\epsilon.
\end{equation}
Let us now consider the process $\mathcal{P}^{(2)}$. Since  $\mathcal{P}\in\mathscr{P}^{\tau}_{s}(q,h^{i},h^{f})$ it follows that $h'$ must be such that $G_{s}(h') \geq 1-\tau$, which we can rewrite as
\begin{equation}
\label{condonhs}
h'_s   \leq -\frac{1}{\beta}\ln(1-\tau)+ F(h').
\end{equation}
Note that due to the assumption $\tau >0$ there exists a $h'\in\mathbb{R}^{N}$ that satisfies this condition. 

Furthermore, $P\boldsymbol{(} W(\mathcal{P}^{(2)},\mathcal{N}') = h^f_s-h'_s \boldsymbol{)} \geq 1-\tau$.
 By combining this observation with Eq.~(\ref{Aepsdet}) we find (using the independence of $W(\mathcal{P}^{(1)},\mathcal{N})$ and $W(\mathcal{P}^{(2)},\mathcal{N}')$) that
\begin{equation*}
\begin{split}
& P\boldsymbol{(}|W(\mathcal{P},\mathcal{N}) -w -h^f_s + h'_s|\leq \delta\boldsymbol{)} \\
 \geq &  P\boldsymbol{(}|W(\mathcal{P}^{(1)},\mathcal{N})-w|\leq \delta\boldsymbol{)}P\boldsymbol{(}W(\mathcal{P}^{(2)},\mathcal{N}')= h^f_s-h'_s\boldsymbol{)}\\
 > & (1-\epsilon)(1-\tau).
\end{split}
\end{equation*}
The conditions $0 < \tau < \epsilon \leq 1- 1/\sqrt{2}$ implies  $0< \epsilon + \tau -\epsilon\tau \leq 1/2$. 
This enables us to apply Lemma \ref{lowerbound} to the above inequality, with the result 
\begin{equation}
w + h^f_s-h'_s -2\delta \leq  \inf\!\Delta_{\delta}^{\epsilon + \tau-\epsilon\tau}\boldsymbol{(}W(\mathcal{P}),\mathcal{N}\boldsymbol{)}.
\end{equation}
Since $\epsilon + \tau-\epsilon\tau \geq \epsilon$
 it follows by Lemma \ref{decreaseineps}, that 
\begin{equation}
w + h^f_s-h'_s -2\delta \leq  \inf\!\Delta_{\delta}^{\epsilon}\boldsymbol{(}W(\mathcal{P},\mathcal{N})\boldsymbol{)}.
\end{equation}
Since $w$ is an $(\epsilon,\delta)$-deterministic value of $W(\mathcal{P}^{(1)},\mathcal{N})$ we can conclude that $ \inf\!\Delta_{\delta}^{\epsilon}\boldsymbol{(}W(\mathcal{P}^{(1)},\mathcal{N})\boldsymbol{)} \leq w$. Since $\epsilon \leq 1- 1/\sqrt{2}$, Proposition \ref{LowerMax} yields 
\begin{equation*}
\begin{split}
\inf\!\Delta_{\delta}^{\epsilon}\boldsymbol{(}W(\mathcal{P},\mathcal{N})\boldsymbol{)} \geq &  h^f_s  - F^{\epsilon}(q,h^i)  -h'_s \\
& + F(h')+ \frac{1}{\beta}\ln(1-\epsilon)  -8\delta.
\end{split}
\end{equation*}
By combining this with Eq.~(\ref{condonhs}) we find Eq.~(\ref{amvaa}).
\end{proof}

\begin{proof}[Proof of the lower bound in Eq.~(\ref{ndfjb}).]
The lower bound in Eq.~(\ref{ndfjb}) follows from Lemma \ref{anvvaks} if we first take the infimum over  all processes in $\mathscr{P}^{\tau}_{s}(q,h^i,h^f)$, then take the limit $\tau\rightarrow 0$.
\end{proof}

\subsection{Proof of the upper bound in Eq.~(\ref{ndfjb})} 
 \begin{Lemma}
\label{nklfbanlkb}
Let $h^i,h^f\in\mathbb{R}^{N}$, let $\mathcal{N}$ be a random variable with distribution $q\in\mathbb{P}(N)$, and let $s\in \{1,\ldots, N\}$.  Let $0 < \tau < \epsilon \leq 1$, $0<\delta<+\infty$, $0<\xi<1$. Then there exists a  $\mathcal{P}\in\mathscr{P}^{\tau}_{s}(q,h^i,h^{f})$  and $w\in\mathbb{R}$ such that $w$ is an $(\epsilon,\delta)$-deterministic value of $W(\mathcal{P},\mathcal{N})$, and
\begin{equation}
\label{utrihie}
\begin{split}
w \leq & \xi +  h^f_s +  \frac{1}{\beta}\ln(1-\tau)\\
& -F^{(\epsilon-\tau)/(1-\tau)}(q,h^i).
\end{split}
\end{equation}
\end{Lemma}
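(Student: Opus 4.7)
The plan is to construct a process analogous to the one used in Lemma~\ref{ExistenceMindet}, augmented by a final level transformation that drives the system into state $s$ with probability at least $1-\tau$. Write $\epsilon' := (\epsilon-\tau)/(1-\tau) \in (0,1)$, so that $(1-\tau)(1-\epsilon') = 1-\epsilon$. Let $\Lambda^* \subseteq \{1,\ldots,N\}$ be a minimizer for $F^{\epsilon'}(q,h^i)$, so $q(\Lambda^*) > 1-\epsilon'$ strictly and $F^{\epsilon'}(q,h^i) = -\frac{1}{\beta}\ln Z_{\Lambda^*}(h^i)$. I also fix a configuration $h''$ with $G_s(h'') = 1-\tau$, e.g.\ $h''_s = 0$ and $h''_n = H$ for $n \neq s$ with $H$ chosen so that $(N-1)e^{-\beta H} = \tau/(1-\tau)$. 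A short calculation from $G_s(h'') = e^{-\beta h''_s}/Z(h'')$ yields $F(h'') - h''_s = \frac{1}{\beta}\ln G_s(h'') = \frac{1}{\beta}\ln(1-\tau)$.

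Now build $\mathcal{P}$ as the concatenation of: (a) a single LT $\mathcal{P}^{(1)}$ taking $h^i$ to the configuration $h'$ defined by $h'_n = h^i_n$ for $n \in \Lambda^*$ and $h'_n = h^i_n + E$ for $n \notin \Lambda^*$, with a large parameter $E > 0$; (b) a thermalization at $h'$; (c) an approximate ITR $\mathcal{P}^{(2)} \in \mathscr{P}(h',h'')$ supplied by Lemma~\ref{LemmaITR}, tuned so that $F(h'')-F(h')$ is within $\delta$ of the ITR work cost with failure probability at most some $\eta_2$ to be chosen, and arranged so that it concludes with a thermalization at $h''$; (d) a single LT $\mathcal{P}^{(3)}$ taking $h''$ to $h^f$. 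The final state is then $\mathcal{N}''$ distributed $G(h'')$, so $P\boldsymbol{(}\mathcal{F}(\mathcal{P},\mathcal{N})=s\boldsymbol{)} = G_s(h'') = 1-\tau$, placing $\mathcal{P} \in \mathscr{P}^{\tau}_s(q,h^i,h^f)$. The contributions $W_1 = W(\mathcal{P}^{(1)},\mathcal{N})$, $W_2 = W(\mathcal{P}^{(2)},\mathcal{N}')$ and $W_3 = W(\mathcal{P}^{(3)},\mathcal{N}'')$ are mutually independent because they are separated by thermalizations.

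Set $w := \boldsymbol{(}F(h'')-F(h')\boldsymbol{)} + (h^f_s - h''_s)$, which by the identity above collapses to $w = h^f_s + \frac{1}{\beta}\ln(1-\tau) - F(h')$. The joint event $\{W_1=0\} \cap \{|W_2 - \boldsymbol{(}F(h'')-F(h')\boldsymbol{)}| \leq \delta\} \cap \{W_3 = h^f_s - h''_s\}$ implies $|W(\mathcal{P},\mathcal{N}) - w| \leq \delta$; by independence its probability is at least $q(\Lambda^*)(1-\eta_2)G_s(h'') = (1-\tau)(1-\eta_2)q(\Lambda^*)$. Since $Z(h') = Z_{\Lambda^*}(h^i) + e^{-\beta E}Z_{\Lambda^{*c}}(h^i)$, one has $F(h') \nearrow F^{\epsilon'}(q,h^i)$ as $E \to \infty$, so choosing $E$ sufficiently large forces $w$ to lie within $\xi$ of the target $h^f_s + \frac{1}{\beta}\ln(1-\tau) - F^{\epsilon'}(q,h^i)$, yielding inequality~\eqref{utrihie}.

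The main obstacle is the bookkeeping of strict inequalities required for $w \in \Delta^{\epsilon}_{\delta}\boldsymbol{(}W(\mathcal{P},\mathcal{N})\boldsymbol{)}$. The definition of $F^{\epsilon'}$ supplies only the strict gap $\rho := q(\Lambda^*) - (1-\epsilon') > 0$, while the erasure choice $G_s(h'') = 1-\tau$ saturates its threshold and the approximate ITR forces a genuine factor $(1-\eta_2) < 1$. This slack $\rho$ must be deliberately exploited: taking $\eta_2 < \rho/(1-\epsilon'+\rho)$ gives $(1-\eta_2)q(\Lambda^*) > 1-\epsilon'$, so the joint probability strictly exceeds $(1-\tau)(1-\epsilon') = 1-\epsilon$, confirming that $w$ is indeed an $(\epsilon,\delta)$-deterministic value of $W(\mathcal{P},\mathcal{N})$ and completing the construction.
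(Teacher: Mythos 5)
Your proposal is correct and follows essentially the same route as the paper: the paper also introduces $\overline{\epsilon}=(\epsilon-\tau)/(1-\tau)$, prepends an extraction stage targeting $F^{\overline{\epsilon}}(q,h^i)$, passes through an intermediate configuration with $G_s=1-\tau$, and finishes with a single LT to $h^f$ whose cost conditional on state $s$ combines with $F(h'')-h''_s=\tfrac{1}{\beta}\ln(1-\tau)$ exactly as in your calculation. The only difference is that the paper invokes Lemma~\ref{ExistenceMindet} as a black box for the first stage (where the strict-inequality slack $q(\Lambda^*)>1-\overline{\epsilon}$ is already absorbed), whereas you inline that construction and track the slack $\rho$ explicitly via your choice of $\eta_2$ — both handle the strictness correctly.
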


\begin{proof} 
We construct $\mathcal{P}$ as a concatenation of a process $\mathcal{P}^{(1)}$, a thermalization, and a process $\mathcal{P}^{(2)}$. 

We begin by constructing the process $\mathcal{P}^{(1)}$.
For a given real number $E$, define $h'_{s} := h^f_s-E$ and $h'_{n} := h^f_n$ for all $n\neq s$. 
Let us choose 
\begin{equation}
 E  :=  \frac{1}{\beta}\ln\Big( \frac{1}{\tau}-1\Big) +\frac{1}{\beta}\ln\Big( e^{\beta h^f_s} \sum_{n\neq s}e^{-\beta h^f_n}\Big).
\end{equation}
This choice yields $G_s(h') = 1-\tau$ and
\begin{equation}
\label{ymvbmy}
F(h') = -E + h_{s}^{f} + \frac{1}{\beta}\ln(1-\tau).
\end{equation}
Define $\overline{\epsilon} := (\epsilon-\tau)/(1-\tau)$.  By the assumptions on $\epsilon$ and $\tau$  it follows that $0< \overline{\epsilon}< 1$. 
By Lemma  \ref{ExistenceMindet} we thus know that there exist a process $\mathcal{P}^{(1)}\in \mathcal{P}(h^{i},h')$ and an $(\overline{\epsilon},\delta)$-deterministic value $w^{(1)}$ of  $W(\mathcal{P}^{(1)},\mathcal{N})$ such that  
\begin{equation}
\label{bgjkabd}
w^{(1)} \leq \xi + F(h')-F^{\overline{\epsilon}}(q,h^i).
\end{equation}
We next turn to the process $\mathcal{P}^{(2)}$, and let it be the LT that takes $h'$ to $h^f$.  For this process $W(\mathcal{P}^{(2)},\mathcal{N}') =  h^f_{\mathcal{N}'} -h'_{\mathcal{N}'} = E\delta_{s,\mathcal{N}'}$, where $\mathcal{N}'$ is Gibbs distributed $G(h')$. Furthermore, $P\boldsymbol{(} W(\mathcal{P}^{(2)},\mathcal{N}')= E\boldsymbol{)}  = G_{s}(h') = 1-\tau$.

Let the total process $\mathcal{P}$ be a concatenation of $\mathcal{P}^{(1)}$, followed by a thermalization, and the process $\mathcal{P}^{(2)}$. Due to the thermalization, $W(\mathcal{P}^{(1)},\mathcal{N})$ and $W(\mathcal{P}^{(2)},\mathcal{N}')$ are independent, and thus 
$ P\boldsymbol{(}|W(\mathcal{P},\mathcal{N}) -w^{(1)} -E| \leq \delta\boldsymbol{)}
 \geq  P\boldsymbol{(}|W(\mathcal{P}^{(1)},\mathcal{N}) -w^{(1)} | \leq \delta\boldsymbol{)}P\boldsymbol{(}W(\mathcal{P}^{(2)},\mathcal{N}') = E\boldsymbol{)} >  1-\epsilon$.
Hence, $w : = w^{(1)} +E$ is an $(\epsilon,\delta)$-deterministic value of $W(\mathcal{P},\mathcal{N})$. By combining this with the inequalities in Eqs.~(\ref{ymvbmy}) and (\ref{bgjkabd}) we find the statement of the lemma.
\end{proof}

\begin{proof}[Proof of the upper bound in Eq.~(\ref{ndfjb}).]
Let $\xi_{m} := 1/m$ and let  $\tau_m :=\epsilon/m$ for each $m\in\mathbb{N}$ with $m\geq 2$. By Lemma \ref{nklfbanlkb} we know that for each $m$ there exists a process $\mathcal{P}_{m}\in\mathscr{P}^{\tau_m}_{s}(q,h^{i},h^{f})$ and $w_m\in\mathbb{R}$ such that $w_m$ is an $(\epsilon,\delta)$-deterministic value of $W(\mathcal{P}_{m},\mathcal{N})$, and satisfies the inequality
\begin{equation}
\label{dzkckz}
\begin{split}
w_m \leq & \frac{1}{m} +  h^f_s   +  \frac{1}{\beta}\ln(1-\frac{\epsilon}{m})\\
& -F^{\epsilon(1-m^{-1})/(1-\epsilon m^{-1})}(q,h^i).
\end{split}
\end{equation}
Note that $\epsilon(1-m^{-1})/(1-\epsilon m^{-1})$ increases monotonically to $\epsilon$ for increasing $m$. Hence, by the left-continuity of $F^{\epsilon}$ with respect to $\epsilon$, Lemma \ref{Fcontinuity}, it follows that $\lim_{m\rightarrow\infty}F^{\epsilon(1-m^{-1})/(1-\epsilon m^{-1})}(q,h^i) = F^{\epsilon}(q,h^i)$. Thus, the right hand side of Eq.~(\ref{dzkckz}) converges to $h^f_s  -F^{\epsilon}(q,h^i)$, which is the upper bound of Eq.~(\ref{ndfjb}). Furthermore $\tau_{m}$ goes to zero as $m$ increases, which thus proves the upper bound of Eq.~(\ref{ndfjb}).
\end{proof}


\section{ \label{comparisons} Comparisons}

Here we compare the expected work extraction with the $\epsilon$-deterministic work extraction for some simple examples.  For the sake of simplicity we  focus on the work yield quantities $\mathcal{A}(q,h)$ and $\mathcal{A}^{\epsilon}(q,h)$, rather than the more general work cost quantities $\mathcal{C}^{\textrm{extr}}(q,h^i,h^f)$ and $\mathcal{C}^{\textrm{extr}}_{\epsilon}(q,h^i,h^f)$. We will furthermore consider the  fluctuations (as described in Sec.~\ref{fluctuations}) in the optimal expected work extraction.
To quantify the size of these fluctuations we use the standard deviation of the work yield variable $W_{\textrm{yield}} : = W_{\textrm{yield}}(h,\mathcal{N})$ as defined by Eq.~(\ref{fluctyield}). In other words, we measure the size of the fluctuation by
\begin{equation}
\label{mvydlkm}
\sigma(W_{\textrm{yield}}) := \sqrt{\langle W_{\textrm{yield}}^2\rangle -\langle W_{\textrm{yield}} \rangle^2}.
\end{equation}
 As we have already discussed in Sec.~\ref{StandardDeviation}, this quantity gives the minimal standard deviation for any sequence of processes that achieves the maximal extracted work.

Similarly as $\mathcal{A}(q,h)$ is directly related to the relative Shannon entropy, the quantity $\sigma(W_{\textrm{yield}})$ can be related to an analogous quantity.
Given a random variable $X$ with distribution $q\in\mathbb{P}(N)$, the relative Shannon entropy between $q$  and another distribution $r\in\mathbb{P}(N)$ can be expressed as $D(q\Vert r) = \langle \log_{2}[q(X)/r(X)]\rangle$, i.e., as an expectation value of the random variable $\log_{2}[q(X)/r(X)]$.
In an analogous manner we define the standard deviation of the random variable $\log_{2}[q(X)/r(X)]$ as
\begin{equation}
\label{defrelsigma}
\sigma(q\Vert r) :=   \sqrt{\left\langle \left(\log_{2}\frac{q(X)}{r(X)}\right)^{2}\right\rangle - \left\langle \log_{2}\frac{q(X)}{r(X)}\right\rangle^{2}} 
\end{equation}
(This is the same quantity as we defined in Sec.~\ref{StandardDeviation}.)
By combining Eqs.~(\ref{fluctyield}) and (\ref{mvydlkm}) we find that
\begin{equation}
\label{stddeviation}
\sigma(W_{\textrm{yield}})  = kT\ln(2)\sigma\boldsymbol{(}q\Vert G(h)\boldsymbol{)}.
\end{equation}

In the following we shall  compare how $\mathcal{A}(q,h)$, $\sigma(W_{\textrm{yield}})$, and $\mathcal{A}^{\epsilon}(q,h)$ scale with the system size. For this purpose we will in the following consider systems that consist of $m$ `units' of some type (qubits, spins, etc). For each number $m$ we shall have an initial distribution $q^m$ and a collection of energy levels $h^{m}$. We compare the three quantities $\mathcal{A}(q^m,h^m)$, $\sigma(W^{m}_{\textrm{yield}})$, and $\mathcal{A}^{\epsilon}(q^m,h^m)$, in terms of their scalings in $m$. (Define $W^{m}_{\textrm{yield}} := W_{\textrm{yield}}(q^{m},h^{m})$.)  More precisely, we compare the leading order terms of these quantities in the limit of large $m$.

For these comparisons we use the asymptotic equivalence. Two functions $f(m)$ and $g(m)$ are asymptotically equivalent (with respect to $m\rightarrow +\infty$) denoted $f(m)\sim g(m)$, if $\lim_{m\rightarrow +\infty}[f(m)/g(m)] = 1$.
This means that $f$ and $g$ have the same leading order. 

We will also make use of an expansion up to the next to leading order in increasing $m$.
Let  $c_1$ and $c_1$ be constants, and $g_1$ and $g_2$ functions such that $g_2 = o(g_1)$ (where the latter means that $\lim_{m\rightarrow +\infty}[g_2(m)/g_1(m)]=0$), and suppose that
\begin{equation}
f(m) = c_1 g_1(m) + c_2g_2(m) + o\boldsymbol{(} g_2(m)\boldsymbol{)}.
\end{equation}
Then we say that we have a next to leading order expansion of $f$. In our case we will use $g_1(m): = m$ and $g_2(m) := \sqrt{m}$. (For a more general introduction to the notion of asymptotic expansions, see e.g.~\cite{Copson}.)

\subsection{\label{IndepIdentical}Independent, identical, and non-interacting systems}

We begin with an example where the fluctuations in the expected work extraction in some sense are small.

Consider $m$ copies of a system. These copies are independent and identical both in terms of their state distributions as well as their Hamiltonians. More precisely, we assume that the distribution $q^{m}$ of the totality of the $m$ systems is a product distribution, i.e., $q^m_{l_1,\cdots, l_m} = q_{l_1}\cdots q_{l_m}$, for some single-system distribution $q$. We will denote this $m$-fold product distribution as $q^{\otimes m}$. We furthermore assume that the systems do not interact, and that all of them have the same Hamiltonian. In terms of our model, this means that set of energy levels $h^m$ for the total system can be written $h^{m}_{l_1,\dots,l_m} = h_{l_1} +\cdots + h_{l_m}$, for some single-system set of energy levels $h$. We denote this $m$-fold direct sum by $h^{\oplus m}$.
Note that the Gibbs distribution corresponding to such a collection of identical non-interacting Hamiltonians  is a product distribution, $G(h^{\oplus m}) = G(h)^{\otimes m}$.

Due to the additivity of the relative Shannon entropy, the expected work content is
\begin{equation}
\mathcal{A}(q^{\otimes m},h^{\oplus m}) = mkT\ln(2) D\boldsymbol{(}q\Vert G(h)\boldsymbol{)}.
\end{equation}
Hence, the expected work content grows proportionally to the system size $m$.

Next, we determine the size of the fluctuations in the expected work extraction in terms of the quantity $\sigma(W_{\textrm{yield}}^m)$. By using the fact that
\begin{equation}
\sigma(q_aq_b\Vert r_ar_b)^2 = \sigma(q_a \Vert r_a)^2 +  \sigma(q_b \Vert r_b)^2,
\end{equation}
one finds 
\begin{equation}
\begin{split}
\sigma(W_{\textrm{yield}}^m) = \sqrt{m}kT\ln(2) \sigma\boldsymbol{(}q\Vert G(h)\boldsymbol{)}.
\end{split}
\end{equation}
Hence, as anticipated, the fluctuations only grow at the order of $\sqrt{m}$.

We furthermore wish to determine how $\mathcal{A}^{\epsilon}(q^{\otimes m},h^{\oplus m})$ scales with $m$.
For sufficiently small $\epsilon$ we know from Corollary \ref{AlmostDetermWorkCont} that this reduces to the question of how $D^{\epsilon}_{0}\boldsymbol{(}q^{\otimes m}\Vert G(h)^{\otimes m} \boldsymbol{)}$ scales with $m$. 
In Proposition \ref{nextorderasymptotics} (in Sec.~\ref{SecondOrderAEP}  below) we determine the next to leading order of the latter quantity, which yields
\begin{equation}
\label{ynvymvyn}
\begin{split}
\mathcal{A}^{\epsilon}(q^{\otimes m},h^{\oplus m}) = & mkT\ln(2)D\boldsymbol{(}q\Vert G(h)\boldsymbol{)} \\
& +  \sqrt{m}kT\ln(2)\Phi^{-1}(\epsilon)\sigma\boldsymbol{(}q\Vert G(h)\boldsymbol{)}\\
& + o(\sqrt{m}).
\end{split}
\end{equation}
Here $\Phi^{-1}$ denotes the inverse of the cumulative distribution function of the standard normal distribution $\Phi(x) := \int_{-\infty}^{x} e^{-x^2/2}dx/\sqrt{2\pi}$.
Hence, to the leading order, the $\epsilon$-deterministic work content is the same as the expected work content. The difference only shows up at the next to leading order, where the work yield is lowered by $\sqrt{m}kT\ln(2)\Phi^{-1}(\epsilon)\sigma\boldsymbol{(}q\Vert G(h)\boldsymbol{)}$. Note that $\Phi^{-1}(\epsilon)<0$ for $\epsilon<1/2$, and $\Phi^{-1}(\epsilon) \rightarrow -\infty$ for $\epsilon\rightarrow 0$.

A consequence of Eq.~(\ref{ynvymvyn}) is that 
\begin{equation*}
\lim_{m\rightarrow \infty} \frac{1}{m}\mathcal{A}^{\epsilon}(q^{\otimes m},h^{\oplus m}) = kT\ln(2)D\boldsymbol{(}q\Vert G(h)\boldsymbol{)}.
\end{equation*}
This can alternatively be obtained as a special case of the asymptotic rate of interconversion of quantum states that was proved in \cite{Brandao11} in a resource theory framework.


\subsection{\label{SecondOrderAEP} A next to leading order AEP for $D^{\epsilon}_{0}$}

Here we determine the asymptotic expansion in $m$ of $D^{\epsilon}_{0}(q^{\otimes m}\Vert r^{\otimes m})$ up to the next to leading order. (See \cite{Tomamichel09, Holenstein11} for  leading order expansions for conditional entropies.) Note also very recent results in \cite{Tomamichel12} concerning second order expansions of quantum entropies.

In classical information theory the concept of relative entropy typical sequences is introduced. This concept stems from the asymptotic equipartition property \cite{CoverThomas}, which in turn essentially is an application of the law of large numbers. As described in Sec.~11.8 of \cite{CoverThomas},  a sequence $(n_1,\ldots, n_m)\in \{1,\ldots, N\}^{\times m}$ is called relative entropy typical if $D(q\Vert r) -\epsilon \leq \log_2[q(n_1)\cdots q(n_m)/ r(n_1)\cdots r(n_m)] \leq D(q\Vert r) +\epsilon$. One can attempt to determine the expansion using this construction.  Properties 2 and 3 in Theorem 11.8.2 in \cite{CoverThomas} yields the upper bound, and Lemma 11.8.1 in \cite{CoverThomas} the lower bound in 
\begin{equation*}
\begin{split}
nD(q\Vert r) - n\epsilon < & D^{\epsilon}_{0}(q^{\otimes n}\Vert r^{\otimes n})\\
 < & -\log_2(1-2\epsilon) + nD(q\Vert r) + n\epsilon.
\end{split}
\end{equation*}
With these bounds one can prove that $\lim_{\epsilon\rightarrow 0}\lim_{n\rightarrow\infty} \frac{1}{n}D^{\epsilon}_{0}(q^{\otimes n}\Vert r^{\otimes n}) = D(q\Vert r)$. However, they are not strong enough to show that $D^{\epsilon}_{0}(q^{\otimes n}\Vert r^{\otimes n}) \sim nD(q\Vert r)$ for fixed $\epsilon$. As a second attempt, one can construct two sets of sequences (the ones in Def.~\ref{Defupperlower}) that are related to the central limit theorem rather than the law of large numbers.  Via the central limit theorem one can use these two sets to prove that the leading order in the expansion is $D^{\epsilon}_{0}(q^{\otimes n}\Vert r^{\otimes n}) \sim nD(q\Vert r)$. However, since this is not quite enough for our purposes we will not consider this proof here. To obtain also the next to leading order in the expansion, we take one step further, so to speak, and use Berry-Esseen's theorem, which bounds the rate of convergence in the central limit theorem.

We let $\Phi(y) := \int_{-\infty}^{y}e^{-x^2/2}/\sqrt{2\pi}dx$ denote the cumulative distribution function of the standard normal distribution.
Due to Berry \cite{Berry} and Esseen \cite{Esseen} we know the following:
\begin{Theorem}[Berry-Esseen \cite{Berry, Esseen}]
Let $Y_{1},\ldots, Y_{m}$ be iid random variables such that $\mu  :=  \langle Y\rangle$ exists, $\sigma^2 :=  \langle Y^{2}\rangle -\mu^2$ exists, with $\sigma>0$, and $\rho :=  \langle  |Y-\mu|^{3} \rangle <+\infty$.
Then,
\begin{equation*}
\Big| P\Big( \frac{\sqrt{m}}{\sigma} \Big[\frac{1}{m}\sum_{l=1}^{m}Y_{l}  -\mu\Big]  \leq y\Big) -\Phi(y)\Big| \leq \frac{C\rho}{\sigma^{3}\sqrt{m}},
\end{equation*}
for all $y\in \mathbb{R}$.
\end{Theorem}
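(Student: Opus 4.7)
The standard route to Berry--Esseen is Fourier-analytic, and I would follow that approach. Let $X_l := (Y_l - \mu)/\sigma$ be the standardized summands, and let $F_m$ be the cumulative distribution function of the normalized sum $S_m := (X_1 + \cdots + X_m)/\sqrt{m}$, which we want to compare with the standard normal $\Phi$.

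First I would invoke Esseen's smoothing inequality: for every $T>0$,
\begin{equation*}
\sup_y |F_m(y) - \Phi(y)| \leq \frac{1}{\pi}\int_{-T}^{T} \Big| \frac{\varphi_m(t) - e^{-t^2/2}}{t} \Big|\, dt + \frac{C'}{T},
\end{equation*}
where $\varphi_m(t) := \langle e^{itS_m}\rangle = \varphi(t/\sqrt{m})^m$ and $\varphi$ denotes the characteristic function of a single $X_l$. This reduces the problem to a pointwise estimate of $\varphi_m(t) - e^{-t^2/2}$ on a bounded interval.

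Using $\langle X_l\rangle = 0$, $\langle X_l^2\rangle = 1$, and the third-moment hypothesis $\langle |X_l|^3\rangle = \rho/\sigma^3 < +\infty$, a Taylor expansion yields $\varphi(s) = 1 - s^2/2 + r(s)$ with $|r(s)| \leq \tfrac{\rho}{6\sigma^3}|s|^3$. I would then choose $T$ of order $\sigma^3 \sqrt{m}/\rho$, which is the scale up to which the cubic remainder remains small after taking the $m$-th power. Combining $\log(1+z) = z - z^2/2 + O(z^3)$ with $|e^a - e^b| \leq \max(|e^a|, |e^b|)\,|a-b|$, one obtains a bound of the form $|\varphi_m(t) - e^{-t^2/2}| \leq C'' \tfrac{\rho}{\sigma^3\sqrt{m}}\,|t|^3 e^{-t^2/4}$ uniformly on $|t|\leq T$.

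Inserting this into the smoothing inequality, the integral term contributes $O\bigl(\rho/(\sigma^3\sqrt{m})\bigr)$ thanks to the Gaussian envelope, while the boundary term $C'/T$ is of the same order by the choice of $T$; combining them gives the stated bound with a universal constant $C$. The main obstacle is establishing Esseen's smoothing lemma itself (a classical but delicate Fourier-inversion argument relying on the boundedness of $\Phi'$) and ensuring the Gaussian decay survives the transition from single-variable Taylor estimate to the $m$-fold product, so that the $t$-integral converges uniformly in $m$; once those are in hand, the remaining steps are essentially mechanical estimates.
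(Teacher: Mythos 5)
The paper does not prove this statement at all: it is imported verbatim as a classical theorem of Berry and Esseen, cited to the original 1941/1942 papers, and then applied (in Lemma \ref{ymvmvls} and Proposition \ref{nextorderasymptotics}) to obtain the next-to-leading-order expansion of $D_{0}^{\epsilon}$. So there is no in-paper argument to compare yours against.

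Your sketch is the standard Fourier-analytic route and its outline is sound: standardize, apply Esseen's smoothing inequality to reduce to a characteristic-function estimate, Taylor-expand $\varphi$ using the third absolute moment, take $T$ of order $\sigma^{3}\sqrt{m}/\rho$, and check that the cubic remainder survives the $m$-fold product with a Gaussian envelope so the $t$-integral is $O(\rho/(\sigma^{3}\sqrt{m}))$. The constants and exponents you quote are the usual ones. As you acknowledge, the two substantive ingredients --- the smoothing lemma itself and the uniform bound $|\varphi_m(t)-e^{-t^{2}/2}|\leq C''\,\rho\,|t|^{3}e^{-t^{2}/4}/(\sigma^{3}\sqrt{m})$ on $|t|\leq T$ --- are stated rather than proved, so this is a correct proof outline rather than a complete proof; one small point worth recording when filling it in is that $\rho/\sigma^{3}\geq 1$ by Lyapunov's inequality, which is what makes the choice of $T$ and the final combination of the two error terms consistent. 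For the purposes of this paper, citing the classical references (or Petrov, as the paper does for the value of the constant $C$) is the intended treatment.
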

Note that $C$ is a positive constant, independent of $y$ and independent of the distribution of $Y$. The exact value of this constant is to date not known, but there exist bounds \cite{Korolev} (see also, e.g., chapter 7 in \cite{Gut}). 

Given a random variable $X$ with distribution $q$, and given another distribution $r$, we define (analogous to $\sigma(q\Vert r)$ in Eq.~(\ref{defrelsigma})) the quantity
\begin{equation*}
\rho(q\Vert r) :=  \left\langle \left|\log_{2}\frac{q(X)}{r(X)}-\left\langle \log_{2}\frac{q(X)}{r(X)}\right\rangle\right|^{3}\right\rangle.
\end{equation*}
The following definition specifies two sets of sequences that take the role of the set of typical sequences described above. Note though, that in the way we will use these two sets, only one of them will  correspond to typical sequences, while the other set actually will correspond to very atypical sequences.
\begin{Definition}
\label{Defupperlower}
Let $q\in \mathbb{P}(N)$ and $r\in \mathbb{P}^{+}(N)$ be such that $\sigma(q\Vert r)>0$. For $m\in\mathbb{N}_{+}$ and $x\in\mathbb{R}$ define
\begin{equation}
\begin{split}
 \underline{\Lambda}^m_{x}  :=  & \Big\{ (n_{1},\ldots,n_{m})\in \{1,\ldots, N\}^{\times m}:\\
 &  2^{x \sigma \sqrt{m}  + m\mu}   <\frac{q(n_1)}{r(n_1)}\cdots \frac{q(n_{m})}{r(n_{m})} \Big\},
\end{split}
\end{equation}
where $\mu: = D(q\Vert r)$ and $\sigma:= \sigma(q\Vert r)$.
We furthermore denote the complementary set as
\begin{equation}
\begin{split}
\label{uppertypical}
\overline{\Lambda}^{m}_{x}  := & \{1,\ldots, N\}^{\times m}\setminus \underline{\Lambda}^{m}_{x}\\
= & \Big\{ (n_{1},\ldots,n_{m})\in \{1,\ldots, N\}^{\times m}:   \\
 & \frac{q(n_1)}{r(n_1)}\cdots \frac{q(n_{m})}{r(n_{m})}   \leq 2^{x \sigma \sqrt{m}  + m\mu}\Big\}.
\end{split}
\end{equation}
\end{Definition}

A direct application of Berry-Esseen's inequality, with the choice $Y_{l} = \log_2[q(X_l)/r(X_l)]$, with $X_{l}$ iid distributed $q$, yields
\begin{Lemma}
\label{ymvmvls}
Let $q\in \mathbb{P}(N)$ and $r\in \mathbb{P}^{+}(N)$ be such that $\sigma(q\Vert r)>0$. Let $m\in\mathbb{N}_+$. Then
\begin{equation}
\label{bnfgjksn}
|q^{\otimes m}(\underline{\Lambda}_y^m) -1+\Phi(y)| \leq \frac{C\rho}{\sigma^3\sqrt{m}},\quad \forall y\in\mathbb{R},
\end{equation}
\begin{equation}
\label{ycmvx}
|q^{\otimes m}(\overline{\Lambda}_x^m) -\Phi(x)| \leq \frac{C\rho}{\sigma^3\sqrt{m}},\quad \forall x\in\mathbb{R},
\end{equation}
where $\mu: = D(q\Vert r)$, $\sigma:= \sigma(q\Vert r)$, and $\rho:=\rho(q\Vert r)$.
\end{Lemma}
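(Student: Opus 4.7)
The plan is to follow the hint given immediately before the lemma and apply Berry-Esseen directly. Let $X_1,\ldots,X_m$ be iid with distribution $q$, and set $Y_l := \log_2[q(X_l)/r(X_l)]$. First I would verify that these variables have exactly the three moments appearing in the bound: $\langle Y_l\rangle = \sum_n q_n\log_2(q_n/r_n) = D(q\Vert r) = \mu$, $\mathrm{Var}(Y_l) = \sigma(q\Vert r)^2 = \sigma^2$ (positive by hypothesis), and $\langle |Y_l-\mu|^3\rangle = \rho(q\Vert r) = \rho$. Since $r\in\mathbb{P}^+(N)$ and the alphabet is finite, every term $\log_2[q_n/r_n]$ (with $q_n>0$) is finite, so all three moments exist.

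Next I would rewrite the defining inequality of $\overline{\Lambda}^m_x$ in additive form. Taking $\log_2$ of
\begin{equation*}
\frac{q(n_1)}{r(n_1)}\cdots\frac{q(n_m)}{r(n_m)} \leq 2^{x\sigma\sqrt{m}+m\mu}
\end{equation*}
yields $\sum_{l=1}^m Y_l(n_l) \leq x\sigma\sqrt{m}+m\mu$, which rearranges to
\begin{equation*}
\frac{\sqrt{m}}{\sigma}\Bigl[\frac{1}{m}\sum_{l=1}^m Y_l - \mu\Bigr] \leq x.
\end{equation*}
Hence $q^{\otimes m}(\overline{\Lambda}^m_x)$ is \emph{exactly} the probability appearing on the left-hand side of the Berry-Esseen inequality, and invoking that theorem immediately gives $|q^{\otimes m}(\overline{\Lambda}^m_x)-\Phi(x)|\leq C\rho/(\sigma^3\sqrt{m})$, which is Eq.~(\ref{ycmvx}).

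For Eq.~(\ref{bnfgjksn}), I would observe that $\underline{\Lambda}^m_y$ is by construction the complement of $\overline{\Lambda}^m_y$ in $\{1,\ldots,N\}^{\times m}$, so $q^{\otimes m}(\underline{\Lambda}^m_y) = 1 - q^{\otimes m}(\overline{\Lambda}^m_y)$. Substituting the bound just proved with $x=y$ and rearranging gives $|q^{\otimes m}(\underline{\Lambda}^m_y)-1+\Phi(y)|\leq C\rho/(\sigma^3\sqrt{m})$, as claimed. The only pedantic point to handle is the strict versus non-strict inequality in the definitions of the two sets, but since $\underline{\Lambda}^m_y$ and $\overline{\Lambda}^m_y$ are disjoint and exhaust $\{1,\ldots,N\}^{\times m}$ by definition, this causes no discrepancy. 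There is no genuine obstacle here; the entire argument is a transcription of Berry-Esseen into the log-likelihood-ratio language, once the identifications $\mu=D(q\Vert r)$, $\sigma^2=\sigma(q\Vert r)^2$, $\rho=\rho(q\Vert r)$ are made.
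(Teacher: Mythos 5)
Your proposal is correct and is exactly the argument the paper intends: the paper presents this lemma as ``a direct application of Berry--Esseen's inequality, with the choice $Y_{l} = \log_2[q(X_l)/r(X_l)]$,'' which is precisely your identification, followed by the same rewriting of the defining inequality of $\overline{\Lambda}^m_x$ in additive form and the complementarity $q^{\otimes m}(\underline{\Lambda}^m_y) = 1 - q^{\otimes m}(\overline{\Lambda}^m_y)$. Your version just spells out the moment identifications and the algebra that the paper leaves implicit.
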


The general aim of this section is to prove the next to leading order expansion of $D^{\epsilon}_{0}(q^{\otimes m}\Vert r^{\otimes m})$ as in Eqs.~(\ref{leadingorder}) and (\ref{asymptotics}).
The proof is split into three parts: Lemmas \ref{lowerboundD0}, \ref{upperboundD0}, and Proposition \ref{nextorderasymptotics}.
The general idea is to construct  sequences of upper and  lower bounds to $D^{\epsilon}_{0}(q^{\otimes m}\Vert r^{\otimes m})$.
The lower bounds will be obtained from a sequence of sets $\underline{\Lambda}_{y(m)}^{m}$ such that  $q^{\otimes m}(\underline{\Lambda}_{y(m)}^{m})>1-\epsilon$. In other words $\underline{\Lambda}_{y(m)}^{m}$ is sufficiently likely with resect to $q^{\otimes m}$, and thus  
 $D^{\epsilon}_{0}(q\Vert r) = -\log_2\inf_{q^{\otimes m}(\Omega)>1-\epsilon}r^{\otimes m}(\Omega) \geq -\log_2 r^{\otimes m}(\underline{\Lambda}^m_y)$. 
Furthermore, the sequence is such that $q^{\otimes m}(\underline{\Lambda}_{y(m)}^{m})   \rightarrow 1-\epsilon$, and $\underline{\Lambda}_{y(m)}^{m}$ thus becomes a set of typical sequences for small $\epsilon$. 

Concerning the upper bounds we note that  the definition $D^{\epsilon}_{0}(q\Vert r) = -\log_2\inf_{q^{\otimes m}(\Omega)>1-\epsilon}r^{\otimes m}(\Omega)$ suggests that a method to obtain an upper bound is to search among sets $\Omega'$ with $q^{\otimes m}(\Omega')<1-\epsilon$, i.e., among less likely sets. As it so happens, we achieve the upper bound via a sequence of very \emph{atypical} sets $\overline{\Lambda}_{x(m)}^{m}$ in the sense that    $q^{\otimes m}(\overline{\Lambda}_{x(m)}^{m}) \rightarrow \epsilon$. 

As a side remark we note that for similar proofs for the smooth conditional max-entropy (albeit in the quantum case) \cite{Tomamichel09}  an upper bound can be obtained via Fannes' inequality \cite{Fannes}. There is indeed an analogue of Fannes' inequality for the relative Shannon entropy. However, the resulting upper bound appears not to be strong enough to establish the next to leading order expansion coefficient. This is the reason why we rather use the sets $\overline{\Lambda}_{x(m)}^{m}$.

\begin{Lemma}
\label{lowerboundD0}
Let $q\in \mathbb{P}(N)$ and $r\in \mathbb{P}^{+}(N)$ be such that $\sigma(q\Vert r)>0$,
and let $1> \epsilon >0$. Then 
\begin{eqnarray}
\label{mdvmvadm}
D^{\epsilon}_{0}(q^{\otimes m}\Vert r^{\otimes m}) & > &  mD(q\Vert r) + y \sigma\sqrt{m}  \\
\nonumber & &  -\log_2\Big[1-\Phi(y) + \frac{C\rho}{\sigma^{3}\sqrt{m}}\Big],
\end{eqnarray} 
for all pairs $y\in \mathbb{R}, m\in\mathbb{N}_{+}$ such that 
\begin{equation}
\label{condlowerboundD0}
\epsilon > \frac{C\rho}{\sigma^{3}\sqrt{m}} + \Phi(y),
\end{equation} 
where $\sigma:= \sigma(q\Vert r)$ and $\rho:=\rho(q\Vert r)$.
\end{Lemma}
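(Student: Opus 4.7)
The plan is to use the set $\underline{\Lambda}_{y}^{m}$ from Definition~\ref{Defupperlower} as an explicit admissible test set in the definition of $D^{\epsilon}_{0}$. Recall that $D^{\epsilon}_{0}(q^{\otimes m}\Vert r^{\otimes m})= -\log_{2}\inf_{\Omega:\,q^{\otimes m}(\Omega)>1-\epsilon}r^{\otimes m}(\Omega)$, so any single admissible $\Omega$ gives a lower bound $D^{\epsilon}_{0}\geq-\log_{2} r^{\otimes m}(\Omega)$, provided we can (i) verify admissibility of $\Omega$, and (ii) prove a sufficiently sharp upper bound on $r^{\otimes m}(\Omega)$.

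For step (i), I would apply Eq.~(\ref{bnfgjksn}) of Lemma~\ref{ymvmvls}, which gives $q^{\otimes m}(\underline{\Lambda}_{y}^{m})\geq 1-\Phi(y)-C\rho/(\sigma^{3}\sqrt{m})$. Hypothesis (\ref{condlowerboundD0}) $\epsilon>\Phi(y)+C\rho/(\sigma^{3}\sqrt{m})$ then immediately yields $q^{\otimes m}(\underline{\Lambda}_{y}^{m})>1-\epsilon$, so $\underline{\Lambda}_{y}^{m}$ is admissible.

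For step (ii), I would exploit the very definition of $\underline{\Lambda}_{y}^{m}$: every sequence $(n_{1},\ldots,n_{m})$ in it satisfies the strict likelihood-ratio inequality $\tfrac{q(n_{1})\cdots q(n_{m})}{r(n_{1})\cdots r(n_{m})}>2^{y\sigma\sqrt{m}+m\mu}$, equivalently $r(n_{1})\cdots r(n_{m})<2^{-y\sigma\sqrt{m}-m\mu}\,q(n_{1})\cdots q(n_{m})$. Summing over $\underline{\Lambda}_{y}^{m}$ gives $r^{\otimes m}(\underline{\Lambda}_{y}^{m})<2^{-y\sigma\sqrt{m}-m\mu}\,q^{\otimes m}(\underline{\Lambda}_{y}^{m})$, and now I would plug in the \emph{upper} Berry--Esseen bound $q^{\otimes m}(\underline{\Lambda}_{y}^{m})\leq 1-\Phi(y)+C\rho/(\sigma^{3}\sqrt{m})$, also from Lemma~\ref{ymvmvls}.

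Combining the two displays and taking $-\log_{2}$ gives the desired
\begin{equation*}
\begin{split}
D^{\epsilon}_{0}(q^{\otimes m}\Vert r^{\otimes m}) \geq & -\log_{2}r^{\otimes m}(\underline{\Lambda}_{y}^{m}) \\
> & \; mD(q\Vert r)+y\sigma\sqrt{m} \\
& -\log_{2}\!\Big[1-\Phi(y)+\tfrac{C\rho}{\sigma^{3}\sqrt{m}}\Big],
\end{split}
\end{equation*}
with the strict inequality inherited from the strict ``$<$'' in the definition of $\underline{\Lambda}_{y}^{m}$. There is no real obstacle here — the whole argument is bookkeeping around Lemma~\ref{ymvmvls} — the only thing to double-check is that the quantity inside the final logarithm is strictly positive, but this is guaranteed by admissibility, since $q^{\otimes m}(\underline{\Lambda}_{y}^{m})>1-\epsilon>0$ forces $1-\Phi(y)+C\rho/(\sigma^{3}\sqrt{m})>0$.
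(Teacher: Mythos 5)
Your proposal is correct and follows essentially the same route as the paper's own proof: both use $\underline{\Lambda}_{y}^{m}$ as the admissible test set, verify $q^{\otimes m}(\underline{\Lambda}_{y}^{m})>1-\epsilon$ via the lower Berry--Esseen bound of Lemma~\ref{ymvmvls} together with condition~(\ref{condlowerboundD0}), and bound $r^{\otimes m}(\underline{\Lambda}_{y}^{m})$ by the likelihood-ratio threshold combined with the upper Berry--Esseen bound. The only cosmetic difference is your justification for positivity of the argument of the logarithm (via admissibility rather than the trivial fact that $1-\Phi(y)>0$), which is equally valid.
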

\begin{proof}
Let $m\in \mathbb{N}_{+}$ and $y\in\mathbb{R}$. By the defining properties of the set $\underline{\Lambda}_y^{m}$ in Def.~\ref{Defupperlower}, it follows that
\begin{equation*}
\begin{split}
 r^{\otimes m}(\underline{\Lambda}^m_y) <  & 2^{-y\sigma \sqrt{m} - m\mu}  q^{\otimes m}(\underline{\Lambda}^m_y)\\
\leq & 2^{-y\sigma \sqrt{m} - m\mu}  \Big[1-\Phi(y) + \frac{C\rho}{\sigma^{3}\sqrt{m}}\Big],
\end{split}
\end{equation*}
where the second inequality is due to Eq.~(\ref{bnfgjksn}) in Lemma \ref{ymvmvls}.
Let us now restrict the pair $m\in \mathbb{N}_{+}$ and $y\in\mathbb{R}$ such that $\epsilon > \frac{C\rho}{\sigma^{3}\sqrt{m}} + \Phi(y)$. By  Eq.~(\ref{bnfgjksn}) in Lemma \ref{ymvmvls} it follows that 
\begin{equation}
q^{\otimes m}(\underline{\Lambda}^m_y)\geq 1- \Phi(y) -\frac{C\rho}{\sigma^{3}\sqrt{m}} >1-\epsilon.
\end{equation}
Hence, we can conclude that 
\begin{equation*}
\inf_{\Omega: q^{\otimes m}(\Omega)>1-\epsilon}r^{\otimes m}(\Omega) \leq  r^{\otimes m}(\underline{\Lambda}^m_y).
\end{equation*}
Since $1-\Phi(y) + \frac{C\rho}{\sigma^{3}\sqrt{m}}>0$ the statement of the lemma follows.
\end{proof}

\begin{Lemma}
\label{upperboundD0}
Let $q\in \mathbb{P}(N)$ and $r\in \mathbb{P}^{+}(N)$ be such that $\sigma(q\Vert r)>0$,
and let $1> \epsilon >0$. Then 
\begin{eqnarray}
 \label{nfblmfbx}
 D^{\epsilon}_{0}(q^{\otimes m}\Vert r^{\otimes m}) & \leq &  mD(q\Vert r) + x \sigma\sqrt{m} \\
\nonumber & &  -\log_2\Big[\Phi(x) -\epsilon- \frac{C\rho}{\sigma^{3}\sqrt{m}}\Big],
\end{eqnarray} 
for all pairs $x\in\mathbb{R}, m\in \mathbb{N}_{+}$ such that 
\begin{equation}
\label{condUpperboundD0}
\Phi(x) > \epsilon + \frac{C\rho}{\sigma^{3}\sqrt{m}},
\end{equation} 
where $\sigma:= \sigma(q\Vert r)$ and $\rho:=\rho(q\Vert r)$.
\end{Lemma}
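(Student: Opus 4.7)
\medskip

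\noindent\textbf{Proof proposal for Lemma \ref{upperboundD0}.} The plan is to mirror the strategy of Lemma \ref{lowerboundD0}, but now exploiting the complementary set $\overline{\Lambda}^m_x$ as a pointwise upper bound on the likelihood ratio. Recall that by Eq.~(\ref{relRzeroeps}),
\[
D^{\epsilon}_{0}(q^{\otimes m}\Vert r^{\otimes m}) = -\log_{2}\inf_{\Omega:\, q^{\otimes m}(\Omega)>1-\epsilon} r^{\otimes m}(\Omega),
\]
so to upper-bound $D^{\epsilon}_{0}$ it suffices to derive a \emph{uniform lower bound} on $r^{\otimes m}(\Omega)$ valid for every admissible $\Omega$. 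This is the opposite task from Lemma \ref{lowerboundD0}, where a single well-chosen admissible set was used to upper-bound the infimum of $r^{\otimes m}$.

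The key step is to note that, by the defining inequality of $\overline{\Lambda}^{m}_{x}$ in Eq.~(\ref{uppertypical}), for every sequence $(n_{1},\ldots,n_{m})\in \overline{\Lambda}^{m}_{x}$ we have
\[
r(n_{1})\cdots r(n_{m}) \;\geq\; 2^{-x\sigma\sqrt{m}-m\mu}\, q(n_{1})\cdots q(n_{m}),
\]
with $\mu:=D(q\Vert r)$, $\sigma:=\sigma(q\Vert r)$. Hence for any set $\Omega\subseteq \{1,\ldots,N\}^{\times m}$,
\[
r^{\otimes m}(\Omega) \;\geq\; r^{\otimes m}(\Omega \cap \overline{\Lambda}^{m}_{x}) \;\geq\; 2^{-x\sigma\sqrt{m}-m\mu}\, q^{\otimes m}(\Omega \cap \overline{\Lambda}^{m}_{x}).
\]

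To control $q^{\otimes m}(\Omega \cap \overline{\Lambda}^{m}_{x})$ I would apply the elementary inclusion-exclusion bound $q^{\otimes m}(\Omega\cap \overline{\Lambda}^{m}_{x})\geq q^{\otimes m}(\Omega) + q^{\otimes m}(\overline{\Lambda}^{m}_{x}) -1$, combined with Berry--Esseen in the form of Eq.~(\ref{ycmvx}) of Lemma \ref{ymvmvls}, which gives $q^{\otimes m}(\overline{\Lambda}^{m}_{x})\geq \Phi(x)-C\rho/(\sigma^{3}\sqrt{m})$. For any $\Omega$ with $q^{\otimes m}(\Omega)>1-\epsilon$ this yields
\[
q^{\otimes m}(\Omega \cap \overline{\Lambda}^{m}_{x}) \;>\; \Phi(x)-\epsilon-\tfrac{C\rho}{\sigma^{3}\sqrt{m}},
\]
and the hypothesis Eq.~(\ref{condUpperboundD0}) is precisely what makes the right-hand side strictly positive, so that taking $-\log_{2}$ is legitimate. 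Plugging this into the previous display and taking $-\log_{2}$ of the uniform lower bound on $r^{\otimes m}(\Omega)$ yields Eq.~(\ref{nfblmfbx}).

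The only real obstacle I anticipate is purely bookkeeping: making sure that the strict inequality $q^{\otimes m}(\Omega)>1-\epsilon$ in the definition of $D^{\epsilon}_{0}$ is handled consistently (an infimum over a set defined by a strict inequality, as opposed to $\geq$, is why Eq.~(\ref{condUpperboundD0}) is written with a strict inequality) and verifying that the chain of inequalities indeed survives the $-\log_{2}$ in the right direction, since we are lower-bounding an infimum of $r^{\otimes m}$ and then flipping signs. Everything else is a direct application of the Berry--Esseen estimate already packaged in Lemma \ref{ymvmvls}.
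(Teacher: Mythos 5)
Your proposal is correct and follows essentially the same route as the paper's proof: lower-bound $r^{\otimes m}(\Omega)$ uniformly over admissible $\Omega$ by intersecting with $\overline{\Lambda}^{m}_{x}$, use the defining inequality of that set to compare $r^{\otimes m}$ with $q^{\otimes m}$ on the intersection, and control $q^{\otimes m}(\Omega\cap\overline{\Lambda}^{m}_{x})$ via the Berry--Esseen estimate of Lemma \ref{ymvmvls} together with the inclusion--exclusion bound (which the paper uses implicitly in its ``Consequently'' step). The hypothesis (\ref{condUpperboundD0}) is indeed exactly what guarantees positivity before taking $-\log_{2}$, just as you note.
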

The proof is  similar in spirit to the proof of  Lemma 11.8.1 in \cite{CoverThomas}.
\begin{proof}
Let $\Omega\subseteq \{1,\ldots, N\}^{\otimes m}$ be such that $q^{\otimes m}(\Omega)>1-\epsilon$.
By Eq.~(\ref{ycmvx}) in Lemma \ref{ymvmvls}, the set $\overline{\Lambda}_x^m$  satisfies $q^{\otimes m}(\overline{\Lambda}_x^m)\geq \Phi(x)-\frac{C\rho}{\sigma^3\sqrt{m}}$, for all pairs $x\in\mathbb{R}$ and $m\in\mathbb{N}_+$.
Consequently 
\begin{equation}
q^{\otimes m}(\Omega\cap\overline{\Lambda}^m_x) >  \Phi(x)-\epsilon -\frac{C\rho}{\sigma^{3}\sqrt{m}}.
\end{equation}
By combining this with the defining condition for the set $\overline{\Lambda}^m_x$ it follows  that
\begin{eqnarray*}
\nonumber r^{\otimes m}(\Omega) & \geq & r^{\otimes m}(\Omega\cap\overline{\Lambda}^m_x)\\
 \nonumber                 &  \geq &   2^{-x \sigma \sqrt{m}  - m\mu}q^{\otimes m}(\Omega\cap \overline{\Lambda}^m_x)\\
 \label{msdvmn}                  & > & 2^{-x \sigma \sqrt{m}  - m\mu}\Big[ \Phi(x)-\epsilon -\frac{C\rho}{\sigma^{3}\sqrt{m}}\Big],
\end{eqnarray*}
where $\mu = D(q\Vert r)$.
Assuming $\Phi(x)> \epsilon +\frac{C\rho}{\sigma^{3}\sqrt{m}}$, we can take the logarithm of both sides of the above inequality, and obtain the bound in Eq.~(\ref{nfblmfbx}) by taking the infimum of $\log_2r^{\otimes m}(\Omega)$ over all $\Omega$ such that $q^{\otimes m}(\Omega)>1-\epsilon$.
\end{proof}

\begin{Proposition}
\label{nextorderasymptotics}
Let $q\in \mathbb{P}(N)$ and $r\in \mathbb{P}^{+}(N)$, and let $1>\epsilon>0$. Then
\begin{equation}
\label{leadingorder}
\lim_{m\rightarrow\infty}\frac{D^{\epsilon}_{0}(q^{\otimes m}\Vert r^{\otimes m})}{m} = D(q\Vert r) 
\end{equation}
and
\begin{equation}
\label{asymptotics}
\lim_{m\rightarrow\infty}\frac{D^{\epsilon}_{0}(q^{\otimes m}\Vert r^{\otimes m})-mD(q\Vert r)}{\sqrt{m}} = \Phi^{-1}(\epsilon)\sigma(q\Vert r).
\end{equation}
\end{Proposition}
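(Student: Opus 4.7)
The plan is to sandwich $D^{\epsilon}_{0}(q^{\otimes m}\Vert r^{\otimes m})$ between the bounds provided by Lemmas \ref{lowerboundD0} and \ref{upperboundD0}, choosing $y = y_m$ and $x = x_m$ that both converge to $\Phi^{-1}(\epsilon)$ at just the right rate. Equation (\ref{leadingorder}) will follow as an immediate corollary of (\ref{asymptotics}), since dividing the latter through by $\sqrt{m}$ gives $[D_{0}^{\epsilon} - m D(q\Vert r)]/m \to 0$.

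Assume first that $\sigma := \sigma(q\Vert r) > 0$, write $\rho := \rho(q\Vert r)$, and fix any constant $K > C\rho/\sigma^{3}$. For $m$ large enough that $\epsilon \pm K/\sqrt{m} \in (0,1)$, set
\begin{equation*}
y_m := \Phi^{-1}\!\left(\epsilon - \frac{K}{\sqrt{m}}\right), \qquad x_m := \Phi^{-1}\!\left(\epsilon + \frac{K}{\sqrt{m}}\right).
\end{equation*}
These are tailored so that the admissibility conditions (\ref{condlowerboundD0}) and (\ref{condUpperboundD0}) hold, and continuity of $\Phi^{-1}$ at $\epsilon$ gives $y_m, x_m \to \Phi^{-1}(\epsilon)$. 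Substituting $y_m$ into Lemma \ref{lowerboundD0} yields
\begin{equation*}
\frac{D_{0}^{\epsilon}(q^{\otimes m}\Vert r^{\otimes m}) - mD(q\Vert r)}{\sqrt{m}} > y_m\sigma - \frac{\log_2\!\bigl[1-\epsilon + (K + C\rho/\sigma^{3})/\sqrt{m}\bigr]}{\sqrt{m}},
\end{equation*}
whose logarithm stays bounded, so the second term vanishes and $\liminf \geq \Phi^{-1}(\epsilon)\sigma$. Substituting $x_m$ into Lemma \ref{upperboundD0} yields
\begin{equation*}
\frac{D_{0}^{\epsilon}(q^{\otimes m}\Vert r^{\otimes m}) - mD(q\Vert r)}{\sqrt{m}} \leq x_m\sigma - \frac{\log_2\!\bigl[(K - C\rho/\sigma^{3})/\sqrt{m}\bigr]}{\sqrt{m}},
\end{equation*}
whose second term equals $(\tfrac{1}{2}\log_2 m)/\sqrt{m} + O(1/\sqrt{m}) \to 0$, so $\limsup \leq \Phi^{-1}(\epsilon)\sigma$. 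Combined, these give (\ref{asymptotics}).

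The subtle point is the upper-bound log, whose argument shrinks at rate $1/\sqrt{m}$: a faster rate of approach for $\Phi(x_m)$ would make this log outweigh $\sqrt{m}$ and spoil the estimate, while a slower rate would prevent $x_m$ from reaching $\Phi^{-1}(\epsilon)$ in the limit. The $1/\sqrt{m}$ calibration is the unique scale that keeps the error at order $\log m$ while still contracting $x_m$ to the desired value. On the lower-bound side this tension does not arise, because the log's argument is bounded away from $0$.

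Finally, the degenerate case $\sigma(q\Vert r) = 0$ is excluded from both lemmas and must be handled directly. In that case $\log_2(q_n/r_n) = D(q\Vert r)$ for every $n$ with $q_n > 0$, so the ratio $q^{\otimes m}(n_1,\ldots,n_m)/r^{\otimes m}(n_1,\ldots,n_m) = 2^{mD(q\Vert r)}$ is constant on $\mathrm{supp}(q^{\otimes m})$. Any minimizer in the definition of $D_{0}^{\epsilon}$ may be restricted to this support without loss, giving
\begin{equation*}
D_{0}^{\epsilon}(q^{\otimes m}\Vert r^{\otimes m}) = mD(q\Vert r) - \log_2\!\!\!\inf_{q^{\otimes m}(\Lambda) > 1-\epsilon}\!\!\!q^{\otimes m}(\Lambda).
\end{equation*}
The infimum lies in $(1-\epsilon, 1]$, so the correction is $O(1)$ and hence $o(\sqrt{m})$; both sides of (\ref{asymptotics}) then reduce to $0 = \Phi^{-1}(\epsilon)\cdot 0$, completing the argument.
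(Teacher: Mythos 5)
Your proposal is correct and follows essentially the same route as the paper: treat the degenerate case $\sigma(q\Vert r)=0$ directly via the constancy of $\log_2(q_n/r_n)$ on the support, and otherwise sandwich $D^{\epsilon}_{0}$ between Lemmas \ref{lowerboundD0} and \ref{upperboundD0} with $y_m,x_m\to\Phi^{-1}(\epsilon)$ chosen so the admissibility conditions hold and the residual logarithms are $O(1)$ and $O(\log m)$ respectively. The only difference is the precise offset (the paper uses $\epsilon\mp\bigl(C\rho/(\sigma^{3}\sqrt{m})+1/m\bigr)$ where you use $\epsilon\mp K/\sqrt{m}$ with $K>C\rho/\sigma^{3}$), which is immaterial.
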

Eqs.~(\ref{leadingorder}) and (\ref{asymptotics}) implies that we have obtained the next to leading order expansion
\begin{equation}
\begin{split}
D^{\epsilon}_{0}(q^{\otimes m}\Vert r^{\otimes m}) = &  mD(q\Vert r) + \sqrt{m}\Phi^{-1}(\epsilon)\sigma(q\Vert r) \\
& + o(\sqrt{m}).
\end{split}
\end{equation}

\begin{proof}
We separate the two cases $\sigma(q\Vert r)=0$ and $\sigma(q\Vert r)\neq 0$.

Case $\sigma(q\Vert r)=0$: We first make a general observation concerning the properties of $\sigma$ and $D$. Given $q'\in \mathbb{P}(N)$ and $r'\in \mathbb{P}^{+}(N)$ it follows that $\sigma(q'\Vert r')=0$ if and only if there is a constant $1\geq c>0$ such that $r'_{n} = cq'_{n}$, for all $n$ in the support of $q'$. In this case it furthermore follows that $D(q'\Vert r') = -\log_2c$, and $D(q'\Vert r')\leq D^{\epsilon}_{0}(q'\Vert r') \leq  D(q'\Vert r') -\log_2(1-\epsilon)$.
Since $\sigma(q^{\otimes m}\Vert r^{\otimes m})^{2} = m\sigma(q\Vert r)^{2}$, it follows that $\sigma(q\Vert r)=0$ if and only if $\sigma(q^{\otimes m}\Vert r^{\otimes m}) = 0$.
By the above comments it follows that $mD(q\Vert r)\leq D^{\epsilon}_{0}(q^{\otimes m}\Vert r^{\otimes m}) \leq  mD(q\Vert r) -\log_2(1-\epsilon)$. This yields Eq.~(\ref{leadingorder}). Furthermore, 
one can see that Eq.~(\ref{asymptotics}) holds trivially in this case.

Case $\sigma(q\Vert r)\neq 0$: Let us choose $y$ in Lemma \ref{lowerboundD0} as
\begin{equation}
\label{ychoice}
y(m) := \Phi^{-1}\Big(\epsilon - \frac{C\rho}{\sigma^3\sqrt{m}} - \frac{1}{m}\Big).
\end{equation}
For sufficiently large $m$ we have $1>\epsilon - \frac{C\rho}{\sigma^3\sqrt{m}} - \frac{1}{m}>0$, and $y(m)$ thus well defined.
Furthermore, $\frac{C\rho}{\sigma^{3}\sqrt{m}} + \Phi\boldsymbol{(}y(m)\boldsymbol{)} = \epsilon - \frac{1}{m}<\epsilon$. Hence, for sufficiently large $m$ the conditions of Lemma \ref{lowerboundD0} are satisfied, and we find
\begin{equation}
\label{lowerynkvdf}
\begin{split}
D^{\epsilon}_{0}(q^{\otimes m}\Vert r^{\otimes m})  > & mD(q\Vert r) \\
  &  -\log_2\Big[1-   \epsilon  + \frac{1}{m} + 2\frac{C\rho}{\sigma^{3}\sqrt{m}}\Big]\\
   & +  \sigma\sqrt{m}   \Phi^{-1}\Big(\epsilon - \frac{C\rho}{\sigma^3\sqrt{m}} - \frac{1}{m}\Big).
 \end{split}
\end{equation} 
Let us choose $x$ in Lemma \ref{upperboundD0} as
\begin{equation}
\label{xchoice}
x(m) := \Phi^{-1}\Big(\epsilon + \frac{C\rho}{\sigma^3\sqrt{m}} + \frac{1}{m}\Big).
\end{equation}
For sufficiently large $m$ it is the case that $1>\epsilon + \frac{C\rho}{\sigma^3\sqrt{m}} + \frac{1}{m}>0$. Hence, $x(m)$  is well defined for sufficiently large $m$. Furthermore $\Phi\boldsymbol{(}x(m)\boldsymbol{)} = \epsilon + \frac{C\rho}{\sigma^3\sqrt{m}} + \frac{1}{m} > \epsilon + \frac{C\rho}{\sigma^3\sqrt{m}}$. Hence, the condition of Lemma \ref{upperboundD0} is satisfied, and we can conclude that
\begin{equation}
\label{upperynkvdf}
\begin{split}
D^{\epsilon}_{0}(q^{\otimes m}\Vert r^{\otimes m}) \leq &  mD(q\Vert r)   +\log_2(m)\\
& + \sigma\sqrt{m}\Phi^{-1}\Big(\epsilon + \frac{C\rho}{\sigma^3\sqrt{m}} + \frac{1}{m}\Big). 
 \end{split}
\end{equation} 
By combining Eqs.~(\ref{lowerynkvdf}) and (\ref{upperynkvdf}) one can prove the limits  (\ref{leadingorder}) and (\ref{asymptotics}).
\end{proof}

Note that if Eq.~(\ref{bnfgjksn}) is combined with Eq.~(\ref{ychoice}) it follows that $ q^{\otimes m}(\underline{\Lambda}_{y(m)}^m)\rightarrow 1-\epsilon$. Hence, for small $\epsilon$ the family of sets $\underline{\Lambda}_{y(m)}^m$ becomes typical.
Similarly, Eq.~(\ref{ycmvx}) with Eq.~(\ref{xchoice}) yields $q^{\otimes m}(\overline{\Lambda}_{x(m)}^m)\rightarrow \epsilon$. Hence the family $\overline{\Lambda}_{x(m)}^m$ becomes atypical.


\subsection{\label{expecvssingleshotiid}The expectation value setting vs. the iid single-shot case}

At first sight it might seem as the expectation value setting and the single-shot setting with iid states and non-interacting identical Hamiltonians describes the very same thing. However, this is not quite the case, and here we briefly point out the main difference. 

The expected work cost $\langle W(\mathcal{P},\mathcal{N})\rangle$ can be obtained as a sample average, i.e., we repeat the process (assuming access to an iid source of the system) and take the average of the recorded work costs. 
An alternative to this `serial' procedure would be to consider a `parallel' implementation on $K$ iid copies of the initial distribution, where we  naturally associate non-interacting and identical Hamiltonians to this collective system. On this system we record one collective work cost of the process, and divide the resulting work cost by $K$, thus obtaining a sample average to approximate the true expectation value. Apart from the division by $K$, this appears identical to what we did in Sec.~\ref{Sec:OptimalExpected}. Indeed, the initial distribution and the Hamiltonian are of course the very same. However, apart from question of what exactly it is we optimize, there is one significant difference. Namely, that in the single-shot iid case we allow arbitrary \emph{collective} processes $\mathcal{P}^{(m)}$ over the different copies of the systems, while in the parallel implementation of the expectation value setting we are restricted to \emph{iid processes} $\mathcal{P}^{\otimes m}$, where $\mathcal{P}$ is a single-copy process. In other words, in the single-shot iid setting we have much more freedom in the optimization procedure than we do in the corresponding expectation value case, as we can span a much larger class of processes. Although at first maybe tempting, we should thus not naively identify the expectation value setting with the iid single-shot case.


\subsection{\label{correlations} A class of state distributions}

As before we consider a collection of $m$ systems, e.g. $m$ $d$-level systems. For each $m$ we let $h^m\in\mathbb{R}^{d^m}$ be the set of energy levels of the total system. For each $m$ we choose a specific state $x(m)\in\{1,\ldots, d^{m}\}$. (This could be, e.g., the ground state of $h^{m}$.) In the following we shall assume that $h^{m}$ and $x(m)$ are chosen such that 
\begin{equation}
\label{nvsdkndv}
\lim_{m\rightarrow \infty} G_{x(m)}(h^{m}) = 0.
\end{equation}
Note that this is not a particularly strict condition.
For example, consider the special case of a collection of $m$ non-interacting systems with identical Hamiltonians, i.e., $h^{m} = h^{\oplus m}$, for $h\in\mathbb{R}^d$. For a specific element $s\in\{1,\ldots,d\}$, we let $x(m) := (s,\ldots,s)$, in which case 
$G_{x(m)}(h^{\oplus m}) = G_{s}(h)^{m}$. Since $G_{s}(h)<1$ (due to the fact that $h\in \mathbb{R}^{d}$) it follows that $\lim_{m\rightarrow \infty} G_{x(m)}(h^{\oplus m}) = 0$. Hence, in this case, the condition (\ref{nvsdkndv}) is always satisfied. (In the following we do not assume $h^{m} = h^{\oplus m}$, but only the condition (\ref{nvsdkndv}).)

Turning to the initial distribution $q^{m}$, we let $0\leq \nu \leq 1$ be independent of $m$ and define
\begin{equation}
q^{m}_{l}  := (1-\nu)\delta_{l,x(m)} +  \nu G_l(h^{m}). 
\end{equation}
In other words, with probability $1-\nu$ the system is in state $x(m)$, and with probability $\nu$ the system is Gibbs distributed. A direct calculation yields
\begin{equation*}
\begin{split}
\mathcal{A}(q^{m}, h^{m}) \sim  - (1-\nu ) \frac{\ln 2}{\beta} \log_{2}G_{x(m)}(h^{m}),
\end{split}
\end{equation*}
where we have made use of the assumption $\lim_{m\rightarrow \infty} G_{x(m)}(h^{m}) = 0$ (and hence $\lim_{m\rightarrow \infty}[ -\log_2G_{x(m)}(h^{m})] = +\infty$). Similarly, one finds 
\begin{equation*}
\begin{split}
\sigma(W^{m}_{\textrm{yield}})  = & \frac{\ln 2}{\beta} \sigma\boldsymbol{(}q^{m}\Vert G(h^{m})\boldsymbol{)}\\
\sim &   -\frac{\ln 2}{\beta} \sqrt{\nu(1  - \nu)}\log_{2}G_{x(m)}(h^{m}).
\end{split}
\end{equation*}
In the special case $\nu = 1/2$ we thus find that  $\mathcal{A}(q^{m},h^{m})$ and $\sigma(W^{m}_{\textrm{yield}})$ scale identically.

Let us now instead chose $\nu := \epsilon$ for a sufficiently small $\epsilon>0$, and compare the above with $\mathcal{A}^{\epsilon}(q^{m},h^{m})$.
By definition $D^{\epsilon}_{0}\boldsymbol{(}q^{m}\Vert G(h^{m})\boldsymbol{)}
=   -\log_{2}\min_{q^{m}(\Lambda^{m})> 1-\epsilon}\sum_{l\in \Lambda^{m}} G_{l}(h^{m})$,
 where $q^{m}(\Lambda^{m}) := \sum_{l\in \Lambda^{m}}q^{m}_{l}$. Assuming $\epsilon \leq 1/2$, the condition $q^{m}(\Lambda^{m})> 1-\epsilon$ and the choice $\nu = \epsilon$ implies that $x(m)\in \Lambda^{m}$, due to the construction of $q^{m}$. Furthermore, since $h^{m}\in \mathbb{R}^{d^m}$ it follows that  $G_{x(m)}(h^{m}) > 0$. Hence,  $q_{x(m)}(h^{m}) = 1-\epsilon + \epsilon G_{x(m)}(h^{m}) > 1-\epsilon$.
We can conclude that $x(m)$ must be an element in $\Lambda^{m}$, and that no other level has to be an element. Hence, $\{x(m)\}$ is the minimizing set, and thus $D^{\epsilon}_{0}(q^{m},h^{m}) = -\log_2 G_{x(m)}(h^{m})$.

Furthermore, due to the relation
$0 \leq \mathcal{A}^{\epsilon}(q^{m},h^{m}) -kT\ln(2)D^{\epsilon}_{0}\boldsymbol{(}q^{m}\Vert G(h^{m})\boldsymbol{)} \leq -kT\ln(1-\epsilon)$, it follows that 
\begin{equation*}
\begin{split}
 \mathcal{A}^{\epsilon}(q^{m},h^{m})\sim  -kT\ln(2)\log_{2}G_{x(m)}(h^{m}).
\end{split}
\end{equation*}

We obtain the special case mentioned in the main text if we consider a collection of non-interacting identical systems, $h^{m} = h^{\oplus m}$, and $x(m) = (s,\ldots, s)$. In this case $\mathcal{A}(q^{m}, h^{m})  \sim  - (1-\epsilon)mkT\ln(2) \log_{2}G_{s}(h)$, $\sigma(W^{m}_{\textrm{yield}}) \sim    -mkT\ln(2) \sqrt{\epsilon(1  - \epsilon)}\log_{2}G_{s}(h)$, and $\mathcal{A}^{\epsilon}(q^{m},h^{m})\sim  -mkT\ln(2)\log_{2}G_{s}(h)$. Hence, a linear scaling in $m$ for all three cases.


\subsection{\label{interaction} Maximally mixed state distribution}
As before, we consider $m$ $d$-level systems, but with a maximally mixed distribution of its initial state, i.e., $q^{m}_{l_1,\ldots,l_m} := d^{-m}$ for all $l_1,\ldots, l_m$.
 For the sake of simplicity, and without loss of generality, we may shift the set of energy levels such that $\sum_{l_1,\ldots,l_m} h^{m}(l_{1},\ldots, l_{m}) =0$.
In this case we have 
\begin{equation}
\label{mdglfnk1}
\mathcal{A}(q^{m},h^{m}) =    -kTm\ln(d)  -F(h^{m}).
\end{equation}
By Eqs.~(\ref{mvydlkm}) and (\ref{stddeviation}) it follows that 
\begin{equation}
\label{mdglfnk2}
\sigma^2(W^{m}_{\textrm{yield}}) =  d^{-m}\sum_{l_{1},\ldots,l_{m}}(h^{m}_{l_1,\ldots,l_m})^2.
\end{equation} 
Note that Eq.~(\ref{mdglfnk2}) does not contain any factor $kT$, since the factor $kT$ in Eq.~(\ref{stddeviation}) is canceled by a corresponding factor from the Gibbs distribution.

Since all states are equally likely, the minimization in the definition of $D^{\epsilon}_{0}$ is much simplified, as the condition $q^{m}(\Lambda)>1-\epsilon$ reduces to $|\Lambda|> (1-\epsilon)d^{m}$.
As a consequence
\begin{equation*}
\begin{split}
& D^{\epsilon}_{0}\boldsymbol{(}q^{m}\Vert G(h^{m})\boldsymbol{)} =   -\frac{\beta}{\ln(2)} F(h^{m})\\
&\quad\quad\quad\quad\quad  -\log_{2}\min_{ |\Lambda|> (1-\epsilon)d^{m}}\sum  e^{-\beta h^{m}_{l_1,\ldots,l_m} }.
\end{split}
\end{equation*}
Moreover, if we sort the energy levels in a non-decreasing order $h^{m}_1 \leq h^{m}_2\leq \cdots$, the above minimum is obtained if we remove sufficiently many of the levels of lowest energy. More precisely, 
\begin{equation}
\label{abvjvsb}
\begin{split}
D^{\epsilon}_{0}\boldsymbol{(}q^{m}\Vert G(h^{m})\boldsymbol{)}  = &  -\frac{\beta}{\ln(2)}F(h^{m})\\
& -\log_{2}\sum_{s=S+1}^{d^{m}}  e^{-\beta h^{m}_s},
\end{split}
\end{equation}
where $S$ is the largest integer such that $\epsilon d^{m} > S$.

To simplify the calculations we will in the following make an assumption on the family of energy levels $\{h^m_n\}_{n=1}^{d^m}$ and how they depend on $m$. Namely, for a sufficiently well behaved functions $g$ (we will in fact only use $x$, $x^2$, and $e^{-\beta x}$) we assume that 
\begin{equation}
\label{Assumption}
\frac{1}{d^m}\sum_{n=1}^{d^m}g(h^m_n) \sim \int_{-\infty}^{+\infty} g(x)f^{(m)}(x)dx,
\end{equation} 
where $f^{(m)}(x)\geq 0$ and $\int_{-\infty}^{+\infty}f^{(m)}(x)dx = 1$. (Or similarly, with modified limits of the summation and integration.)
In other words, we assume that in the limit of large $m$ the collection of energy levels can be replaced with a spectral density function. 

Applying this to Eq.~(\ref{mdglfnk1}) yields
\begin{equation}
\label{ueirbvaiie}
\mathcal{A}(q^{m},h^{m})  \sim    kT\ln\int_{-\infty}^{+\infty} e^{-\beta x} f^{(m)}(x)dx,
\end{equation} 
and similarly for Eq.~(\ref{mdglfnk1})
\begin{equation}
\sigma^2(W^{m}_{\textrm{yield}}) \sim   \int_{-\infty}^{+\infty} x^2 f^{(m)}(x)dx.
\end{equation} 
Next we consider the $\epsilon$-deterministic work content. 
If we let $F^{(m)}(x) = \int_{-\infty}^{x}f^{(m)}(x)dx$ denote the cumulative distribution function, then the condition $\epsilon d^{m} > S$ can be reformulated as $\epsilon = F^{(m)}(S)$. Assuming $F^{(m)}$ to be invertible we thus find  $S = {F^{(m)}}^{-1}(\epsilon)$. Hence, Eq.~(\ref{abvjvsb}) reduces to 
\begin{equation*}
\begin{split}
&  D^{\epsilon}_{0}\boldsymbol{(}q^{m}\Vert G(h^{m})\boldsymbol{)} \\
&\quad \sim   \frac{1}{\ln(2)}\ln\int_{-\infty}^{\infty} e^{-\beta x} f^{(m)}(x)dx \\
&\quad\quad  -\frac{1}{\ln(2)}\ln \int_{{F^{(m)}}^{-1}(\epsilon)}^{+\infty} e^{-\beta x} f^{(m)}(x)dx.
\end{split}
\end{equation*}

\subsubsection{The flat distribution}
As a simple example we consider the case of a flat distribution of the energy levels.
For  $a(m)>0$ we define
\begin{equation}
f^{(m)}(x) := \left\{ \begin{matrix} \frac{1}{2a(m)}, & |x|\leq a(m),\\
0, & |x|> a(m).
\end{matrix}\right.
\end{equation}
With this choice in Eq.~(\ref{ueirbvaiie}) we find
\begin{equation}
\mathcal{A}(q^{m},h^{m}) \sim a(m)
\end{equation}
and
\begin{equation}
\sigma(W^{m}_{\textrm{yield}}) \sim  \frac{1}{\sqrt{3}}a(m).
\end{equation}
Furthermore, since ${F^{(m)}}^{-1}(\epsilon) = (2\epsilon-1)a(m)$ it follows that 
\begin{equation*}
\begin{split}
D^{\epsilon}_0\boldsymbol{(}q^{m}\Vert G(h^{m})\boldsymbol{)} \sim  & 2\epsilon a(m) + \frac{1}{\beta}\ln \left(1 -e^{-2\beta a(m)} \right) \\
&  -\frac{1}{\beta}\ln \left(1-e^{-2\beta (1-\epsilon)a(m)}\right).
\end{split}
\end{equation*}

For $\epsilon \leq 1-1/\sqrt{2}$ we can now use the fact that $0\leq \mathcal{A}^{\epsilon}(q^{m},h^{m}) -kT\ln(2)D^{\epsilon}_0\boldsymbol{(}q^{m}\Vert G(h^{m})\boldsymbol{)} \leq -kT\ln(1-\epsilon)$ which yields
\begin{equation*}
\begin{split}
\mathcal{A}^{\epsilon}(q^{m},h^{m}) \sim 2\epsilon a(m).
\end{split}
\end{equation*}

\subsubsection{Wigner distribution}

Let $R(m)>0$ and define
\begin{equation*}
f^{(m)}(x) := \left\{\begin{matrix}
\frac{2}{\pi R(m)^2}\sqrt{R(m)^2 -x^{2}}, & |x|\leq R(m),\\
0, & |x|> R(m).
\end{matrix}
 \right.
\end{equation*}
By, e.g., Eq.~9 in Sect. 4.3.3.1 of \cite{HandbIntegr}, it follows that the above semi-circle law
 has the variance $\int_{-\infty}^{+\infty}x^2f^{(m)}(x)dx = R(m)^2/4$. Hence
\begin{equation}
\sigma(W^{m}_{\textrm{yield}}) \sim \frac{1}{2}R(m).
\end{equation}
Furthermore
\begin{equation*}
\begin{split}
\mathcal{A}(q^{m},h^{m})   \sim & \frac{1}{\beta}\ln I_1(m), \\
I_1(m) = & \frac{2}{\pi}\int_{-1}^{1}e^{-\beta R(m)y}\sqrt{1 -y^{2}}dy.
\end{split}
\end{equation*}
We determine upper and lower bounds to this integral, which gives us the asymptotic behavior.
On the interval $[-1,1]$ it is the case that $e^{-\beta R(m)y} \leq e^{\beta R(m)}$, which yields the upper bound $I_1(m)\leq e^{\beta R(m)}$.
Next, we determine a lower bound. To this end we we define the function
\begin{equation}
g(y) := \left\{ \begin{matrix}
y+1, & -1 \leq y\leq 0, \\
0, & \textrm{otherwise}.
\end{matrix}\right.
\end{equation}
We have $\sqrt{1 -y^{2}} \geq g(y)$, for $|y|\leq 1$. 
Hence
\begin{equation}
I_1(m) \geq \frac{2}{\pi}\frac{e^{\beta R(m)}}{\beta^2 R(m)^2}\Big(1  - [1+\beta R(m)] e^{-\beta R(m)}
      \Big).
\end{equation}
Combining the upper and lower bound we thus find 
\begin{equation}
\mathcal{A}(q^{m},h^{m})\sim R(m).
\end{equation}

Similarly as for  $\mathcal{A}$ we determine the asymptotic behavior of  $\mathcal{A}^{\epsilon}$ via upper and lower bounds. 
With $F(x) := \frac{2}{\pi}\int_{-1}^{x}\sqrt{1 -y^{2}}dy$, it follows that $F^{(m)}(z) = F\boldsymbol{(}z/R(m)\boldsymbol{)}$, and thus ${F^{(m)}}^{-1}(\epsilon) = R(m)F^{-1}(\epsilon)$.
We let 
\begin{equation*}
\begin{split}
 \mathcal{A}^{\epsilon}(q^m,h^m) \sim & \frac{1}{\beta}\ln I_1(m) - \frac{1}{\beta}\ln I_2(m)\\
 I_2(m) := & \int_{{F^{(m)}}^{-1}(\epsilon)}^{R(m)}  f^{(m)}(x)e^{-\beta x}dx 
\end{split}
\end{equation*}
and find $I_2(m) \leq  (1-  \epsilon)e^{-\beta R(m)F^{-1}(\epsilon)}$.

We again use the function $g$ to get a lower bound. Note that we here assume $\epsilon < 1/2$.
\begin{equation*}
\begin{split}
I_2(m) \geq  &  \frac{2}{\pi}\frac{e^{-\beta  R(m)F^{-1}(\epsilon) }}{\beta R(m)} \Big[1  -   F^{-1}(\epsilon)  -  e^{\beta R(m)F^{-1}(\epsilon)} \\
& + \frac{1}{\beta R(m)}e^{\beta R(m)F^{-1}(\epsilon)} -  \frac{1}{\beta R(m)}\Big].
\end{split}
\end{equation*}
We can conclude that 
\begin{equation}
\mathcal{A}^{\epsilon}(q^{m},h^{m})\sim c(\epsilon)R(m).
\end{equation}
where $c(\epsilon) := 1+F^{-1}(\epsilon)$.
Note that for sufficiently small $\epsilon$ we have 
\begin{equation}
\left(\frac{3\pi}{4\sqrt{2}}\epsilon \right)^{2/3} \leq c(\epsilon) \leq \left(\frac{3\pi}{4}\epsilon \right)^{2/3}.
\end{equation}


\section{\label{OtherCostFcns} Other cost functions, and the question of `self-erasure'}

In this investigation we have focussed on the expectation value and the $(\epsilon,\delta)$-deterministic value as cost functions. Clearly one could imagine other constructions.
Here we briefly point out one particular alternative, related to \cite{Dahlsten}. 

Given a real-valued random variable $X$, we can ask for the smallest upper bound to $X$ that is violated with probability smaller than $\epsilon$;  a very likely upper bound, so to speak.  More precisely
\begin{equation}
\textrm{Max}^{\epsilon}(X):=\inf\{x:P(X\leq x)>1-\epsilon\}.
\end{equation}
It is straightforward to see that 
\begin{equation}
\label{ineqDeltaMax}
\delta + \inf\!\Delta^{\epsilon}_{\delta}(X) \geq \textrm{Max}^{\epsilon}(X).
\end{equation}
Analogous to $\mathcal{C}^{\textrm{extr}}(q,h^{i},h^{f})$ and $\mathcal{C}^{\textrm{extr}}_{\epsilon}(q,h^i,h^f)$ we can use $\textrm{Max}^{\epsilon}$ as a cost function to define
\begin{equation*}
\mathcal{M}^{\textrm{extr}}_{\epsilon}(q,h^{i},h^{f}) := \inf_{\mathcal{P}\in\mathscr{P}(h^{i},h^{f})}\textrm{Max}^{\epsilon}\boldsymbol{(}W(\mathcal{P},\mathcal{N})\boldsymbol{)}.
\end{equation*}
This quantity is closely related to the approach in \cite{Dahlsten}.
In words, $\mathcal{M}^{\textrm{extr}}_{\epsilon}(q,h^{i},h^{f})$ is a kind of threshold quantity that asks for the minimal bound on the work cost that is violated with a probability at most $\epsilon$. Hence, we allow the work cost to be smaller than the value $\mathcal{M}^{\textrm{extr}}_{\epsilon}(q,h^{i},h^{f})$. This is to be compared with the $\epsilon$-deterministic value, where we require the extracted work to essentially precisely take the value $\mathcal{C}^{\textrm{extr}}_{\epsilon}(q,h^i,h^f)$, i.e., we neither allow larger nor smaller costs. Due to the threshold nature of $\mathcal{M}^{\textrm{extr}}_{\epsilon}(q,h^{i},h^{f})$  it is not clear how noisy an extraction process that yields the optimal value $\mathcal{M}^{\textrm{extr}}_{\epsilon}(q,h^{i},h^{f})$ could be. (Analogous to what we did in Sec.~\ref{fluctuations} for the optimal expected work extraction, we could ask for the noise properties of processes that achieve $\mathcal{M}^{\textrm{extr}}_{\epsilon}(q,h^{i},h^{f})$.) The problem is that the distribution of the work cost potentially could have a tail extending far below the threshold, which corresponds to some degree of unpredictability of this energy source. 

By Eq.~(\ref{ineqDeltaMax}) we can conclude that 
\begin{equation}
\mathcal{C}^{\textrm{extr}}_{\epsilon}(q,h^i,h^f) \geq \mathcal{M}^{\textrm{extr}}_{\epsilon}(q,h^{i},h^{f}).
\end{equation}
One could speculate whether $\mathcal{M}^{\textrm{extr}}_{\epsilon}(q,h^{i},h^{f})$ has almost the same value as $\mathcal{C}^{\textrm{extr}}_{\epsilon}(q,h^i,h^f)$ in general  thermodynamic models. (Whether they have the same value in the particular model we employ here is not clear.) The intuition is that we could `waste' energy in order to make the work cost variable more concentrated.
More precisely, imagine that we have found a process $\mathcal{P}$ such that $P\boldsymbol{(}W(\mathcal{P},\mathcal{N}) \leq w\boldsymbol{)}>1-\epsilon$, for some $w$.  Conditioned on the case that $W< w$, i.e., that the work cost is \emph{less} than $w$, one could imagine to make an additional dissipation of size $w-W$.  By this, all the weight of the probability distribution below $w$ is `shuffled' into a single peak at $w$. This would make $w$ into an $\epsilon$-deterministic value. However, this `wasting of energy' is conditional, i.e., the choice of process depends on the actual value of $W$ (as opposed to the processes generally considered in this investigation, which do not depend on the values of the random work variables per se). The analysis of such conditional processes would reasonably entail an explicit modeling of the control mechanisms that implements these conditional processes (regarding the conditional process as an unconditional process on a larger system). In view of Landauer's principle, we should not expect the resetting of this control mechanism to come for free. The question is, what does it cost us to `waste' energy? Another way to phrase the very same question would be to focus on the system that carries the extracted energy (which we do not model explicitly in this investigation). The `shuffling' of the work value described above, corresponds to a many-to-one map on the states of the energy reservoir, which again is an erasure process.
Loosely speaking, the question whether $\mathcal{M}^{\textrm{extr}}_{\epsilon}(q,h^{i},h^{f})$  coincides with $\mathcal{C}^{\textrm{extr}}_{\epsilon}(q,h^i,h^f)$ or not, can  be rephrased as the question whether the removal of the excess energy matches the cost of its own erasure. There are certainly many cases where systems appear to have this `self-erasure' property.  When the typical energy scale is much larger than $kT$, one can imagine scenarios when the system on a short time-scale more or less perfectly relaxes to some meta-stable state, thus implementing the above mentioned concentration of the distribution. The question is what happens in the general case. Since the model we use in this investigation does not include an explicit energy reservoir, we leave this as an open question.

\end{appendix}


\end{document}